\documentclass[journal,10pt,letterpaper]{IEEEtran}
\usepackage[T1]{fontenc}
\usepackage[utf8]{inputenc}
\usepackage{cite}
\usepackage{amsmath,amssymb,amsthm,booktabs,array}
\usepackage{balance}
\usepackage[hidelinks,pdftitle={Amplified Pruning and Path Selectors for Deterministic Radio Gossiping},pdfauthor={Qin Xin}]{hyperref}
\usepackage{microtype}
\newtheorem{lemma}{Lemma}
\newtheorem{theorem}[lemma]{Theorem}
\newtheorem{corollary}[lemma]{Corollary}
\newtheorem{proposition}[lemma]{Proposition}

\newcommand{\Oh}{\mathcal O}
\newcommand{\Ot}{\widetilde{\mathcal O}}
\newcommand{\indeg}{\deg^{-}}
\newcommand{\NB}{\operatorname{NB}^{-}}
\newcommand{\Select}{\textup{\textsc{Select}}}
\newcommand{\MaxSelect}{\textup{\textsc{MaxSelect}}}
\newcommand{\Drain}{\textup{\textsc{Drain}}}
\newcommand{\Prune}{\textup{\textsc{Prune}}}
\newcommand{\Clear}{\textup{\textsc{Clear}}}
\newcommand{\CPrune}{\textup{\textsc{ClearPrune}}}
\newcommand{\Gossip}{\textup{\textsc{Gossip}}}
\newcommand{\dbar}{\overline d}
\title{Amplified Pruning and Path Selectors\\for Deterministic Radio Gossiping}
\author{Qin Xin\thanks{Qin Xin is with the Faculty of Science and Technology, University of the Faroe Islands, Vestara Bryggja 15, FO-100 T\'orshavn, Faroe Islands (e-mail: \href{mailto:qinx@setur.fo}{qinx@setur.fo}).}}
\IEEEspecialpapernotice{\normalfont\normalsize Revised October 27, 2026}
\begin{document}
\pagestyle{plain}
\maketitle
\thispagestyle{plain}
\begin{abstract}
We study deterministic gossiping in unknown strongly connected directed radio networks with $n$ vertices, $m$ arcs, and distinct identifiers in $[N]$, where $N=n^{\mathcal O(1)}$. Messages are unrestricted and collision detection is unavailable. Our neighborhood-clearing procedure reuses an elected receiver to secure its entire closed in-neighborhood in $\mathcal O(n\log^2n\log\log n)$ rounds. Charging each distinct closed neighborhood once yields $\mathcal O(m\log^2n\log\log n)$ rounds under an explicit degree-concentration condition, for every polynomial identifier range. A complementary adaptive-pruning procedure combines quadratic contraction with completion sensitive to the number of unsecured sources. If $a\ge1$ sources remain at completion entry, permutation selectors reduce the balanced completion budget by a factor of $\log^{2/3}(a+1)$ relative to interleaved path selectors. A geometric execution attains $\widetilde{\mathcal O}(k^2+n\tau+\min\{u^2,(nu)^{2/3}\})$ for a path cover of scale $k$, weight $\tau$, and $u$ exceptions. For directed graphs with common in- and out-degree $\Delta$, the portfolio achieves $\widetilde{\mathcal O}(\min\{n\sqrt\Delta,n^2/\Delta\})$. Every graph also admits $\widetilde{\mathcal O}(\min\{n^{4/3},N+(mn)^{3/5}\})$ rounds. Only $n,N$ are inputs; degrees, covers, and structural constants are used solely in the analysis. The unrestricted polynomial-identifier density extension remains open.
\end{abstract}
\begin{IEEEkeywords}
Deterministic gossiping, directed radio networks, neighborhood clearing, path selectors, selective families.
\end{IEEEkeywords}

\section{Introduction}
Radio-network gossiping disseminates every vertex's original rumor to every vertex. The network has vertex set $V$, $n=|V|$ vertices, $m$ arcs, and identifiers in $[N]=\{1,\ldots,N\}$. Reception succeeds when exactly one in-neighbor transmits and the receiver listens; collisions and silence are indistinguishable. The topology is unknown.

We use $\Ot$ to suppress polylogarithmic factors in $n$. G\k{a}sieniec, Radzik, and Xin~\cite{grx} established the general $\Ot(n^{4/3})$ bound for polynomial identifiers; Kuschner et al.~\cite{kuschner} subsequently refined the selector-based approach. For linear identifiers, Wu and Chrobak~\cite{wu} obtain $\Ot((mn)^{3/5})$. Their questions about polynomial identifiers and average degree motivate our regularity, concentration, and density results. Table~\ref{tab:bounds} compares the bounds and assumptions; Section~\ref{sec:openquestions} states which questions they resolve.

\subsection{Why the identifier range matters}
Round robin assigns one slot to each possible identifier, taking $N$ slots even when only $n$ identifiers are present. A polynomial range can therefore make this step dominant. Identifiers reveal neither adjacency nor rank among present vertices; assigning consistent ranks would itself require communication.

A replacement for round robin must preserve the guarantee each argument uses. Reception counts certify interference; condensing also requires packet identities or a prescribed packet crossing a frontier. We establish the appropriate guarantees and keep the exact-delivery cost explicit. Control runs on the full physical network, so reducing the data sender set requires no connectivity assumption on the remaining data graph.

Selective schedules use the actual identifiers in $[N]$: capacity bounds the number of contenders, independently of gaps between their labels. The selective components pay logarithmically for the identifier universe; the exact-frontier term remains explicit in the condensing bound. Permutation selectors add an ordering guarantee~\cite{grx,kuschner}, allowing successful transmissions to carry knowledge along a path during one invocation. Our analysis accounts for interference through the path's closed carrier neighborhood. Common finite schedules require no topology discovery; local preprocessing time is unrestricted.

\subsection{Structural condition and results}
Write $L=\log n$ and $\ell=\log\log n$, and let $\indeg(v)$ denote the original in-degree. The in-degree tail and the average degree are
\begin{equation}\label{eq:tail}
 h_k=|\{v\in V:\indeg(v)\ge k\}|,
 \qquad \dbar=m/n.
\end{equation}
The degree-concentration specialization assumes
\begin{equation}\label{eq:concentration}
 h_{\lceil C_0\dbar\rceil}\le C_1\dbar,
 \qquad C_0,C_1\ge1\text{ fixed}.
\end{equation}
The condition bounds the number of receivers with in-degree at least a constant multiple of the average by a constant times that average. Individual degrees may reach $n-1$. An edge-count bound alone need not imply this concentration.

Neighborhood clearing elects a receiver once and reuses it as a known broadcast source. Each probe is followed by a broadcast of the receiver's saved inventory and an update of the data sender set. Repetition secures the entire closed in-neighborhood in $\Oh(nL^2\ell)$ rounds, so each distinct closed neighborhood can justify at most one qualifying phase. Theorem~\ref{thm:clearadaptive} attains the resulting profile bound without a threshold input, and Corollary~\ref{cor:edge} gives $\Oh(mL^2\ell)$ under~\eqref{eq:concentration}.

A complementary pruning procedure increases probe accuracy as its contender cap decreases. Two qualifying selections of the same receiver or closed-neighborhood class give a quadratic contraction, limiting that receiver or class to $\Oh(\ell)$ qualifications. Lemma~\ref{lem:charge} establishes this charge and supports the survivor and reception-cover results. These results use the ledger produced by that procedure; the clearing bound uses its own execution.

\begin{table*}[t]
\centering\small
\renewcommand{\arraystretch}{1.08}
\caption{Deterministic gossiping bounds. New protocols need only $n,N$. Set $L=\log n$, $\ell=\log\log n$; $\widehat G_D$ is defined in~\eqref{eq:clearprofile}. Condition $\mathsf S$ means $m=\Theta(n)$ with some $D=\Ot(n^{3/5})$ satisfying $h_D=\Ot(n^{1/5})$.}\label{tab:bounds}
\begin{tabular*}{\textwidth}{@{\extracolsep{\fill}}llll@{}}
\toprule
Result & Assumption or extra input & Identifier range & Communication rounds\\
\midrule
G\k{a}sieniec--Radzik--Xin~\cite[Thm.~2]{grx} & None & $N=n^{\Oh(1)}$ & $\Oh(n^{4/3}L^{10/3})$\\
Kuschner et al.~\cite[Sec.~3]{kuschner}$^{\dagger}$ & None & $N=n^{\Oh(1)}$ & $\Oh(n^{4/3}L^2\ell^{2/3})$\\
Wu--Chrobak~\cite{wu} & Edge-count upper bound & $N=\Oh(n)$ & $\Ot((mn)^{3/5})$\\
Wu--Chrobak~\cite{wu}, regular & Directed $\Delta$-regular & $N=\Oh(n)$ & $\Ot(n\sqrt\Delta)$\\
\midrule
Corollary~\ref{cor:uniformspan}, general & None & $N=n^{\Oh(1)}$ & $\Ot(\min\{n^{4/3},N+(mn)^{3/5}\})$\\
Corollary~\ref{cor:uniformspan}, uniform & None & $N=\Ot(n^{6/5})$ & $\Ot(\min\{n^{4/3},(mn)^{3/5}\})$\\
Corollary~\ref{cor:edge} & Condition~\eqref{eq:concentration} & $N=n^{\Oh(1)}$ & $\Oh(mL^2\ell)$\\
Theorem~\ref{thm:profile}, earlier bound & Condition~\eqref{eq:concentration} & $N=n^{\Oh(1)}$ & $\Oh(mL^2\ell^2)$\\
Theorem~\ref{thm:clearadaptive} & Closed-neighborhood profile $t_D$ & $N=n^{\Oh(1)}$ & $\Oh(\min_D\widehat G_D)$\\
Theorem~\ref{thm:density}, sparse & $\mathsf S$ & $N=n^{\Oh(1)}$ & $\Ot(n^{6/5})$\\
Corollary~\ref{cor:polytail}, $p=5$ & $h_k=\Oh(n/k^5)$ for all $k$ & $N=n^{\Oh(1)}$ & $\Oh(n^{7/6}L^{19/12}\ell^{7/12})$\\
Theorem~\ref{thm:pathadaptive} & Survivor profile $a_D$ & $N=n^{\Oh(1)}$ & $\Ot(\min_D[n(1+\rho_D)+W_D])$\\
Theorem~\ref{thm:pathcover} & Path cover $(k,\tau,u)$ & $N=n^{\Oh(1)}$ & $\Ot(k^2+n\tau+\min\{u^2,(nu)^{2/3}\})$\\
Theorem~\ref{thm:regular} & Directed $\Delta$-regular & $N=n^{\Oh(1)}$ & $\Ot(\min\{n\sqrt\Delta,n^2/\Delta\})$\\
Corollary~\ref{cor:cover} & Receiver misses $\Oh(\sqrt{nL}\,\ell)$ others & $N=n^{\Oh(1)}$ & $\Oh(nL^2\ell^2)$\\
Corollary~\ref{cor:parameter}, dense & Complete core; $m=\Theta(n^{7/5})$ & $N=n^{\Oh(1)}$ & $\Oh(nL^2\ell)$, lower bound $\Omega(n)$\\
\bottomrule
\end{tabular*}
\par\smallskip
\begin{minipage}{\textwidth}\footnotesize
$^{\dagger}$The accessible preprint of~\cite{kuschner} prints $\Oh(n^{4/3}L\ell^{2/3})$ in Theorem~2, but its Section~3 derives $\Oh(n^{4/3}L^2\ell^{2/3})$. We report the latter, with all logarithmic factors explicit. The bound from~\cite{grx} is its Theorem~2 bound, sharper than the abstract's $\Oh(n^{4/3}L^4)$.
\end{minipage}
\end{table*}

Completion also exploits the $a$ sources outside the secured ledger. Theorem~\ref{thm:pathadaptive} adapts to this population without a threshold search; pruning and collection share disjoint source charges. Permutation selectors~\cite{kuschner} reduce the balanced completion budget by $\log^{2/3}(a+1)$ relative to interleaved path selectors.

The stopping rule compares elapsed pruning time with every reference-prefix completion budget. A reception cover of weight $\tau$ and $u$ exceptions after $j$ broadcasts gives $\Ot(n(j+\tau)+(nu)^{2/3})$ rounds; successive covers compose on the same execution. A receiver missing $u$ others gives $\Oh(nL^2\ell^2+(u+1)^2L)$. Clearing gives $\Oh(nL^2\ell)$ on distributed-cover and complete-core families, including $m=\Theta(n^{7/5})$, whose distance lower bounds match the linear polynomial exponent.

For an original path cover of scale $k$, weight $\tau$, and $u$ exceptions, one geometric execution attains $\Ot(k^2+n\tau+\min\{u^2,(nu)^{2/3}\})$ (Theorem~\ref{thm:pathcover}). Fixed residual carriers permit composition with pruning; the minimum retains direct completion when few exceptions remain. Condensing separately gives $\Ot(\min\{n^{4/3},N+(mn)^{3/5}\})$ on every graph. Covers and graph parameters remain analytical witnesses, not inputs.

\subsection{Open questions addressed by our approach}\label{sec:openquestions}
Wu and Chrobak~\cite[Sec.~7]{wu} ask whether polynomial identifiers can preserve a bound below $\Ot(n^{4/3})$, and whether the average in-degree can support nearly edge-linear gossiping. Our results answer these questions affirmatively in the following precisely stated settings.

\emph{Polynomial identifiers.} Theorem~\ref{thm:regular} extends~\cite[Theorem~2]{wu} to every $N=n^{\Oh(1)}$, giving $\Ot(\min\{n\sqrt\Delta,n^2/\Delta\})$ on directed $\Delta$-regular graphs without knowing $\Delta$. The second term improves the earlier budget and general exponent when $\Delta=n^{\gamma+o(1)}$, $\gamma>2/3$. For arbitrary graphs, Corollary~\ref{cor:uniformspan} preserves the density benchmark throughout $N=\Ot(n^{6/5})$; Theorem~\ref{thm:density} supplies sufficient residual conditions for larger polynomial ranges.

\emph{The average-degree target.} Under~\eqref{eq:concentration}, Corollary~\ref{cor:edge} achieves the $\Ot(n\dbar)=\Ot(m)$ target of~\cite[Sec.~7]{wu}, with explicit bound $\Oh(mL^2\ell)$ for polynomial identifiers. Thus $m=\Theta(n^{1+\varepsilon})$, $0\le\varepsilon<1/3$, gives $\Ot(n^{1+\varepsilon})$, below the general exponent. Section~\ref{sec:structural} realizes these scales with linear diameter and maximum in-degree. Reception covers give nearly linear communication on further families outside concentration.

The regular-graph identifier extension and the degree-concentrated average-degree target are thus obtained by protocols using only $n,N$. The unrestricted density extension for every polynomial identifier range, and nearly edge-linear gossiping on all sparse graphs, remain open.

\subsection{Relation to previous techniques}
Path selection and neighborhood doubling originate in~\cite{grx}; shorter permutation selectors are from~\cite{kuschner}. Quasi-dissemination and replay follow~\cite{cgr,grx}, selective schedules use~\cite{selectors}, and broadcast control uses~\cite{blind}. Condensing and regular expansion are from~\cite{wu}. Our refinements are complete neighborhood clearing through reuse of an elected broadcast source, adaptive accuracy, the two-step contraction, shared closed-neighborhood charges, and their composition with surviving-source completion and cover-based gains.

\subsection{Organization and proof dependencies}
We first develop reception certificates, adaptive pruning, bounded-degree completion, and replay. Section~\ref{sec:clearing} uses these primitives to prove the improved clearing profile. Sections~\ref{sec:path}--\ref{sec:pathcover} develop survivor-sensitive completion and path covers. Section~\ref{sec:cover} introduces reception covers, with the nonregular quantile construction in Section~\ref{sec:nonregularquantile} and exact two-parameter accounting in Section~\ref{sec:familyaccounting}. The independent condensing analysis in Section~\ref{sec:transfer} yields density transfer. Section~\ref{sec:structural} develops structural consequences and compares the earlier concentration refinements in Section~\ref{sec:legacyconcentration}. Appendix~\ref{app:protocol} records implementation and budget details. The dependencies are acyclic: fixed-threshold completion uses established primitives and never invokes an adaptive controller.

The portfolio interleaves complete protocols with private inventories, ledgers, clocks, and recordings, accepting the first certified completion. Each charge and residual witness belongs to the execution that produced it. These separate states make the minimum of the complete running-time bounds executable.

Table~\ref{tab:state} summarizes the common state. Degrees, covers, and reference prefixes remain analytical witnesses.

\section{Model and Control}
Let $G=(V,E)$ be a simple, loop-free, strongly connected directed graph, with $n\ge2$ and $m$ arcs. Distinct identifiers lie in $[N]=\{1,\ldots,N\}$, where $N=n^{\Oh(1)}$. Common $n,N$ are known; topology, degrees, present identifiers, and $m$ are unknown. Each round a vertex transmits or listens. A listener receives a packet exactly when one in-neighbor transmits, and silence otherwise. Messages and local computation are unrestricted. Write $\NB(v)=\{u:(u,v)\in E\}$, $\indeg(v)=|\NB(v)|$, and $\Delta=\max_v\indeg(v)$. Put $L=\log n$, $\ell=\log\log n$, and $\Lambda=\lceil\log N\rceil=\Theta(L)$. Logarithms are base two; asymptotic formulas use $n\ge4$. Smaller networks use $n-1$ strong-selector periods.

Packets identify their immediate sender; rumors carry source identifiers. Let $K_v\subseteq V$ index the genuine rumors known at $v$, initially $K_v=\{v\}$. Every transmitted packet appends all current rumor values and source tags. A common secured set $F$ contains sources disseminated by designated completed broadcasts. Hence
\begin{equation}\label{eq:inv}
 \begin{gathered}
 F\subseteq K_v\quad\text{for every }v,\qquad A=V\setminus F,\\
 d_A(v)=|\NB(v)\cap A|.
 \end{gathered}
\end{equation}
Write $a=|A|=n-|F|$ for the active source count. The closed active contender set at $v$ is $A\cap(\NB(v)\cup\{v\})$. In profile data slots, only $A$ transmits; dormant vertices still listen, relay control, and remain eligible inventory holders. Knowledge grows on reception, but $F$ changes only at common successful-broadcast boundaries. Interleaved simulations keep private inventories and control state.

\subsection{Knowledge, ledgers, and namespaces}
$K_v$ contains genuine rumor values with source tags; header identifiers alone do not count. Each interleaved simulation keeps a separate monotone inventory and may ignore packets received outside the simulation's assigned slots.

\begin{table}[t]
\centering\small
\caption{State and budget notation. All ledgers and caps below are common within their own simulation.}\label{tab:state}
\renewcommand{\arraystretch}{1.10}
\begin{tabular}{@{}>{\raggedright\arraybackslash}p{0.22\columnwidth}>{\raggedright\arraybackslash}p{0.73\columnwidth}@{}}
\toprule
Symbol & Meaning and role\\
\midrule
$K_v$, $H_v$ & Current genuine inventory; frozen inventory used as a witness in one selection.\\
$F$, $A$, $a$ & Secured-source ledger; the ledger complement; the size of that complement.\\
$e$, $\kappa$ & Pruning cap with $d_A(v)<2e$; closed-contender bound for a probe.\\
$\rho_D$, $\chi_D$ & Charges for reception-score pruning and complete neighborhood clearing, respectively.\\
$Q$, $q(\kappa)$ & Baseline amplification; adaptive amplification at the current contender bound.\\
$R$, $B$ & One frozen Boolean-test budget; a successful selection and broadcast budget.\\
$C$ & Fixed data carriers throughout a path-cover execution, even as $F$ grows.\\
$F_0$, $\mathcal L$ & Frozen original-source ledger; changing packet-owner ledger in a condensing phase.\\
$M$, $H_\kappa$ & Upper bound on active arcs, padded to $n$; mixed scale $n+\sqrt{M\kappa}$.\\
\bottomrule
\end{tabular}
\end{table}

The ledger $F$ records only completed designated broadcasts and can therefore omit rumors that are already globally known. Ledger membership is common, even though actual global dissemination is not locally observable. The invariant $F\subseteq K_v$ suffices for pruning and completion.

For example, suppose a winner saves the inventory $\{v,x,y\}$ and receives a fourth rumor $z$ during selection control. Actual packets may carry all four values, but the completed designated broadcast credits only the saved three-source inventory. Every vertex therefore performs the same ledger update. Overestimating the remaining work is harmless for an upper bound. The example also separates two kinds of progress: hearing one sender increases a reception count by one, whereas disseminating the saved inventory may secure three original sources.

Pruning reduces data contention while retaining every physical relay. On the cycle $x\to y\to z\to x$, let $F=\{y\}$, so $A=\{x,z\}$. Only the arc from $z$ to $x$ remains in the active induced graph. Nevertheless, a data transmission from $x$ can reach dormant $y$, making every received rumor quasi-disseminated because $y$'s original rumor is globally known. Control may use all three vertices. Paths to dormant holders therefore remain useful even when the active induced graph is disconnected.

In a condensing phase, packet objects are additionally indexed by their phase and owner identifier. Packet $(s,u)$ contains the frozen original-rumor inventory of $u$ at phase entry. The packet selector works on owner identifiers within one fixed phase, so the identifier universe remains $[N]$. Earlier phase objects can be discarded after their values are incorporated into $K_v$. A phase ledger records globally broadcast packet objects and is distinct from the original-source ledger. In particular, broadcasting a snapshot of $u$ disseminates $u$'s own original rumor, even when the snapshot is later used only as a packet witness.

A packet is called quasi-disseminated if the packet is globally known or is held by a vertex whose own original rumor is globally known. Every original rumor inside such a packet is then quasi-disseminated. Our frozen-active-set proof uses this definition. A snapshot broadcast need only disseminate the values present when the snapshot was created; information subsequently received by the owner is handled through retention and exact replay.

\subsection{Selective schedules}
A schedule isolates $x$ from $X\subseteq[N]$ if one transmitting set intersects $X$ exactly in $\{x\}$. For $1\le k\le N$ and $0\le r<k$, schedules of length
\begin{equation}\label{eq:selector}
 \Oh\!\left(\frac{k^2\log N}{r+1}\right)
\end{equation}
leave at most $r$ members of every $|X|\le k$ unisolated~\cite{selectors}; $r=0$ gives a strong selector. Choose each identifier independently with probability $1/k$ per slot, for $k\ge2$. For any $k$-set $X$ and $(r+1)$-subset $Y\subseteq X$, one slot isolates a member of $Y$ with probability at least $(r+1)/(ek)$. A union bound over at most $N^k2^k$ pairs gives~\eqref{eq:selector}. Pad smaller sets to $k$; for $k=1$, one slot suffices. All needed finite schedules are fixed commonly by exhaustive search; polynomial local construction time is not asserted.

\subsubsection{Uniformity over small contender sets}
The padding argument for a partial selector must preserve the property actually required. Extend a contender set $X$ of size at most $k$ to a $k$-set. If an element of $X$ is isolated in the extended set, the same element is still isolated in $X$. Since at most $r$ elements of the extension are missed, at most $r$ elements of $X$ are missed. This implication remains valid when $|X|\le r$, because the conclusion then imposes no positive number of successes.

A weak selector has a different requirement: every nonempty contender set must have a successful sender. Padding an arbitrary weak selector is insufficient, because the weak selector's only success might belong to a padding identifier. We therefore concatenate partial selectors at dyadic scales. If $s$ contenders are present, the first covering scale $t$ has $t<2s$; leaving at most $\lfloor t/2\rfloor$ unisolated guarantees a genuine success. This distinction is used in the packet certificate.

All schedules are finite and common. We choose the lexicographically first family passing the finite isolation tests. The probabilistic argument establishes existence and communication length; our protocol uses no runtime randomness. A claim of polynomial local preprocessing would require an additional explicit-construction theorem and is not made here. Thus schedule construction depends only on public parameters. The control procedures below coordinate selections from changing local eligibility flags without discovering the topology or the present identifier set.

\subsection{Common tests and selection}
The generalized broadcast in~\cite[Theorem 23]{blind} permits an arbitrary nonempty initial source set and has running time $\Oh(n\log N\log\log n)$. Fix a common padded budget
\[
 R=\Theta(n\log N\log\log n).
\]
For a Boolean test, freeze flags at entry and reset private control state and clocks. Exactly the true-flag vertices initiate a common marker; others relay the marker only after reception. After $R$ rounds every vertex knows the OR: no initiators produce silence, whereas any initiator delivers the marker everywhere. Rumor knowledge cannot initiate a marker.

For a frozen eligibility predicate $P(v)$, binary search over eligible keys in $[M]$, $M\le(n+1)N$, selects the minimum or certifies emptiness. An initial OR, at most $\lceil\log_2M\rceil$ further tests, and a broadcast of the winner's frozen inventory $S_v$ cost
\begin{equation}\label{eq:B}
 B=R(\lceil\log_2((n+1)N)\rceil+2)
   =\Theta(n\log^2n\log\log n).
\end{equation}
Call this $\Select(P)$. Flags, keys, and inventories remain fixed throughout. A success updates the common set $F\leftarrow F\cup S_v$; an empty selection leaves $F$ unchanged.

For frozen integer scores $0\le b_v\le n$, $\MaxSelect(b_v)$ first finds the minimum frozen key $(n-b_v)N+\operatorname{id}(v)$ and announces the maximum score $b=\max_vb_v$. Our selection procedure then broadcasts the winner's saved inventory. The key search and the inventory broadcast are separate stages, together costing $\Oh(B)$. A thresholded call may stop after the search. The search always considers all vertices, independently of the threshold. This makes thresholded pruning a prefix of one common greedy execution. Keys are local; radio identifiers remain in $[N]$.

Frozen OR tests supply the common decisions unavailable from silent data slots. Searching key prefixes identifies a winner without knowing the present identifiers. Frozen scores and payloads fix that choice during control, and the final designated broadcast makes the saved snapshot and ledger update common. Later arguments use exactly these synchronization guarantees.

Completion can also be checked without learning the topology or the list of present identifiers. Each vertex freezes the flag $|K_v|<n$ and the protocol computes the flags' OR. A false answer certifies that every vertex holds all original values: inventories contain only genuine, distinctly tagged sources, and exactly $n$ sources exist. This test is used after complete replay when verifying parameter guesses. A true answer merely rejects that trial. Neither header identifiers nor a guessed edge budget can manufacture a false success, because only received genuine rumor values contribute to the inventory count.

\begin{lemma}[Frozen-inventory collection]\label{lem:drain}
Freeze $H_v=K_v$ at entry. For every integer $q\ge1$, a procedure $\Drain(q)$ terminates at a common boundary with $|H_v\setminus F|<q$ for all $v$, using at most $R+B\lfloor a/q\rfloor$ rounds if $a=n-|F|$ at entry.
\end{lemma}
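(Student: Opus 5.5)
We define $\Drain(q)$ as a loop of frozen selections: repeatedly call $\Select(P)$ with the predicate $P(v)\equiv |H_v\setminus F|\ge q$, where $F$ is the current common ledger. Each call freezes its flags at entry, and a successful call broadcasts the winner's frozen inventory $H_v$ (not $K_v$). The loop stops at the first empty selection.

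The first step is to make the predicate evaluable locally with common decisions. The set $F$ is common, because it changes only at designated broadcast boundaries. The frozen inventory $H_v$ is private but fixed. Each vertex can therefore compute its flag, and the OR test with budget $R$ decides emptiness at a common boundary. Termination occurs only when that OR returns false, which certifies $|H_v\setminus F|<q$ for every $v$. This is exactly the claimed postcondition.

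The second step is the round count. I would structure each iteration as one OR test of cost $R$ followed, on success, by the remaining $\lceil\log_2M\rceil$ tests and the inventory broadcast. To get the additive form $R+B\lfloor a/q\rfloor$, I charge the initial OR of iteration $i+1$ to iteration $i$'s budget $B$. Equation~\eqref{eq:B} includes two spare $R$ slots (the ``$+2$''), namely one OR and one broadcast. Accounting carefully, a successful call costs at most $B$ including the next iteration's emptiness test, and the very first OR costs the extra $R$. This bookkeeping is fiddly: I must check that $B$ as defined covers the OR, the binary search, the broadcast, and the following test. If it does not, I would fold the next OR into the slack by noting that the first test of a successful $\Select$ is itself the emptiness test.

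The third step, and the main point, bounds the number of successes. A success with winner $v$ secures $S_v=H_v$, and $H_v\setminus F$ has at least $q$ elements at freeze time. Since $F$ is monotone, every success adds at least $q$ new sources to $F$, so $a$ drops by at least $q$ per success. There are therefore at most $\lfloor a/q\rfloor$ successes. The invariant $F\subseteq K_u$ is preserved because the broadcast delivers the frozen values to everyone. These values are genuine rumors from $H_v\subseteq K_v$, so the ledger update is legitimate. I expect the accounting of the OR test against $B$ to be the only real obstacle; the counting argument is immediate.
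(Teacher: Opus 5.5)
Your proposal is correct and follows the paper's proof: repeat $\Select(|H_v\setminus F|\ge q)$ with fresh flags and stop at the first empty call, whose frozen flags remain valid because $F$ and $H_v$ do not change during it. Successes are at most $\lfloor a/q\rfloor$ because each secures at least $q$ new sources.

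The bookkeeping you worried about needs no charge-shifting. By~\eqref{eq:B}, a successful call already costs at most $B$ including its own initial OR, and the final empty call ends after its $R$-round OR, giving $R+B\lfloor a/q\rfloor$ directly. Crediting $H_v$ instead of the call-entry snapshot, which contains $H_v$, is equally valid.
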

\begin{proof}
Repeat $\Select(|H_v\setminus F|\ge q)$ with fresh flags and fixed $H_v$. The winner's actual entry snapshot contains $H_v$, adding at least $q$ new sources to $F$; hence at most $\lfloor a/q\rfloor$ calls succeed. Stop on an empty call. Since $F,H_v$ stay fixed during that call, the saved inequalities remain false at the end. Each success costs at most $B$; the final empty call ends after the call's initial $R$-round OR test. The outcome and ending boundary are common.
\end{proof}

Reception scores count immediate senders; inventory gains count newly secured sources. A packet may contain many rumors, but their origins do not become additional senders. Pruning uses the former to certify interference, and collection uses the latter to pay for broadcasts. Frozen scores and snapshots preserve both guarantees during control.

\section{Reception-Certified Interference Reduction}\label{sec:reception}
Pruning estimates contention and disseminates a useful inventory. Lemma~\ref{lem:probe} certifies the number of distinct active in-neighbors heard. For condensing, Lemma~\ref{lem:union} either delivers the neighboring entry-packet union or supplies enough packets for collection. Both schedules freeze the active set and decision state during an invocation; payloads may append further genuine knowledge.

A completed inventory broadcast turns reception into a reduction of interference: the credited original sources leave the data sender set. The global analysis therefore charges broadcasts through the sources they secure and through repeated qualifications of receivers or neighborhood classes.

\begin{lemma}[Amplified reception certificate]\label{lem:probe}
For a frozen active set and a common upper bound $\kappa\ge1$ on every closed active contender set, any integer $Q\ge2$ admits a schedule of length $\Oh(\kappa Q\log N)$ giving distinct-sender counts
\begin{equation}\label{eq:probe}
 (1-1/Q)d_A(v)\le b_v\le d_A(v).
\end{equation}
\end{lemma}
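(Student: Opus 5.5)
The schedule is a partial selector from~\eqref{eq:selector} with capacity $k=\kappa$ and miss tolerance $r=\lfloor\kappa/Q\rfloor$. Only active vertices transmit, and each listens in the slots where it is silent.

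\emph{Length.} With these parameters,~\eqref{eq:selector} gives length
\[
 \Oh\bigl(\kappa^2\log N/(\lfloor\kappa/Q\rfloor+1)\bigr).
\]
Since $\lfloor\kappa/Q\rfloor+1\ge\kappa/Q$, this is $\Oh(\kappa Q\log N)$. When $Q>\kappa$ we have $r=0$, and the strong-selector bound $\Oh(\kappa^2\log N)\le\Oh(\kappa Q\log N)$ still holds.

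\emph{Counts.} Each vertex $v$ sets $b_v$ to the number of distinct immediate senders from which it received a packet during the schedule. Packets identify their immediate sender, and only active vertices transmit, so every counted sender lies in $\NB(v)\cap A$. This gives the upper bound $b_v\le d_A(v)$ at once.

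\emph{Lower bound.} Take $X$ to be the closed contender set $A\cap(\NB(v)\cup\{v\})$, which has size at most $\kappa$ by hypothesis. By the padding remark for partial selectors, all but at most $r$ elements $x\in X$ are isolated in some slot. In that slot $x$ is the only member of $X$ transmitting.

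If $x\ne v$, then $v$ itself is not transmitting and no other active in-neighbor transmits. Dormant vertices are silent in data slots, so $v$ listens and hears exactly $x$. Including $v$ in $X$ matters here: it guarantees that $v$ is listening in the isolating slot. Hence at most $r$ in-neighbors of $v$ in $A$ go unheard, and
\[
 b_v\ge d_A(v)-r\ge d_A(v)-\kappa/Q.
\]

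\emph{Main obstacle.} This additive bound becomes the multiplicative bound~\eqref{eq:probe} only when $d_A(v)$ is comparable to $\kappa$. I expect this mismatch to be the real difficulty.

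My first attempt is to concatenate partial selectors at dyadic scales $t=1,2,4,\dots,\kappa$, using at scale $t$ the capacity $t$ and tolerance $\lfloor t/Q\rfloor$. This mirrors the weak-selector construction given earlier. The lengths $\Oh(tQ\log N)$ sum geometrically to $\Oh(\kappa Q\log N)$.

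The trouble is that at scale $t<|X|$ the selector gives no guarantee, since the contender set exceeds its capacity. The only scale with a guarantee is $t\ge|X|$, where up to $t/Q$ elements may be missed; with $t<2|X|$ this is at most $2|X|/Q$ misses. This yields only $(1-2/Q)$ accuracy and also counts $v$ itself in $X$. I would fix the accuracy by running the construction with $2Q+2$ in place of $Q$, which only changes constants.

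The remaining issue is the case $d_A(v)=0$ with $v\in A$, which is trivial since then $b_v=0$. When $d_A(v)\ge1$, we have $|X|\le d_A(v)+1\le 2d_A(v)$, so the misses are at most $4d_A(v)/(2Q+2)<d_A(v)/Q$. This gives~\eqref{eq:probe}.

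Counts are accumulated as a union of heard senders over all scales, so adding scales can only increase $b_v$ while preserving $b_v\le d_A(v)$.
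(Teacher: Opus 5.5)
Your final construction is the paper's: concatenate partial selectors at dyadic scales with tolerance proportional to $t/Q$, and analyze the first scale covering the closed contender set $X$. Because $X$ contains an active receiver, that receiver listens in every isolating slot. Finally, take $b_v$ to be the union of distinct heard senders.

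The closing estimate, however, is false as written. With tolerance $\lfloor t/(2Q+2)\rfloor$ and the bound $t<2|X|\le4d$, where $d=d_A(v)$, you get fewer than $4d/(2Q+2)=2d/(Q+1)$ misses. But $2d/(Q+1)\ge d/Q$ for every $Q\ge1$; at $Q=2$ it is $2d/3$ against $d/2$. So \eqref{eq:probe} is not yet established. Replacing $Q$ by $2Q+2$ shrinks the tolerance by a factor of about two, while your chain of inequalities loses a factor of four.

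The repair is only a constant, and either of two routes works.
\begin{itemize}
\item Use tolerance $\lfloor t/(4Q)\rfloor$, as the paper does. The misses are then at most $t/(4Q)<2|X|/(4Q)\le d/Q$, and the length at scale $t$ is still $\Oh(Qt\log N)$.
\item Keep $2Q+2$ but sharpen the scale bound. For $t\ge2$, $t$ is even and $t<2|X|\le2(d+1)$, so $t\le2d$ and there are at most $\lfloor d/(Q+1)\rfloor<d/Q$ misses. For $t=1$ the tolerance is zero.
\end{itemize}

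In either version, run the scales through the first power of two at least $\kappa$, not merely up to $\kappa$. The covering scale for $|X|=\kappa$ can exceed $\kappa$ when $\kappa$ is not a power of two. Cap the last scale at $N$, since \eqref{eq:selector} requires $k\le N$. The geometric sum is still $\Oh(\kappa Q\log N)$.

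The single-selector attempt at the start can simply be dropped. As you noticed, it gives only the additive bound $b_v\ge d-\kappa/Q$.
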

\begin{proof}
\emph{Schedule length.} Cap $\kappa$ at $N$; every contender set lies in $[N]$. Concatenate dyadic scales $k=1,2,4,\ldots$ through the first scale at least $\kappa$, capping the final scale at $N$ if needed. At scale $k$, use~\eqref{eq:selector} with $r=\lfloor k/(4Q)\rfloor$. The length is $\Oh(Qk\log N)$, also for $k<4Q$, when $r=0$. Geometric summation gives $\Oh(\kappa Q\log N)$.

\emph{Reception guarantee.} Fix a receiver with $d=d_A(v)>0$ and closed contender set $X=A\cap(\NB(v)\cup\{v\})$, $s=|X|$. At the first scale $k\ge s$, $k<2s$ and $s\le d+1\le2d$. Hence $r<s/(2Q)\le d/Q$. At least $d-r$ active in-neighbors are isolated from $X$. Including an active receiver's identifier in $X$ ensures that $v$ listens in each isolation slot; dormant receivers always listen. Each isolated neighbor is therefore heard, giving the lower bound. The count cannot exceed the active in-degree: repeated packets and aggregated origins add no sender identities. The case $d=0$ is immediate.
\end{proof}

Pruning needs neighbor counts; condensing also needs packet identities. The next certificate delivers the neighboring entry union or gives the receiver enough distinct packets for collection. Later frame control rules out the threshold alternative precisely where every pending packet must be delivered.

\begin{lemma}[Amplified packet-union certificate]\label{lem:union}
Freeze an active set and entry packet sets $P_u\subseteq[N]$. For integers $2\le K\le n$, $1\le r\le n$, $Q\ge2$, a schedule of length $\Oh(K[Q\log N+r\log^{3/2}N])$ gives every receiver with $d_A(v)+1\le K$ the certificate~\eqref{eq:probe}. If $W_v$ is the receiver's final knowledge of these designated packets and $U_v=P_v\cup\bigcup_{u\in\NB(v)\cap A}P_u$, then
\begin{equation}\label{eq:union}
 U_v\subseteq W_v\quad\text{or}\quad |W_v|\ge r.
\end{equation}
\end{lemma}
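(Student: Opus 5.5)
The plan is to concatenate two schedules. The first part is the schedule of Lemma~\ref{lem:probe} with $\kappa=K$. Every receiver with $d_A(v)+1\le K$ has closed contender set of size at most $K$, so that lemma's proof applies verbatim to those receivers and yields~\eqref{eq:probe} in $\Oh(KQ\log N)$ rounds. Receivers with larger contention are not covered by the claim, so capping at $K$ is harmless. The second part is a packet schedule of length $\Oh(Kr\log^{3/2}N)$. It is designed to deliver the union $U_v$ or at least $r$ distinct designated packets. Both parts freeze the active set and the entry sets $P_u$. Payloads may append later knowledge, which only enlarges $W_v$, and~\eqref{eq:union} is monotone in $W_v$.

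For the packet part, I would index transmissions by pairs (sender identifier, packet identifier) in $[N]\times[N]$. In such a slot, a sender transmits only if it is active and the designated packet lies in its frozen entry set. Fix a receiver $v$ with closed contender set $X$, $|X|\le K$. The relevant object is the family $(P_u)_{u\in X}$. If $v$ already holds $U_v$, or $|P_v|\ge r$, the claim is immediate because $P_v\subseteq W_v$. Otherwise I would argue by rounds of a selector on senders. I would use the weak dyadic concatenation described in the uniformity subsection, scaled to contender sets of size at most $K$. Each slot of the sender selector is combined with a packet-selection sublayer, in which the isolated sender $u$ cycles through its packets. This sublayer is again a selector over packet identifiers, with capacity $r$ and cost $\Oh(r\log N)$ per sender slot: the sender need only present up to $r$ packets that $v$ lacks. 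I would strengthen this so that each isolated sender delivers either all of $P_u$ or $r$ packets in total. The natural form is a partial selector at every dyadic capacity $t\le r$, with geometric summation. Summing over the $\Oh(K\log N)$ selector length, balanced against packet-capacity padding, should give the stated $\Oh(Kr\log^{3/2}N)$. I would aim for the $\sqrt{\log N}$ overhead by choosing the sender-selector sparsity $r'$ and the packet-selector sparsity to trade $\log N$ factors. Concretely, the sender layer would use $\Oh(K\sqrt{\log N})$ slots with $\sqrt{\log N}$-fold reuse per packet slot.

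The key step is an exchange argument. Suppose $v$ ends with $|W_v|<r$. Then every neighbor $u$ with $P_u\not\subseteq W_v$ has fewer than $r$ packets outside $W_v$. Consider the frozen set $X'$ of contenders still owning an undelivered packet. The weak-selector guarantee gives some $u\in X'$ isolated at each scale covering $|X'|$. In that slot, $u$'s packet sublayer delivers a missing packet. So $|W_v|$ grows with each such pass until either all of $U_v$ arrives or $r$ packets have been collected. The main obstacle is that $X'$ shrinks adaptively while the schedule is fixed. The argument must therefore use the partial-selector property (at most $r$ members unisolated), which holds simultaneously for all sets, rather than a weak selector applied to one set. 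I would handle this by charging each of the at most $r$ needed packet deliveries to a distinct selector block. I expect the exact constant bookkeeping giving $\log^{3/2}N$ rather than $\log^2N$ to be the delicate point; if it fails, I would first try splitting $\log N=\sqrt{\log N}\cdot\sqrt{\log N}$ between the sender and packet selector families.
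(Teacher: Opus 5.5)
Your first part, Lemma~\ref{lem:probe} with $\kappa=K$, is fine. The packet part has a genuine gap: the idea that makes~\eqref{eq:union} true is missing.

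\textbf{The separating step.} Every payload already carries the sender's whole entry set. So a sublayer in which an already isolated sender ``cycles through its packets'' adds nothing. The real difficulty is isolating a sender whose packets $v$ lacks, while many in-neighbors whose packets $v$ already holds (and possibly $v$ itself) also transmit. Your exchange step says the weak selector isolates some $u\in X'$. But isolation from $X'$ is not isolation among the in-neighbors actually transmitting in that slot, unless the gate silences every sender with $P_u\subseteq W_v$ and the receiver. Nothing in your construction ties the gate to $W_v$.

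The paper argues by contradiction with the \emph{final} set. Suppose $|W_v|<r$ and $\pi\in U_v\setminus W_v$. Some row $R$ of a \emph{strong} packet selector of capacity $r$ satisfies $R\cap(W_v\cup\{\pi\})=\{\pi\}$. Only senders with $P_u\cap R\ne\varnothing$ run a weak sender selector in that row. This silences every sender with $P_u\subseteq W_v$, and also $v$, since $P_v\subseteq W_v$, while an owner of $\pi$ is gated in. Any reception in that row therefore delivers a packet outside $W_v$, a contradiction.

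The hypothesis $|W_v|<r$ is used only so that $W_v\cup\{\pi\}$ fits the packet selector. It does not imply your claim that incomplete neighbors have fewer than $r$ packets outside $W_v$. The adaptive ``$W_v$ grows block by block'' argument is unnecessary, because one fixed final set removes the adaptivity issue. Note also that a capacity-$r$ packet family of length $\Oh(r\log N)$ would not suffice, since it may isolate only elements of $W_v$.

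\textbf{The cost bound.} The $\log^{3/2}N$ factor is not a matter of constant bookkeeping. Your suggested sender layer of $\Oh(K\sqrt{\log N})$ slots cannot be a weak selector for $K$-subsets of $[N]$ in general: for $2\le K\le N^{1-\varepsilon}$, selective families need $\Omega(K\log N)$ slots. Even the correct gated construction, with a capacity-$K$ weak selector in each of the $\Oh(r^2\log N)$ packet rows, gives only $\Oh(r^2K\log^2N)$.

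The paper first shrinks the gated population. At each dyadic scale $k\ge4\eta$, put $\eta=\max\{Q,r\sqrt{\log N}\}$ and $q=\lfloor k/(4\eta)\rfloor$. A partial selector leaving at most $q-1$ contenders unisolated costs $\Oh(k\eta\log N)$. In the separating row every heard sender is gated out, so at most $q-1$ in-neighbors are gated in. A weak selector valid for every size at most $q$, built from dyadic partial selectors, then costs $\Oh(q\log N)$ per row. The packet stage totals $\Oh(r^2k\log^2N/\eta)=\Oh(kr\log^{3/2}N)$. Smaller scales $k<4\eta$ use a strong sender selector, which delivers everything directly.

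Because $q-1<k/(4Q)$, the same partial isolation also yields~\eqref{eq:probe}. Splitting the schedule into independent probe and packet parts loses exactly this coupling between the count certificate and the size of the gated set. That coupling is what makes the balance possible.
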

\begin{proof}
\emph{Schedule and cost.} Put $\eta=\max\{Q,r\sqrt{\log N}\}$. Use dyadic contender scales through the first covering $K$, capped at $N$. At $k<4\eta$, run a strong sender selector. Otherwise put $q=\lfloor k/(4\eta)\rfloor$ and first isolate all but $q-1$ contenders using~\eqref{eq:selector}. Then run a strong selector on packet identifiers for sets of size at most $r$. For each packet-selector row $R$, only senders with $P_u\cap R\ne\varnothing$ execute a weak sender selector for nonempty sets of size at most $q$. For a nonempty set of size $s\le q$, the set's first covering dyadic scale $t<2s$ leaves at most $\lfloor t/2\rfloor<s$ members unisolated. Summing these partial selectors gives a weak schedule of length $\Oh(q\log N)$ valid for \emph{every} size at most $q$. Payloads contain all entry packets. At a large scale the cost is
\[
 \Oh\!\left(k\eta\log N+\frac{kr^2}{\eta}\log^2N\right)
 =\Oh(k[Q\log N+r\log^{3/2}N]).
\]
Small strong-selector scales cost $\Oh(k\eta\log N)$. Geometric summation proves the bound. The choice of $\eta$ balances partial isolation against packet-gated selection.

\emph{Reception counts.} Zero active in-degree makes both claims immediate, since the receiver already knows $P_v$. Otherwise, at the first scale covering the closed contender set, fewer than $d_A(v)/Q$ neighbors remain unheard after the first part, as in Lemma~\ref{lem:probe}.

\emph{Packet union.} A strong-selector scale delivers all entry packets directly. Otherwise, suppose the \emph{final} set $W_v$ has size below $r$ and misses $\pi\in U_v$. A packet-selector row isolates $\pi$ from $W_v\cup\{\pi\}$, whose size is at most $r$. Every sender already heard has $P_u\subseteq W_v$ and is inactive in that row. Since $P_v\subseteq W_v$, the receiver also listens. The missing packet must occur at an unheard neighbor, so the gated neighbor set is nonempty and has at most $q-1$ senders. Weak selection isolates one. The sender's gated entry set contains a packet outside $W_v$, contradicting final knowledge. This proves~\eqref{eq:union}, including half-duplex reception.
\end{proof}
\subsection{Why final knowledge is the right separating set}
The unknown neighboring packet union can exceed the threshold, so separating a desired packet from that entire union has no short-length guarantee. Lemma~\ref{lem:union} instead separates the desired packet from final knowledge $W_v$. If $W_v$ is large, the threshold alternative holds; otherwise $W_v\cup\{\pi\}$ fits the packet selector's capacity.

Every already-heard sender has an entry set contained in $W_v$ and is silent in the separating row. This removes interference from all represented entry sets. The remaining sender set is small because partial isolation heard almost every contender. Weak selection then succeeds without knowledge of that residual set.

The receiver's own entry set also lies in $W_v$. Thus, if the receiver belongs to the active sender set, the receiver is inactive in the separating row. This is the precise point where the composed packet construction satisfies the half-duplex model. Counting neighbors without controlling the receiver's own transmission would not establish the claimed delivery.

The separating row is chosen only in the analysis. Nodes execute every predetermined row, without predicting missing packets or observing collision feedback. The schedule is oblivious once the schedule's active set and entry packets are frozen.

For example, take threshold three and suppose final knowledge contains two designated packets but misses a neighboring one. A packet-selector row separates the missing packet from those two. Already-heard senders are silent in that row, and weak selection isolates a remaining sender. Any newly delivered packet outside the supposed final set gives the contradiction; the successful sender need not carry the particular packet used to choose the row.

The three-selector architecture is from~\cite[Lemma 4]{wu}. The final-set argument removes the small-union premise: when a desired packet is missing, delivery of any packet outside the supposed final set contradicts that set's definition. Balancing partial isolation and packet-gated selection gives the stated logarithmic cost.

\subsection{Adaptive probing and pruning}
Fix the baseline amplification
\begin{equation}\label{eq:Q}
 Q=\max\{4,\lceil\log n\rceil\lceil\log\log n\rceil\}.
\end{equation}
Maintain a common cap $e$, initially $n$. At a nonempty active population $a$, use
\[
 \kappa=\min\{a,2e\},\qquad q(\kappa)=\lceil nQ/\kappa\rceil.
\]
Probe at contender bound $\kappa$ with amplification $q(\kappa)\ge Q$. Our adaptive probe's cost is $\Oh(\kappa q(\kappa)\log N)=\Oh(B)$. The smaller contender bound permits higher accuracy at the same padded cost.

For an integer threshold $1\le D\le n$, procedure $\Prune(D)$ probes, saves inventories, and searches all keys for the maximum score $b$. If $b<D$, $\Prune(D)$ returns without changing $F$. Otherwise our pruning procedure broadcasts the winner's saved inventory, updates $F$ and $e\leftarrow\min\{e,b+1\}$, and repeats. An empty active set returns immediately. Dormant receivers remain eligible. The probe parameters and unconditional searches are independent of $D$, so all thresholds share prefixes.

A completed broadcast secures the credited sources and removes them from data contention. Every neighbor heard by the winner therefore leaves its active neighborhood. Aggregated rumors can secure further sources; the contraction needs only the heard neighbors.

\emph{Why the contraction takes two steps.} At fixed accuracy, a receiver may win logarithmically many times. Adaptive pruning also updates a nonincreasing contender cap from the winner's score. A smaller cap permits greater amplification at the same probe cost, so a win controls both current removal and future accuracy. The proof tracks pairs of wins because the new cap certificate is available only after the earlier selection. Intervening broadcasts only shrink the relevant sets.

Receivers with the same closed in-neighborhood can share this accounting even when different members win. The quantity that contracts is their common set of possible active origins. Lemma~\ref{lem:charge} combines this charge with the elementary budget on newly secured sources.

The quadratic estimate explains the $\log\log n$ charge. Dividing $x_{r+2}\le x_r^2/(4n)$ by $4n$ squares the normalized population $x_r/(4n)$, initially at most $1/4$, after each pair of wins. Here a qualification means an actual winning selection of the receiver or class. These wins need not occur in consecutive global iterations.

Put $H_\star=2\lceil\log_2(1+L)\rceil$ and $p_D(t)=\min\{H_\star,1+\lfloor\log_Q(t/D)\rfloor\}$ for $t\ge D$. Let $\mathcal S=\{\NB(v)\cup\{v\}:v\in V\}$ contain distinct closed in-neighborhoods, and put $t_D=|\{S\in\mathcal S:|S|>D\}|$,
\[
 \Psi_D=\sum_{S\in\mathcal S:\,|S|>D}p_D(|S|).
\]
Define $s_D=\lfloor n/D\rfloor$ and
\begin{align}
 \Phi_D&=\sum_{j=0}^{H_\star-1}h_{Q^jD}
       =\sum_{\indeg(v)\ge D}p_D(\indeg(v)),\label{eq:phi}\\
 \rho_D&=\min\{s_D,\Phi_D,\Psi_D\}.\label{eq:rho}
\end{align}
These are analytical parameters; neighborhoods and their classes are not inputs.
Exact probes retain their $N$ singleton slots. Put $\rho_D^{\rm ex}=\min\{s_D,t_D\}$; in particular $\rho_D^{\rm ex}\le\min\{s_D,h_D\}$.

The three terms in $\rho_D$ bound the same broadcasts: the source budget counts disjoint batches of size $D$, the receiver sum bounds repeated wins, and the class sum groups receivers sharing a closed neighborhood. Their minimum bounds one execution without requiring vertices to know which charge is smallest.

\begin{lemma}[Adaptive profile and neighborhood charging]\label{lem:charge}
Procedure $\Prune(D)$ makes at most $\rho_D$ successful selections, costs $\Oh((1+\rho_D)B)$ rounds, and returns with $d_A(v)<2D$ for all $v$. The charge holds for every qualifying prefix of the common execution. From an entry population $a$, the success cap is $\lfloor a/D\rfloor$. Moreover,
\begin{equation}\label{eq:coarsetail}
 \rho_D\le h_D\min\!\left\{H_\star,1+\frac{\log^+(\Delta/D)}{\log Q}\right\},
\end{equation}
and $\rho_D\le H_\star t_D$, where $\log^+x=\max\{0,\log x\}$. Exact pruning costs $\Oh((1+\rho_D^{\rm ex})(B+N))$.
\end{lemma}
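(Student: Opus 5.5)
We handle termination, cost, and the three charges $s_D$, $\Phi_D$, $\Psi_D$ separately, and take their minimum at the end.

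\emph{Termination and cost.} $\Prune(D)$ returns only after a search with $b<D$. At that boundary the probe guarantee~\eqref{eq:probe} of Lemma~\ref{lem:probe} holds with amplification $q(\kappa)\ge Q\ge4$, so $b_v\ge\tfrac34 d_A(v)$ and hence $d_A(v)\le\tfrac43 b_v<2D$. This needs $\kappa$ to bound every closed active contender set. We prove that invariant by induction. Initially $e=n$, so $\kappa=\min\{a,2n\}\ge a$. After a win with score $b$, every $v$ satisfied $d_A(v)\le\tfrac43 b<2(b+1)$ at probe time, and $d_A$ only shrinks as $F$ grows. Closed sets add at most one, which the $2e$ slack absorbs, and $a$ itself is always a bound. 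Each iteration costs one probe, one $\MaxSelect$, and possibly one inventory broadcast, all $\Oh(B)$. The total is therefore $\Oh((1+\#\text{successes})B)$. The exact variant adds $N$ singleton slots per probe, giving the $B+N$ term. Because the probe parameters and searches do not depend on $D$, any bound we prove for the successes of threshold $D$ holds for every qualifying prefix.

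\emph{Source charge.} A success has $b\ge D$. The winner's saved inventory then contains at least $D$ heard active in-neighbors, all absent from $F$ before the broadcast, so each success adds at least $D$ new sources. This gives at most $\lfloor a/D\rfloor\le s_D$ successes.

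\emph{Receiver and class charges.} This is the main step. Fix a receiver $v$ that wins twice, at times $r<r'$, and write $x_r$ for its active in-degree just before win $r$. After win $r$, every heard neighbor is secured, so $x_{r+1}\le x_r-b\le x_r/q(\kappa_r)$. The cap drops to $e\le b+1\le x_r+1$, so at the next win $\kappa\le 2x_r+2$. The accuracy becomes $q\ge nQ/(2x_r+2)$, and therefore
\[
 x_{r'+1}\lesssim x_{r'}\cdot\frac{2x_r+2}{nQ}\le\frac{x_r^2}{4n}
\]
after adjusting constants (one uses $x_{r'}\le x_r/Q$ and $Q\ge4$). Normalizing by $4n$ squares $x/(4n)\le1/4$ every pair of wins. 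A qualifying win needs $x\ge b\ge D\ge1$, so this caps the wins of $v$ at $2\lceil\log_2(1+L)\rceil=H_\star$.

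Separately, each win divides the active in-degree by at least $Q$ while it must stay at least $D$. This gives at most $1+\lfloor\log_Q(\indeg(v)/D)\rfloor$ wins, and $v$ qualifies only if $\indeg(v)\ge D$. Summing $p_D$ over such receivers gives $\Phi_D$. The two forms in~\eqref{eq:phi} agree by counting, for each $j$, the receivers with $\indeg\ge Q^jD$.

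For $\Psi_D$ we run the same argument on the class quantity $|A\cap S|$ for a closed neighborhood $S$. All members of the class share these contenders, and whichever member wins secures its heard neighbors. The contraction therefore applies per class. The closed size exceeds the in-degree by at most one, which explains the strict inequality $|S|>D$. Bounding $p_D\le H_\star$ gives $\rho_D\le H_\star t_D$. Bounding $p_D(\indeg)\le\min\{H_\star,1+\log^+(\Delta/D)/\log Q\}$ with $h_D$ terms gives~\eqref{eq:coarsetail}.

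For the exact variant, accuracy is perfect, so any receiver or class qualifies at most once. This gives $\rho_D^{\rm ex}=\min\{s_D,t_D\}$.

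I expect the quadratic two-step inequality to be the main obstacle. The cap certificate from win $r$ must be shown to control the probe accuracy at the possibly much later win $r'$, across intervening winners and ceiling effects in $q(\kappa)$. I would first try to prove $x_{r'+1}\le x_r^2/(4n)$ directly, with constants absorbed through $Q\ge4$.
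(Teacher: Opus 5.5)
Your proposal is correct and follows the paper's proof essentially step for step. It uses the same invariant from $d_A(v)\le\frac43 b<2(b+1)$ with the monotone cap, the disjoint batches of at least $D$ heard sources, and the persistent-cap recurrence $x_{r+2}\le 2x_r(x_r+1)/(nQ^2)\le x_r^2/(4n)$; this recurrence holds exactly, with no constant adjustment, once you use $x_r\ge D\ge1$. It also uses the same one-step $Q$-contraction giving $p_D$, and the same recurrence on $|A\cap S|$ for classes. The one detail to make explicit is the paper's winner-activity indicator $I_r$ in the class step: the winner's saved snapshot contains its own rumor, so the winner also leaves $A$ and $|A\cap S|$ drops to at most $(x_r-I_r)/q_r\le x_r/q_r$, while $b\le x_r-I_r$ still installs $e\le x_r+1$. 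Securing only the heard neighbors, as you state it, would leave an additive $I_r$ that blocks the conclusion $x<1$. The same fact is what lets an exact win retire its entire class.
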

\begin{proof}
\emph{Invariant and source budget.} Since degrees and $e$ are integers, the invariant $d_A(v)<2e$ bounds every closed contender set by $\min\{a,2e\}$. This invariant holds at $e=n$. A maximum score $b$ gives $d_A(v)\le b/(1-1/Q)\le2b$ everywhere, including dormant receivers. Degrees only decrease after broadcasting, so $e\leftarrow\min(e,b+1)$ preserves the invariant. Each success secures at least $D$ new origins, giving $\lfloor a/D\rfloor$ selections. The winning receiver's active degree decreases by a factor of at least $q(\kappa)\ge Q$.

\emph{Receiver charge and persistence of the cap.} Index successful global iterations by $t$, with $A_t,e_t$ denoting the state before the probe and $b_t$ its maximum score. Fix a receiver $v$, and let $t_1<t_2<\cdots$ be the iterations in which $v$ is the selected winner. Only these wins are charged to $v$. Put $x_r=d_{A_{t_r}}(v)$ and let $q_r$ be that probe's amplification. Since the winning score satisfies $D\le b_{t_r}\le x_r$, its update installs
\[
 e_{t_r+1}=\min\{e_{t_r},b_{t_r}+1\}\le x_r+1.
\]
Every subsequent update also takes a minimum. Thus, regardless of which receivers win between $t_r$ and $t_{r+1}$,
\[
 e_{t_{r+1}}\le e_{t_r+1}\le x_r+1,
 \qquad q_{r+1}\ge\frac{nQ}{2(x_r+1)}.
\]
A larger intervening score cannot increase the cap. The broadcast at $t_r$ leaves at most $x_r/q_r$ active neighbors of $v$, and later broadcasts only shrink that set. Hence $x_{r+1}\le x_r/q_r\le x_r/Q$. Whenever the indicated wins exist,
\[
 x_{r+2}\le\frac{x_{r+1}}{q_{r+1}}
 \le\frac{2x_r(x_r+1)}{nQ^2}
 \le\frac{x_r^2}{4n},
\]
where the last inequality uses $x_r\ge1$ and $Q\ge4$.

Starting from $x_1\le n$, induction gives $x_{2t+1}\le n4^{1-2^t}$. For $t=H_\star/2=\lceil\log_2(1+L)\rceil$, this is at most $n4^{-L}=1/n<1$. Since each qualifying win requires $x_r\ge D\ge1$, there are at most $H_\star$ wins. The one-step $Q$-contraction also bounds their number by $1+\lfloor\log_Q(\indeg(v)/D)\rfloor$. Summing the smaller charge over receivers proves $\Phi_D$ and~\eqref{eq:coarsetail}.

\emph{Neighborhood-class charge.} Fix a closed neighborhood $S$, and now index only iterations won by receivers whose original closed neighborhood is $S$. Let $x_r=|A\cap S|$ before the $r$th such win, and let $I_r$ indicate whether its winner is active. The winner has active in-degree $x_r-I_r$. Its snapshot secures its own source and all heard neighbors, leaving at most $(x_r-I_r)/q_r\le x_r/q_r$ active sources in $S$. Its score also satisfies $b\le x_r-I_r\le x_r$, so the update installs $e\le x_r+1$. The cap remains below this value through every intervening win. The same two-step and geometric recurrences therefore apply even when successive winners in the class are different vertices. Only classes with $|S|>D$ can win, giving $\Psi_D$ and $H_\star t_D$. The source, receiver, and class charges count the same successes, so their minimum is $\rho_D$.

\emph{Stopping certificate.} At the stopping search, every integer score is at most $D-1$, so $d_A(v)\le(D-1)/(1-1/Q)<2D$. Every probe and selection costs $\Oh(B)$. An exact winning snapshot secures the winning receiver's entire closed neighborhood; that class cannot qualify again. Thus exact probes have charge $t_D$ and retain their $N$-slot cost.
\end{proof}

\subsection{Why closed-neighborhood classes can share a charge}
The receiver's own rumor makes the closed neighborhood essential. With $x$ active origins and winner-activity indicator $I\in\{0,1\}$, the saved snapshot secures at least $(1-1/q)(x-I)+I$ class members. Thus the charge covers active and dormant winners without requiring the protocol to identify their class. The common cap retains the bound installed at each win.

Intervening selections only decrease the population and cap. Thus the paired recurrence applies across different winners in the same class. Starting from $x_r\le n$, successive paired bounds are $n/4$, $n/64$, and $n/16384$. The resulting $\Oh(\ell)$ qualification cap improves the fixed-accuracy $\Oh(L/\ell)$ multiplicity; grouping then counts distinct closed neighborhoods instead of individual receivers.

\subsection{Monotonicity and the baseline execution}
The parameters $\Phi_D,\Psi_D$ and $t_D$ are nonincreasing in $D$, supporting the threshold-prefix and rounding arguments below. The outer controller preserves its current cap $e$. A separate pruning call may initialize its cap to $n$; within each call the cap never increases, and Lemma~\ref{lem:charge} applies to that call's own trace. Exact probes, and the one-time $Q=2$ degree estimates used later, remain valid separate operations. A baseline copy keeps fixed amplification $Q$, $n$-bounded probes, and independent clocks and knowledge. Give the baseline copy and the baseline's pruning--path-cover pool constant shares. The baseline's untruncated charge is $\sum_{j\ge0}h_{Q^jD}$. This preserves the fixed-accuracy guarantees on their own trace: increased accuracy can change winners, so a residual witness is not transferred between executions.

\section{Completion and Exact Temporal Replay}
The certified interference bound now supplies a completion procedure. A rumor is \emph{quasi-disseminated} if it is globally known or held by a vertex whose own original rumor is globally known~\cite{grx}. This condition is monotone. We first establish quasi-gossiping, then use exact replay to deliver every rumor everywhere (Lemma~\ref{lem:replay}).

\emph{Why collection suffices.} After $q$ spreading periods along a path, its first $q+1$ vertices hold the target, and the endpoint knows their original rumors. Collection leaves fewer than $q+1$ of those origins unsecured. Hence one target holder has its own rumor secured. The common endpoint inventory supplies the needed connection between holders and secured origins; a holder count alone would not establish it.

\begin{lemma}[Bounded-active-degree completion]\label{lem:complete}
Given common $F$ and integer $1\le K\le n$ with $d_A(v)\le K$ for all $v$, quasi-gossiping is reached at a common boundary in
\[
 \Oh(nK L^{3/2}\ell^{1/2}+B)
\]
additional rounds. If $K=0$, quasi-gossiping already holds.
\end{lemma}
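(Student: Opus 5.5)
We prove Lemma~\ref{lem:complete} by following the quasi-gossiping scheme of~\cite{grx}: alternate spreading periods that push knowledge one hop along every path, interleaved with $\Drain$ calls from Lemma~\ref{lem:drain}. The spreading schedule is a fixed selector. Since $d_A(v)\le K$, every closed active contender set has size at most $K+1$. We therefore use the packet-union certificate of Lemma~\ref{lem:union} with contender bound $K+1$, amplification $Q=2$, and threshold parameter $r=q$, where $q$ is the collection threshold fixed below. One period then costs $\Oh(K[\log N+q\log^{3/2}N])$. It guarantees that every receiver either obtains the union of its active in-neighbors' frozen entry inventories or learns at least $q$ designated sources.

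The main loop repeats the following. First run one spreading period on the frozen inventories. Then run $\Drain(q)$, which leaves $|H_v\setminus F|<q$ for every $v$. The choice of $q$ balances the two costs. By Lemma~\ref{lem:drain}, the total drain cost over the run is $R$ per call plus $B$ per success, and the successes are charged to disjoint batches of $q$ sources, giving at most $n/q$ of them overall. The $R$ term per call is absorbed by making each period at least $R$ long, or by running a drain only every $\Theta(B/(\text{period length}))$ periods; I will use the latter.

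For progress, take an unsecured rumor $\rho$ that is not quasi-disseminated, together with a shortest path from $\rho$'s holder to any vertex. Apply the argument from \emph{Why collection suffices}. After $j$ periods the first $j+1$ path vertices hold $\rho$, because each frozen-union delivery carries the predecessor's inventory. The endpoint knows their original rumors. Once $j+1\ge q$, collection at the endpoint forces some holder's own rumor into $F$, which makes $\rho$ quasi-disseminated. The packet-union alternative $|W_v|\ge q$ is harmless, because then the drain secures $q$ new sources and is paid by the source budget. Paths of length at most $n$ give at most $\Oh(n)$ periods. Each must deliver across active arcs only, since dormant senders' rumors are already in $F$ and inactive arcs need not carry data. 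Dormant relays are not required: a dormant holder already makes the rumor quasi-disseminated.

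The total cost is $n\cdot K(\log N+q\log^{3/2}N)+(n/q)B$. With $B=\Theta(nL^2\ell)$, choosing $q=\Theta(\sqrt{L\ell/(K)}\cdot\sqrt{n/n})$ needs care. Balancing $nKqL^{3/2}$ against $nBL^{-0}/q\cdot(1/n)$, i.e.\ $(n/q)nL^2\ell$, gives $q^2\approx nL^{1/2}\ell/K$. That does not match the target, so I expect this to be the main obstacle. The correct accounting must charge drains per period only $R$, with successes bounded by $a/q\le n/q$, while the per-period spreading cost must be $\Oh(KL^{3/2}\ell^{1/2})$ with $q=\Theta(\sqrt{L\ell})$-style amplification chosen via $\eta$ in Lemma~\ref{lem:union}. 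I would first try $r=\Theta(\sqrt{\ell/L}\,)$-scaled thresholds so that the per-period cost is $KL^{3/2}\ell^{1/2}$ and drains occur once per $\Theta(L^{1/2}\ell^{1/2})$ periods. That would give $\Oh(nKL^{3/2}\ell^{1/2})$ plus a final $\Oh(B)$ for the terminating drain and certifying OR, with all boundaries common because the schedules and $\Drain$ outcomes are common.
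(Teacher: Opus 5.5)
Your proposal does not prove the lemma, as you concede: the accounting never reaches $\Oh(nKL^{3/2}\ell^{1/2}+B)$. The repair you sketch is also unavailable, because Lemma~\ref{lem:union} needs an integer $r\ge1$, so a threshold $r=\Theta(\sqrt{\ell/L})$ is meaningless.

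More fundamentally, the packet-union certificate is the wrong spreading primitive here, for three reasons.
\begin{enumerate}
\item \emph{It is too expensive.} With $r=q$ one period costs $\Theta(Kq\log^{3/2}N)$. Even if only $q$ periods were needed, the best budget you could certify is about $Kq^2\log^{3/2}N+Bn/q$. This balances to $\Theta((Bn)^{2/3}(K\log^{3/2}N)^{1/3})$, which is polynomially larger than the target for small $K$: at $K=1$ it is $\Ot(n^{4/3})$ against $\Ot(n)$.
\item \emph{The threshold alternative is not harmless.} When $|W_v|\ge q$, the predecessor's inventory need not reach $v$, so $\rho$ stalls. Once drains run only occasionally, a single drain success may be the only payment for many such stalls. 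Controlling exactly this is what the paid/unpaid-frame machinery of Lemma~\ref{lem:active} is for.
\item \emph{You count too many periods.} You multiply the period cost by $\Theta(n)$ periods, although your own path argument shows that about $q$ periods suffice.
\end{enumerate}
You also do not treat $K=0$.

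The missing idea is a one-shot structure built on a strong selector. Spread with a strong selector for $\min(n,K+1)$ closed contenders. Its period $\sigma=\Oh((K+1)^2\log N)$ delivers every active sender's entry knowledge to all its out-neighbors unconditionally. The receiver lies in its own closed contender set, so it listens; there is no threshold alternative and hence no stall. Run exactly $q$ periods, then freeze $H_v=K_v$ and call $\Drain(q+1)$ once. A rumor not yet quasi-disseminated has then done one of two things:
\begin{itemize}
\item reached a first dormant vertex, or every vertex when none is dormant; or
\item come to be held by the first $q+1$ active vertices $v_0,\ldots,v_q$ of a simple path from its source, whose origins all lie in $H_{v_q}$.
\end{itemize}
In the second case collection leaves $|H_{v_q}\setminus F|<q+1$, so some holder is secured, and retention gives quasi-dissemination. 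No intermediate drains are needed, which also removes your $R$-per-call problem.

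The cost is $q\sigma+B(1+n/(q+1))$. Take $q=\min\{n-1,\max\{1,\lceil\sqrt{Bn/\sigma}\rceil\}\}$, omitting collection at $q=n-1$. This gives $\Oh(B+\sigma+\sqrt{Bn\sigma})$. Since $B=\Theta(nL^2\ell)$ and $\sigma=\Oh(nK\log N)$, we get $\sqrt{Bn\sigma}=\Oh(K\sqrt{Bn\log N})=\Oh(nKL^{3/2}\ell^{1/2})$.

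For $K=0$, strong connectivity forces $A=\varnothing$: any active source would have an out-neighbor of positive active in-degree.
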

\begin{proof}
Freeze $A$ during the data phase. A strong selector for $\min(n,K+1)$ contenders has length $\sigma=\Oh((K+1)^2\log N)$. The closed contender set at a receiver is $A\cap(\NB(v)\cup\{v\})$, so one period delivers every active sender's entry knowledge to all its out-neighbors, with the receiver listening. Run $q$ periods, then freeze $H_v=K_v$ and execute $\Drain(q+1)$. For $1\le q<n-1$ the cost is at most
\begin{equation}\label{eq:completioncost}
 q\sigma+B\left(1+\frac{n}{q+1}\right).
\end{equation}
For $q=n-1$, omit collection.

Fix a rumor $r$ not quasi-disseminated when the data phase begins; the rumor's source is active. If there is a dormant vertex, choose a simple path from the source to a first dormant vertex. A path of at most $q$ edges delivers $r$ to that vertex. Otherwise the path's first $q+1$ vertices $v_0,\ldots,v_q$ are active. If there is no dormant vertex, either a simple path of length $q$ starts at the source, or every vertex has a source-to-vertex path of fewer than $q$ edges and receives $r$ during the data phase.

In the remaining path case, each $v_i$ holds $r$ after $q$ periods. Also every original rumor of $v_0,\ldots,v_q$ reaches $v_q$ along the corresponding path suffix, so all $q+1$ sources belong to $H_{v_q}$. After collection, $|H_{v_q}\setminus F|<q+1$, so at least one of these vertices is secured. That vertex retains $r$, establishing quasi-dissemination. This argument applies to all rumors simultaneously. At $q=n-1$, a simple path to every target or to a first dormant vertex suffices without collection. For $K=0$, strong connectivity forces $A=\varnothing$, since any active source would have an out-neighbor with positive active in-degree.

Choose $q=\min\{n-1,\max\{1,\lceil\sqrt{Bn/\sigma}\rceil\}\}$. Interior balancing gives $\Oh(B+\sigma+\sqrt{Bn\sigma})$. At the lower endpoint $Bn\le\sigma$ gives the same bound. At the upper clipping point $\sqrt{Bn/\sigma}>n-2$; omitting collection costs $\Oh(\sqrt{Bn\sigma})$ for $n\ge4$. Finally $\sigma=\Oh(nK\log N)$ and $\sqrt{Bn\sigma}=\Oh(K\sqrt{Bn\log N})$ yield the result.
\end{proof}

\subsection{Endpoint choices in completion}
The integer $q$ counts whole selector periods. At the upper endpoint, collection is unnecessary: each rumor reaches all vertices or a first dormant holder. Strong connectivity supplies the latter route with active internal vertices even when the active induced graph is disconnected. The rounding estimates above cover both clipped endpoints; for $n<4$, the strong-selector convention in Section~II applies.

\begin{lemma}[Exact replay]\label{lem:replay}
Start with one original rumor per vertex and record every transmit/listen action of a full-knowledge execution of length $T$. If quasi-gossiping holds at the first pass's end, repeating those exact actions for another $T$ rounds, while retaining all knowledge, completes gossiping.
\end{lemma}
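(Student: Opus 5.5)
The argument is the standard replay argument of~\cite{grx}, adapted to the oblivious full-knowledge model. The key observation is that a \emph{full-knowledge} execution appends every currently known rumor to every packet, while its transmit/listen pattern is fixed by the recording. So during the second pass the same set of successful receptions occurs at the same relative rounds. Knowledge only grows, so each successful reception in the second pass carries at least what the corresponding reception carried in the first pass, plus everything the sender held at the end of the first pass.

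\emph{Monotone coupling.} First I will show, by induction on $t$, that for every vertex $v$ and every round $t\le T$, the inventory of $v$ after round $T+t$ contains its inventory after round $t$. Action sequences coincide, so a reception $u\to v$ at round $t$ (exactly one in-neighbor transmits while $v$ listens) is again a reception at round $T+t$. The sender's payload at $T+t$ contains its inventory before $T+t$, which by induction contains its inventory before $t$. Hence containment propagates.

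\emph{Main step.} Fix a rumor $r$ and a target vertex $w$. By quasi-gossiping at time $T$, $r$ is globally known, which finishes the case, or $r$ is held at time $T$ by some vertex $x$ whose own rumor $\rho_x$ is globally known at time $T$. Since $\rho_x$ started only at $x$, there is a time-respecting chain of successful receptions in the first pass, $x=y_0\to y_1\to\cdots\to y_k=w$, at rounds $t_1<\cdots<t_k\le T$, carrying $\rho_x$ from $x$ to $w$. This follows by tracing backward the reception through which each vertex first learned $\rho_x$.

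\emph{Carrying $r$ along the chain.} In the second pass the same receptions occur at rounds $T+t_1<\cdots<T+t_k$. Vertex $x$ holds $r$ from time $T$ on. Inductively along the chain, $y_i$ holds $r$ before round $T+t_{i+1}$, so its full-knowledge packet delivers $r$ to $y_{i+1}$. Thus $w$ holds $r$ by round $2T$. Ranging over all $r$ and $w$ completes gossiping.

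The main obstacle is making the chain extraction precise. The backward-tracing step must produce strictly increasing rounds, so that a vertex learns $\rho_x$ before forwarding it. I will handle this by defining, for each $v\neq x$, the round $f(v)$ at which $v$ first acquires $\rho_x$, and its unique successful sender $p(v)$ in that round. Then $f(p(v))<f(v)$, because $p(v)$ must already hold $\rho_x$ when transmitting. Following $p$ backward from $w$ therefore terminates at $x$.

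A second point needs care. Private reset of control clocks in Boolean tests must not alter the recorded actions, since the replay copies the actions verbatim rather than recomputing them. Likewise, payloads must include all retained knowledge, as the statement stipulates.
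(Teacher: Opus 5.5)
Your proposal is correct and follows the paper's proof. Both pick a holder $x$ of $r$ whose own rumor is globally known at time $T$, take the first-pass temporal path that carried $\rho_x$ to the destination, and observe that replaying identical transmit/listen actions reproduces those receptions, now carrying $r$; the paper phrases this as the Boolean square $M^2$ being all ones, and your first-acquisition-round construction makes the existence of that temporal path explicit, though your monotone-coupling induction is never used by the main step and can be dropped.
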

\begin{proof}
Let $M_{su}=1$ exactly when a first-pass temporal path, allowing waiting, carries $s$'s original rumor to $u$. Fix a source--destination pair $(s,v)$. If $s$ is globally known, choose $u=v$; retention supplies the self-path at that destination. Otherwise quasi-gossiping supplies a holder $u$ whose own rumor is globally known. In both cases $M_{su}=M_{uv}=1$.

At replay entry, $u$ already holds $s$'s rumor. Repeating the recorded actions preserves all listening and collision patterns, so the first-pass path that carried $u$'s rumor to $v$ now also delivers $s$'s rumor. The first pass establishes a useful holder; replay supplies the holder's dissemination path in the required temporal order. Thus the Boolean square $M^2$ is all ones, proving gossiping.
\end{proof}

\subsection{Temporal closure and complete recording}
A temporal path follows successful receptions in chronological order, allowing waiting. Full-knowledge packets make temporal reachability equivalent to rumor knowledge; retention supplies the diagonal of $M$. The useful holder joins a source to a destination across two passes, without reversing any arc.

For example, a holder may broadcast its own rumor before learning a target. Its original broadcast then lacks the target, although the target is quasi-disseminated at the first pass's end. Replay starts with the target already at that holder and reuses the earlier dissemination path. Recording must therefore include the entire execution from the original state, including control. Replay repeats actions, not selection predicates, packet gates, or stopping decisions.

Protocol $\Gossip(D)$ initializes $F=\varnothing$ and starts recording, runs $\Prune(D)$, invokes Lemma~\ref{lem:complete} with $K=\min(n,2D)$, then replays the entire first pass.

\section{Profile Bound and Adaptation}
A fixed threshold gives a performance benchmark: pruning pays for scores at least $D$, and completion pays for the certified residual interference. The adaptive protocol attains the best benchmark without learning the degree profile. The connection is a shared greedy prefix: all thresholds use the same probes, frozen scores, and unconditional key searches until their stopping comparisons differ. A common completion budget then determines the handoff, allowing one evolving execution to be compared with every threshold.

\begin{theorem}[Degree-profile bound]\label{thm:profile}
For every common integer $1\le D\le n$, $\Gossip(D)$ completes gossiping in $\Oh(G_D)$ rounds, where
\begin{equation}\label{eq:GD}
 G_D=nD L^{3/2}\ell^{1/2}+n(1+\rho_D)L^2\ell.
\end{equation}
Only $n,N,D$ are inputs. Define $G_D^{\rm ex}$ by replacing $\rho_D$ with $\rho_D^{\rm ex}$. Exact probes take $\Oh(G_D^{\rm ex}+N(1+\rho_D^{\rm ex}))$ rounds, hence $\Oh(G_D^{\rm ex})$ if $N=\Oh(nL^2\ell)$.
\end{theorem}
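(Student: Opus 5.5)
The proof follows the three stages of $\Gossip(D)$ and bounds each stage by an earlier lemma. The total first-pass length $T$ is the sum of the pruning and completion costs. Replay adds another $T$ rounds, so the final cost is $2T=\Oh(T)$.

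\emph{Pruning stage.} By Lemma~\ref{lem:charge}, $\Prune(D)$ makes at most $\rho_D$ successful selections and costs $\Oh((1+\rho_D)B)$ rounds. It returns at a common boundary with $d_A(v)<2D$ for every $v$. By~\eqref{eq:B}, $B=\Theta(nL^2\ell)$, using $\log N=\Theta(L)$ because $N=n^{\Oh(1)}$. So this stage contributes $\Oh(n(1+\rho_D)L^2\ell)$, which is the second term of~\eqref{eq:GD}.

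\emph{Completion stage.} Since $d_A(v)<2D$ and $d_A(v)\le n-1$, the integer $K=\min(n,2D)$ satisfies the hypothesis $d_A(v)\le K$ of Lemma~\ref{lem:complete}, and $1\le K\le n$. The ledger $F$ is common, because it changes only at designated broadcast boundaries. Lemma~\ref{lem:complete} then reaches quasi-gossiping at a common boundary in $\Oh(nKL^{3/2}\ell^{1/2}+B)$ rounds. With $K\le 2D$ this is $\Oh(nDL^{3/2}\ell^{1/2}+nL^2\ell)$. The $B$ term is absorbed into the pruning term, whose factor $1+\rho_D$ is at least $1$.

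\emph{Replay stage.} The whole first pass, including all control rounds, is recorded from the initial one-rumor-per-vertex state. Quasi-gossiping holds at its end, so Lemma~\ref{lem:replay} completes gossiping with $T$ more rounds.

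\emph{Inputs.} The protocol uses only the inputs $n,N,D$. The schedules depend on $n$ and $N$, $Q$ depends on $n$, and the threshold is $D$. The quantities $\rho_D$, the degrees, and the neighborhoods appear only in the analysis.

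\emph{Exact variant.} Replace the amplified probes by exact probes built from $N$ singleton slots. Lemma~\ref{lem:charge}'s exact clause gives pruning cost $\Oh((1+\rho_D^{\rm ex})(B+N))$ with the same stopping certificate $d_A(v)<2D$. Exact scores satisfy $b_v=d_A(v)$, so the certificate follows directly. Completion and replay are unchanged, so the total is $\Oh(G_D^{\rm ex}+N(1+\rho_D^{\rm ex}))$. When $N=\Oh(nL^2\ell)=\Oh(B)$, the $N$ term is absorbed into $(1+\rho_D^{\rm ex})B$, giving $\Oh(G_D^{\rm ex})$.

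\emph{Main obstacle.} The arithmetic is routine. The delicate part is the interface between the stages, since the lemmas were stated for frozen, common state. Three things need checking. First, the cap $d_A<2D$ certified at $\Prune(D)$'s final stopping search still holds when completion begins: $F$ does not change after that search, and degrees only decrease. Second, the completion lemma's data phase uses this $F$ together with the full physical network for control. Third, replay repeats the recorded actions rather than re-evaluating predicates, as Lemma~\ref{lem:replay} requires. If a rounding issue appears at $D$ near $n$, where $2D>n$, I would handle it by the clipping $K=\min(n,2D)$ and by the $q=n-1$ endpoint of Lemma~\ref{lem:complete}.
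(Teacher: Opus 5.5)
Your proposal is correct and matches the paper's proof: Lemma~\ref{lem:charge} bounds pruning and certifies $d_A<2D$, Lemma~\ref{lem:complete} with $K=\min(n,2D)$ bounds completion, and Lemma~\ref{lem:replay} doubles the first pass. For the exact variant you use the same $\Oh((1+\rho_D^{\rm ex})(B+N))$ pruning cost with $B=\Theta(nL^2\ell)$ absorbing $N$; your interface checks are simply more explicit than the paper's terse argument.
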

\begin{proof}
Lemma~\ref{lem:charge} bounds pruning and certifies $d_A<2D$. Lemma~\ref{lem:complete} bounds quasi-gossiping completion, and Lemma~\ref{lem:replay} doubles the first pass. Exact pruning costs $\Oh((1+\rho_D^{\rm ex})(B+N))$; completion and replay are unchanged. Since $B=\Theta(nL^2\ell)$, this gives the stated exact bound.
\end{proof}

\begin{theorem}[Adaptation without a threshold-search penalty]\label{thm:adaptive}
Using only common $n,N$, deterministic gossiping takes
\[
 \Oh\!\left(\min_{1\le D\le n}\min\{G_D,G_D^{\rm ex}+N(1+\rho_D^{\rm ex})\}\right)
\]
rounds. This includes $\Oh(\min_DG_D^{\rm ex})$ when $N=\Oh(nL^2\ell)$. Our adaptive protocol retains the $\Ot(n^{4/3})$ general guarantee of~\cite{kuschner} and, for linear identifiers, $\Ot((mn)^{3/5})$ from~\cite{wu}, without knowing $m$. Further branches attain Theorems~\ref{thm:density} and~\ref{thm:regular} without structural tests.
\end{theorem}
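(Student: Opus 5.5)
The plan is to build one adaptive controller that simulates every threshold $D$ without knowing which is best. We then interleave it with constant-share copies of the exact-probe variant, the Kuschner et al.\ protocol, and the condensing protocol of~\cite{wu}. Because every copy keeps private inventories, ledgers, clocks and recordings, the portfolio terminates within a constant factor of the fastest copy. The main work is the adaptive copy.

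\emph{Adaptive copy.} Run the common greedy pruning execution of $\Prune$ with cap $e$ and amplification $q(\kappa)$, recording from the start. After each successful selection $i$, with announced maximum score $b_i$, insert a completion attempt with residual bound $K_i=\min(n,2b_i)$. By the stopping certificate of Lemma~\ref{lem:charge}, the invariant $d_A<2e$ with $e\le b_i+1$ justifies this $K_i$. The attempt runs Lemma~\ref{lem:complete} at $K_i$, replays the recorded prefix by Lemma~\ref{lem:replay}, and then runs the OR completion test on $|K_v|<n$. If the test certifies success we stop. Otherwise the attempt is discarded and pruning resumes.

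Two points need care here. First, discarded attempts must not pollute the greedy trace. Each attempt therefore runs in a private namespace, and its knowledge is dropped; the ledger $F$ changes only through pruning broadcasts, so the prefix property across thresholds is preserved. Second, we schedule attempts geometrically. An attempt is launched only when its budget $nK_iL^{3/2}\ell^{1/2}+B$ is at most half the previous attempt's budget, or when elapsed pruning time has doubled since the last attempt. This makes the total cost of attempts a geometric sum dominated by the last one, plus the elapsed pruning time.

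\emph{Comparison with threshold $D$.} Fix any $D$. Every threshold shares the probes and unconditional searches, so $\Prune(D)$ is a prefix of the common execution; by Lemma~\ref{lem:charge} it consists of at most $\rho_D$ successes of cost $\Oh(B)$ each. At the end of that prefix the maximum score is below $D$, so the residual bound is at most $2D$. The first attempt launched at or after this point, under the doubling rule, costs $\Oh(nDL^{3/2}\ell^{1/2}+B)$ plus at most a constant factor of the pruning time already spent. That attempt succeeds: Lemma~\ref{lem:complete} gives quasi-gossiping relative to the current $F$, and Lemma~\ref{lem:replay} completes gossiping. Replay only doubles the first pass, so the total is $\Oh(G_D)$.

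\emph{Remaining cases.} The exact-probe copy is handled the same way with $\rho_D^{\rm ex}$ and the additional $N$ slots per probe, giving the second term; when $N=\Oh(nL^2\ell)$ it reduces to $G_D^{\rm ex}$. The general $\Ot(n^{4/3})$ bound and the linear-identifier $\Ot((mn)^{3/5})$ bound come from running those protocols as portfolio branches. Both are verified by the same completion OR test, so neither needs $m$ or parameter guesses; the later Theorems~\ref{thm:density} and~\ref{thm:regular} are added as further branches in the same way.

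The main obstacle is justifying that interleaved failed attempts neither change the pruning trace nor exceed the comparison budget. I would first try the doubling-of-elapsed-time trigger, since it charges both failures and the handoff lag to the benchmark.
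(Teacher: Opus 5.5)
Your adaptive copy has a genuine gap. The bound $K_i=\min(n,2b_i)$ is \emph{certified}: the probe gives $d_A(v)\le 2b_i$, and active degrees only decrease. So Lemma~\ref{lem:complete} guarantees quasi-gossiping and Lemma~\ref{lem:replay} guarantees full gossiping. Hence no attempt can fail, and the discard branch never executes. The first attempt launched (under your rules, the one after the first selection) ends the protocol, at the prescribed cost of Lemma~\ref{lem:complete} at $K_1$.

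On the complete digraph, $b_1=\Theta(n)$, so $K_1=n$ and you pay order $n^2L^{3/2}\ell^{1/2}$. There $\min_DG_D=\Oh(nL^2\ell^2)$, since there is a single closed-neighborhood class and $\rho_1\le H_\star$. Neither of your triggers compares an attempt's budget with the pruning time already spent, so nothing prevents committing to a completion far costlier than $G_D$. The real obstacle is therefore not pollution of the greedy trace by failed attempts, but deciding \emph{when} to commit to a single completion.

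Even if attempts were truncated so that they could fail, each one must replay the whole recorded prefix from the original rumors. Then $\Theta(\log n)$ halving-triggered attempts inside one doubling window could add $\Theta(P\log n)$, a logarithmic loss in an $\Oh$ statement.

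The missing idea is the paper's handoff rule, with exactly one completion:
\begin{enumerate}
\item Pad each probe--\MaxSelect{} iteration to $U=\Theta(B)$.
\item Keep the monotone cap $E\leftarrow\min(E,b+1)$. Your $2b_i$ need not be monotone, and the comparison uses monotonicity.
\item Let $C_E$ be a completion budget, nondecreasing in $E$, that bounds Lemma~\ref{lem:complete} at $K=\min(n,2E)$.
\item Stop pruning as soon as elapsed time $P$ reaches $C_E$; then complete once and replay.
\end{enumerate}

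For any comparator $D$ there are two cases. If the stop comes within the first $\rho_D$ iterations, completion is paid by $P\le U\rho_D$. Otherwise the first below-$D$ maximum leaves $E\le D$ with $P\le U(\rho_D+1)$. Every later iteration then begins with $P<C_E\le C_D$, so pruning costs at most $\max\{U(\rho_D+1),C_D+U\}$ and completion at most $C_D$. This gives $\Oh(G_D)$ for all $D$ simultaneously; the exact variant uses $U_{\rm ex}=\Theta(B+N)$.

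Separately, the linear-identifier protocol of~\cite{wu} takes an edge-count upper bound as input, and the completion OR test does not remove that input by itself. You need verified geometric guesses $M=n,2n,4n,\ldots$, each run as a private trial from the original rumors with full replay. You must then argue that the first valid guess has $M\le 2m$ and that earlier trials fit within its budget, so the overhead is a logarithmic factor absorbed by $\Ot$.
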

\begin{proof}
Pad each probe--$\MaxSelect$ iteration to a common length $U=\Theta(B)$. For a sufficiently large fixed constant $c$, let
\begin{equation}\label{eq:CD}
 C_D=c\left(B+D\lceil\sqrt{Bn\log N}\rceil+nD\log N\right).
\end{equation}
This is nondecreasing in $D$ and bounds Lemma~\ref{lem:complete} with $K=\min(n,2D)$. This budget is $\Theta(B+nD L^{3/2}\ell^{1/2})$.

Initialize $F=\varnothing$, $P=0$, $E=n$, and recording. Each iteration probes using the common current cap $E$ and executes $\MaxSelect(b_v)$, including the winner's inventory broadcast. All vertices learn $b=\max_vb_v$. Set
\[
 P\leftarrow P+U,\qquad E\leftarrow\min(E,b+1).
\]
Stop if $|F|=n$, $b=0$, or $P\ge C_E$. Complete any nonempty active set with $K=\min(n,2E)$, then replay the entire first pass.

Before broadcasting, $d_A(v)\le2b$ by~\eqref{eq:probe}; afterward degrees only decrease. Thus $d_A(v)<2E$ persists. Each positive-score iteration secures a new source, so termination occurs within $n+1$ iterations. Completion and replay are valid, also with exact probes.

Fix any comparator $D$. Before the first $b<D$, Lemma~\ref{lem:charge} bounds the prefix by $\rho_D$ iterations. If our controller stops in that prefix, $P\le U\rho_D$; completion costs at most $P$, or is unnecessary when $F=V$. Otherwise the first $b<D$ iteration has $P\le U(\rho_D+1)$ and makes $E\le D$. Any later iteration is entered with $P<C_E\le C_D$; after that iteration, $P<C_D+U$. Consequently the eventual pruning time is at most
\[
 \max\{U(\rho_D+1),C_D+U\},
\]
and completion costs at most $C_D$. The total is $\Oh(G_D)$, uniformly in $D$. Replay doubles the first-pass time. For exact probes, use $U_{\rm ex}=\Theta(B+N)$ and the same $C_D$. The comparison gives $\Oh(G_D^{\rm ex}+N(1+\rho_D^{\rm ex}))$, which is $\Oh(G_D^{\rm ex})$ if $N=\Oh(B)$.

Privately interleave both profile variants, the path branch of Theorem~\ref{thm:pathadaptive}, Theorems~\ref{thm:density} and~\ref{thm:regular}, and the general branch~\cite{kuschner}. For linear identifiers, retain~\cite{wu}, removing the earlier protocol's edge-count input by verified geometric guesses as below. Give the profile branches a constant share of time; the pruned-density pool has its own share. Our protocol stops at the first common completion.
\end{proof}

\section{Neighborhood Clearing and an Improved Profile Bound}\label{sec:clearing}
Adaptive accuracy limits how often a receiver can qualify, but each qualification still pays for a global election. We now retain an elected receiver and reuse it for subsequent inventory broadcasts. A broadcast from this known source costs $R$, while an election costs $B=\Theta(R\log n)$. Thus $\Oh(\log n)$ probe-and-broadcast repetitions fit within $\Oh(B)$ rounds and can secure the receiver's entire closed in-neighborhood.

Clearing has private inventories, a secured ledger, clocks, and recording. The later survivor and reception-cover guarantees retain the reception-score execution of Section~\ref{sec:reception}.

Election makes the receiver known; later broadcasts use its fresh snapshots without another search. Secured receivers still receive data and relay control.

\begin{lemma}[Clearing an elected receiver]\label{lem:clearreceiver}
Suppose $F$ is common and secured and the identifier of a present vertex $v$ is common. A deterministic procedure $\Clear(v)$ uses only $n,N,v,F$ and, in $\Oh(B)$ rounds, produces a common secured ledger $F'$ satisfying
\begin{equation}\label{eq:clearsound}
 F\cup\NB(v)\cup\{v\}\subseteq F'.
\end{equation}
It need not discover $\NB(v)$ or its size.
\end{lemma}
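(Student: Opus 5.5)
Since $v$ is common, $\Clear(v)$ can run a fixed number of rounds without any further election. Each round has three stages.

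\emph{Probe.} Freeze $A=V\setminus F$ and run the amplified reception certificate of Lemma~\ref{lem:probe}, with contender bound $\kappa=n$ and amplification $Q=2$, on the full physical network. This costs $\Oh(n\log N)=\Oh(R)$. Afterwards $v$ has heard at least half of its active in-neighbors and freezes a fresh snapshot $S_v$ of its inventory.

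\emph{Broadcast.} Broadcast $S_v$ from the known single source $v$ using the generalized broadcast of~\cite{blind}, padded to $R$ rounds. The source is common, so no key search is needed.

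\emph{Update.} Every vertex performs the common update $F\leftarrow F\cup S_v$.

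The snapshot always contains $v$ itself and every heard neighbor. So after the first round $v\in F'$, and each round at least halves $d_A(v)$. Since $d_A(v)\le n-1$ at entry, after $\lceil\log_2 n\rceil+1$ rounds we have $d_A(v)=0$, which gives $\NB(v)\subseteq F'$ and hence~\eqref{eq:clearsound}.

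For the cost, the number of rounds is fixed by $n$ alone. This matters: the procedure never learns $\indeg(v)$, and stopping early would need an extra OR test, which would also cost only $R$. The total is $\Oh(L(R+n\log N))=\Oh(LR)=\Oh(B)$ by~\eqref{eq:B}.

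For commonality, $F'$ is updated only at common broadcast boundaries. The broadcast of~\cite{blind} delivers $S_v$ to every vertex within $R$ rounds from the arbitrary source $v$, so every vertex applies the same update. This preserves the invariant~\eqref{eq:inv}.

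The main obstacle is the probe step. We need an argument, like the one for Lemma~\ref{lem:probe}, that $v$ hears at least half of its current active in-neighbors even when $v$ itself is active, given the half-duplex constraint, and even though neighbors of $v$ that were secured earlier are now dormant. The plan is to include $v$ in the closed contender set exactly as Lemma~\ref{lem:probe} does, so that $v$ listens in every isolation slot.

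A second subtlety is that the broadcast must credit the snapshot frozen at probe end, not later knowledge. Otherwise the common update would not match what was actually disseminated.

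Finally, the schedule uses only $n$, $N$, the identifier of $v$, and the frozen flags derived from $F$. It therefore never needs to discover $\NB(v)$ or its size.
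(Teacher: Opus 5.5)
Your proposal is correct and is essentially the paper's proof. Both repeat a probe at contender bound $n$, a known-source $R$-round broadcast of $v$'s frozen snapshot, and a common ledger update; the paper uses amplification $4$ with $\lceil\log_4 n\rceil$ repetitions, whereas your amplification $2$ with $\lceil\log_2 n\rceil+1$ repetitions works equally well, as the appendix notes for any fixed $q_0\ge2$. The half-duplex issue you flag as the main obstacle is already settled by Lemma~\ref{lem:probe} itself, since its closed contender set contains an active receiver and dormant vertices never send data.
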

\begin{proof}
Set $t=\lceil\log_4 n\rceil$. Each of $t$ repetitions has the following common order. First freeze the current sender set $A=V\setminus F$ and run Lemma~\ref{lem:probe} with contender bound $n$ and amplification $4$. Then $v$ saves its actual inventory $H_v=K_v$ and initiates an $R$-round broadcast of that saved inventory. At the completed broadcast boundary every vertex updates $F\leftarrow F\cup H_v$. The next repetition uses this refreshed active set. The known source $v$ needs no new election. Control relays use the full physical network, including vertices whose sources are secured.

For $i=0,\ldots,t-1$, let $A_i$ be the active set at repetition $i$ and $d_i=|\NB(v)\cap A_i|$; let $d_t$ denote the degree after the last update. The probe hears at least $3d_i/4$ distinct active in-neighbors. Every heard packet contains its sender's own original rumor, so all those source identifiers belong to $H_v$. The completed broadcast secures them, and therefore
\begin{equation}\label{eq:clearcontraction}
 d_{i+1}\le\lfloor d_i/4\rfloor.
\end{equation}
The saved inventory also contains $v$'s own rumor, and any additional genuine knowledge can only strengthen the ledger update. Iterating~\eqref{eq:clearcontraction} gives $d_t\le(n-1)/4^t<1$, hence $d_t=0$. Every original in-neighbor is therefore secured, as is $v$ itself after the first broadcast. Thus repeated certified removals establish~\eqref{eq:clearsound}.

The probe takes $\Oh(n\log N)$ rounds. Each known-source broadcast takes $R$, so the total is
\[
 \Oh\!\left(t[n\log N+R]\right)=\Oh(R\log n)=\Oh(B).
\]
All decisions use prescribed common boundaries. Each invocation resets its private control state and clock. Only the designated saved inventory receives ledger credit, even when transmitted packets append additional knowledge. If $F=V$ earlier, the remaining prescribed time may be idle.
\end{proof}

For the original closed neighborhoods $S_v=\NB(v)\cup\{v\}$, recall that $t_D$ counts distinct sets $S_v$ with $|S_v|>D$. Define
\begin{equation}\label{eq:clearcharge}
 \chi_D=\min\{\lfloor n/D\rfloor,t_D\}.
\end{equation}
Then $\chi_D\le t_D\le h_D$. Numerically, $\chi_D=\rho_D^{\rm ex}$: clearing attains the exact-pruning class charge using selective schedules. The two procedures can nevertheless produce different ledgers.

For fixed public $D$, procedure $\CPrune(D)$ repeats the following phase. Probe at contender bound $n$ with amplification $4$, freeze scores and inventories, and search all keys for the maximum score $b$. If $b<D$, return before broadcasting a snapshot. Otherwise broadcast the winner's saved inventory, update $F$, and call $\Clear$ on that winner. Return immediately when $F=V$. The probe and key search are independent of $D$, so all thresholds share complete phases until their stopping comparisons differ.

\begin{lemma}[Complete-neighborhood charging]\label{lem:clearcharge}
Procedure $\CPrune(D)$ has at most $\chi_D$ successful clearing phases, takes $\Oh((1+\chi_D)B)$ rounds, and returns with $d_A(w)<2D$ for every physical receiver $w$. The charge applies to every prefix whose phase-entry maxima are at least $D$. From an entry population $a$, the source-budget part is $\lfloor a/D\rfloor$.
\end{lemma}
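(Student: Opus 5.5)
We follow the template of Lemma~\ref{lem:charge}, but replace the two-step contraction with the one-shot guarantee of Lemma~\ref{lem:clearreceiver}. There are four tasks: fix the phase structure and its cost, prove the stopping certificate, bound the number of successful phases in two independent ways, and check that the bound holds for every qualifying prefix.

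\emph{Phase cost and stopping certificate.} Each phase consists of four steps:
\begin{enumerate}
\item a probe with contender bound $n$ and amplification $4$, costing $\Oh(n\log N)$ by Lemma~\ref{lem:probe};
\item a $\MaxSelect$ key search;
\item possibly one inventory broadcast;
\item possibly one call to $\Clear$.
\end{enumerate}
Each step costs $\Oh(B)$, using \eqref{eq:B} and Lemma~\ref{lem:clearreceiver}, so one phase costs $\Oh(B)$.

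There is exactly one failing phase, namely the final one with $b<D$ (or the immediate return when $F=V$). Hence the total is $\Oh((1+\#\text{successes})B)$.

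At the failing phase every frozen score satisfies $b_w\le D-1$. By \eqref{eq:probe} with $Q=4$ we get $d_A(w)\le\tfrac43(D-1)<2D$, and this holds for every physical receiver, dormant ones included. Nothing is broadcast after this search, so $A$ is unchanged at return and the certificate persists.

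\emph{Charging successes.} We give two separate bounds, which together yield $\chi_D$.

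\emph{Source budget.} Consider a successful phase, so its winner $w$ has $b\ge D$. The winner heard at least $D$ distinct active in-neighbors. Their original rumors lie in the saved inventory of $w$, so the designated broadcast secures at least $D$ sources that were active at phase entry. Successes therefore secure disjoint batches of size at least $D$, giving at most $\lfloor a/D\rfloor\le\lfloor n/D\rfloor$ successes from entry population $a$.

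\emph{Class budget.} The winner's original closed neighborhood $S_w$ satisfies $|S_w|\ge 1+b\ge D+1>D$. After $\Clear(w)$, \eqref{eq:clearsound} gives $S_w\subseteq F$, and since $F$ only grows, this inclusion remains true forever. Now suppose a later phase elects a winner $w'$ with $S_{w'}=S_w$. Then $d_A(w')=0$, so its score is $0<D$, and the phase fails. Thus each distinct closed neighborhood of size greater than $D$ supports at most one success, giving at most $t_D$ successes.

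The two bounds count the same events, so the number of successes is at most their minimum $\chi_D$.

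\emph{Prefix claim.} Neither argument uses the global threshold $D$ except through the winners' scores at phase entry. Both therefore apply to any prefix of the common execution in which every phase-entry maximum is at least $D$. This gives the prefix statement needed later for adaptation.

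\emph{Main obstacle.} The delicate step is the class budget. We must make sure that ``securing $S_w$'' refers to the original neighborhood rather than the current active one. We must also make sure the ledger credited by $\Clear$ is common and monotone, so that a repeated class really has active degree zero at every receiver in it, not just at $w$. We also need to verify that the winner's broadcast before $\Clear$ is not required for \eqref{eq:clearsound}: Lemma~\ref{lem:clearreceiver} itself secures $v$ and $\NB(v)$. The earlier broadcast only contributes to the source budget.
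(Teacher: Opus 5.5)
Your proposal is correct and follows the paper's proof essentially step for step. Both arguments use the same disjoint-batch source budget, charged only to the $D$ heard active neighbors. Both use the same class budget, in which \eqref{eq:clearsound} forces zero active in-degree on every later member of the winner's original closed neighborhood. Both use the same amplification-$4$ stopping certificate $d_A(w)\le\frac43(D-1)<2D$, with $\Oh(B)$ rounds per phase.
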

\begin{proof}
If $b\ge D$, the winner's saved snapshot contains the original rumors of at least $D$ distinct sources active at that phase's entry. Those sources are secured by the first completed inventory broadcast. The winner itself need not be active: its own source may already belong to $F$. The charge uses only the $D$ heard active neighbors, whose source tags are outside the entry ledger. It therefore applies equally to active and dormant winners.

At the next phase, every source just credited is absent from the refreshed sender set. Its rumor may still be carried in aggregate packets, but it cannot contribute another immediate-sender identity. Thus the batches charged to first broadcasts are disjoint across all phases, even if inventories overlap or control has disseminated additional uncredited rumors. Charging at least $D$ new ledger entries per phase gives $\lfloor a/D\rfloor$ successes from an entry population $a$, and $\lfloor n/D\rfloor$ from the original state.

For the class budget, a winner with score at least $D$ has original in-degree at least $D$, so its original closed neighborhood $S$ has size greater than $D$. Lemma~\ref{lem:clearreceiver} secures every source in $S$ before this phase ends. Every other receiver with the same closed neighborhood thereafter has active in-degree zero. Thus this class can never justify another positive-score phase. At most $t_D$ classes can be charged. The source and class budgets count the same phases, proving~\eqref{eq:clearcharge}.

At the stopping search, $b_w\le D-1$ for every $w$. The amplification-$4$ certificate gives
\[
 d_A(w)\le\frac{4}{3}b_w\le\frac{4}{3}(D-1)<2D.
\]
Every successful phase, including its first probe, election, inventory broadcast, and complete clearing, costs $\Oh(B)$. The final probe and maximum-key search also cost $\Oh(B)$. No vertex tests equality of neighborhoods, and no step scans the identifier universe.
\end{proof}

Table~\ref{tab:clearingcost} accounts for the whole phase: the initial selected snapshot supplies at least $D$ new origins, and subsequent known-source broadcasts secure the entire neighborhood.

\begin{table}[t]
\centering\small
\caption{Cost of a positive-score clearing phase. Set $t=\lceil\log_4 n\rceil$. All ledger updates occur after their designated broadcasts complete.}\label{tab:clearingcost}
\renewcommand{\arraystretch}{1.08}
\begin{tabular}{@{}>{\raggedright\arraybackslash}p{0.56\columnwidth}>{\raggedright\arraybackslash}p{0.35\columnwidth}@{}}
\toprule
Operation & Round budget\\
\midrule
Probe; freeze scores and snapshots & $\Oh(n\log N)$\\
Search all keys for one winner & $\Oh(R\log n)$\\
Broadcast the selected snapshot & $R$\\
$t$ fresh probes at the known receiver & $\Oh(tn\log N)$\\
$t$ broadcasts from that receiver & $tR$\\
\midrule
Complete phase, with common padding & $U=\Theta(B)$\\
\bottomrule
\end{tabular}
\end{table}

A successful phase uses its score to charge new sources and a neighborhood class. A stopping search instead certifies residual interference at every receiver. It ends before the inventory broadcast because completion needs only that certificate.

\begin{theorem}[Clearing-profile adaptation]\label{thm:clearadaptive}
For every polynomial identifier range, one deterministic protocol using only common $n,N$ completes gossiping in $T$ rounds, where
\begin{equation}\label{eq:clearprofile}
 \begin{aligned}
 T&=\Oh\!\left(\min_{1\le D\le n}\widehat G_D\right),\\
 \widehat G_D&=nDL^{3/2}\ell^{1/2}+n(1+\chi_D)L^2\ell.
 \end{aligned}
\end{equation}
In particular, it attains the numerical exact-pruning profile budget of Theorem~\ref{thm:profile} without an $N$-slot probe and without taking $D$, degrees, or neighborhood classes as inputs.
\end{theorem}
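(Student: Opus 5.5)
The plan is to copy the adaptive controller of Theorem~\ref{thm:adaptive}, replacing each probe--$\MaxSelect$ iteration by one complete $\CPrune$ phase. Each phase has four parts: an amplification-$4$ probe, a key search for the maximum score $b$, the winner's inventory broadcast, and $\Clear$ on the winner. Every phase is padded to the common length $U=\Theta(B)$ given in Table~\ref{tab:clearingcost}. The controller keeps a common elapsed counter $P$ and a cap $E$, initially $n$. After each search every vertex learns $b$; the controller sets $E\leftarrow\min(E,b+1)$ and $P\leftarrow P+U$. It stops when $F=V$, when $b=0$, or when $P\ge C_E$, where $C_D$ is the completion budget~\eqref{eq:CD}. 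A stopping search does not broadcast. After stopping, the protocol runs Lemma~\ref{lem:complete} with $K=\min(n,2E)$ and then replays the whole first pass using Lemma~\ref{lem:replay}.

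Correctness follows from the probe certificate with $Q=4$. At every search, each $w$ satisfies $d_A(w)\le\frac43 b\le 2b$, and later broadcasts only lower active degrees. So the invariant $d_A<2E$ holds throughout, and Lemma~\ref{lem:complete} applies with $K=\min(n,2E)$. Each positive-score phase secures a new source, so there are at most $n+1$ phases. If $b=0$, then every active degree is zero; by the strong-connectivity argument of Lemma~\ref{lem:complete} this gives $A=\varnothing$ or quasi-gossiping. Replay then completes gossiping. All decisions rest on frozen OR tests at common boundaries, so only $n,N$ are inputs.

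For the time bound, fix any comparator $D$. The probe and key search do not depend on $D$, so the controller's execution and $\CPrune(D)$ share every phase up to the first search with $b<D$. By Lemma~\ref{lem:clearcharge}, which holds for every prefix whose phase-entry maxima are at least $D$, that prefix contains at most $\chi_D$ successful phases. Two cases follow, exactly as in Theorem~\ref{thm:adaptive}:
\begin{itemize}
\item If the controller stops inside the prefix, then $P\le U\chi_D$, and completion costs at most $C_E\le P$.
\item Otherwise, the first phase with $b<D$ brings $P$ to at most $U(\chi_D+1)$ and sets $E\le D$. Every later phase is entered with $P<C_E\le C_D$, using that $C_D$ is monotone in $D$.
\end{itemize}
Either way the first pass costs at most $\max\{U(\chi_D+1),C_D+U\}+C_D=\Oh(\widehat G_D)$, since $U=\Theta(nL^2\ell)$ and $C_D=\Theta(B+nDL^{3/2}\ell^{1/2})$. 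Replay doubles this, and taking the minimum over $D$ gives~\eqref{eq:clearprofile}. The final claim follows because $\chi_D=\rho^{\rm ex}_D$, so $\widehat G_D=G_D^{\rm ex}$, with no $N$ term.

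The main obstacle is the prefix-sharing step. A stopped phase must not have broadcast or cleared anything, so the shared trace up to the first $b<D$ is literally the trace of $\CPrune(D)$. One point needs checking: Lemma~\ref{lem:clearcharge} counts positive phases with $b\ge D$, while the controller may continue with phases where $b<D$ after $E$ has dropped. Those later phases are charged to the $C_D$ budget, not to $\chi_D$, and this must be verified. A second check is that $\Clear$ keeps its own clock inside the padded $U$, so the timing used by the charge matches the timing of the controller.
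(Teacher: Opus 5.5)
Your proposal is correct and follows the paper's proof essentially step for step. It uses the same padded clearing-phase controller with the nondecreasing budget $C_e$ from~\eqref{eq:CD}, and the same invariant $d_A(w)\le\frac43 b<2(b+1)$, which gives $d_A<2e$. It then makes the same comparison against the first below-$D$ maximum via Lemma~\ref{lem:clearcharge}, yielding $\max\{U(\chi_D+1),C_D+U\}+C_D$.

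Your one deviation is stopping right after the search without broadcasting, whereas the paper always completes a positive-score phase, including $\Clear$, before testing $P\ge C_e$. This is harmless. Contrary to what your last paragraph suggests, it is also not needed for prefix sharing: the charge only uses the phases before the first search with $b<D$, and those coincide with the phases of $\CPrune(D)$ either way.
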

\begin{proof}
\emph{Fixed-threshold benchmark.} For fixed $D$, Lemma~\ref{lem:clearcharge} followed by Lemma~\ref{lem:complete} with $K=\min(n,2D)$ reaches quasi-gossiping in $\Oh(\widehat G_D)$ rounds. Record from the original rumor state and replay every action of this entire first pass, including election and clearing control. Lemma~\ref{lem:replay} then gives full gossiping at twice the first-pass cost.

\emph{Controller.} Pad each complete successful clearing phase to a common length $U=\Theta(B)$. Choose its constant large enough for the initial probe, unconditional maximum selection, and all $t$ repetitions of $\Clear$. Let $C_e$ be the nondecreasing completion budget in~\eqref{eq:CD}. It bounds Lemma~\ref{lem:complete} with $K=\min(n,2e)$.

Initialize $F=\varnothing$, $P=0$, $e=n$, and the recording. At the start of each phase, return if $F=V$. Otherwise probe with amplification $4$ and contender bound $n$, freeze all scores and snapshots, and run the unconditional maximum-key search. This search ends before the selected inventory is broadcast. If the maximum $b=0$, then $A=\varnothing$: otherwise an active source has an out-neighbor by strong connectivity, and the positive-degree reception certificate gives a positive score somewhere. For $b>0$, broadcast the selected snapshot, run $\Clear$ on its source, and finish the padding to $U$ rounds. Then set
\[
 P\leftarrow P+U,\qquad e\leftarrow\min\{e,b+1\}.
\]
Stop if $F=V$ or $P\ge C_e$. Otherwise begin the next phase. At a nonterminal handoff, invoke Lemma~\ref{lem:complete} and then replay the complete first pass.

\emph{Invariant and termination.} The phase-entry maximum satisfies $d_A(w)\le4b/3<2(b+1)$ everywhere. All subsequent ledger updates in that phase only shrink $A$. Together with the previous cap this proves the invariant $d_A(w)<2e$ at every phase boundary. Thus completion uses a valid bound. Every positive phase secures a new source, so the reference execution that ignores the handoff rule has at most $n$ phases. Its probes, searches, clearing repetitions, and padding are independent of any analytical threshold.

\emph{Comparison with a threshold.} Fix $D$. Before the reference execution's first maximum below $D$, Lemma~\ref{lem:clearcharge} allows at most $\chi_D$ phases. If the actual controller stops within that prefix, its pruning cost is at most $U\chi_D$; its completion budget is at most the elapsed cost by the stopping test, unless $F=V$ and completion is unnecessary. This case costs $\Oh((1+\chi_D)U)$ including replay. A reference execution that terminates in that prefix is covered by the same estimate.

Otherwise the first below-$D$ maximum begins phase $j\le\chi_D+1$. If the active set is nonempty, completing that phase makes $e\le b+1\le D$, while $P\le U(\chi_D+1)$. Every later entered phase starts with $P<C_e\le C_D$ and adds exactly $U$. Consequently the eventual pruning cost is at most
\[
 \max\{U(\chi_D+1),C_D+U\},
\]
and the completion cost is at most $C_D$. The zero-score or fully secured cases require no completion and satisfy the same bound. Hence the total, including full replay, is $\Oh((1+\chi_D)B+C_D)=\Oh(\widehat G_D)$. This holds for every comparator $D$ on the same execution, proving~\eqref{eq:clearprofile} without a threshold pool.
\end{proof}

\subsection{Comparison with every reference prefix}
The controller also admits a direct comparison with every boundary of its reference execution. Let $e_j$ and $F_j$ be the common state after $j$ complete phases, with $j=0$ allowed. Put $C_j=C_{e_j}$ when $F_j\ne V$ and $C_j=0$ at a fully secured state. These budgets are nonincreasing. If the actual controller stops at phase $i\le j$, its completion is paid by $iU$, or is unnecessary when $F_i=V$. Including replay gives $T\le4jU$. If $i>j$, the preceding stopping test gives
\[
 (i-1)U<C_{i-1}\le C_j.
\]
The final phase adds $U$ and completion costs at most $C_j$, so $T<4C_j+2U$. Consequently,
\begin{equation}\label{eq:clearprefix}
 T\le4\min_j\max\{jU,C_j\}+2U.
\end{equation}
Reference states beyond the actual handoff are defined by continuing the same prescribed phases. They are analytical comparators only; the additive $2U$ pays for crossing a completion budget by one complete phase.

For a threshold $D$, the first below-$D$ phase supplies a comparator with $j\le\chi_D+1$ and $e_j\le D$, unless the reference has already secured all sources. Substitution in~\eqref{eq:clearprefix} recovers Theorem~\ref{thm:clearadaptive}. Thus the threshold analysis is a specialization of one comparison between increasing elapsed time and decreasing completion cost.

\subsection{Relation to the earlier profile bound}
The numerical profiles clarify the improvement. For every integer $D$,
\begin{equation}\label{eq:compareprofiles}
 \chi_D\le\rho_D\le H_\star\chi_D,
 \qquad H_\star=\Oh(\ell).
\end{equation}
Every contributing weight $p_D$ is at least one, so $\Phi_D\ge h_D\ge t_D$ and $\Psi_D\ge t_D$. Taking the minimum with $s_D=\lfloor n/D\rfloor$ proves the first inequality. For the second,
\[
 \rho_D\le\min\{s_D,H_\star t_D\}
 \le H_\star\min\{s_D,t_D\}=H_\star\chi_D.
\]
These are algebraic comparisons of the defined profiles. Lemma~\ref{lem:charge} separately establishes that $\rho_D$ bounds actual successes, using the persistent cap along each winner subsequence.

The completion terms of $G_D$ and $\widehat G_D$ are identical. Equation~\eqref{eq:compareprofiles} therefore compares their complete degree-profile budgets:
\[
 \widehat G_D\le G_D\le H_\star\widehat G_D.
\]
Adaptive accuracy allows a neighborhood class to qualify at most $H_\star$ times; clearing secures that whole class within one phase of the same asymptotic cost. Under~\eqref{eq:concentration}, the earlier charge gives $\rho_D=\Oh(\dbar\ell)$ at $D=\lceil C_0\dbar\rceil$, leading to $\Oh(mL^2\ell^2)$. Clearing replaces this with $\chi_D=\Oh(\dbar)$ and gives $\Oh(mL^2\ell)$. If the chosen threshold exceeds $n$, taking $D=n$ gives the same bounds because $m=\Omega(n^2)$.

The inequalities compare budgets on different executions. Reception-score pruning may leave a small population or a useful cover, so both procedures remain in the portfolio. A constant share preserves~\eqref{eq:clearprofile}; each completion and replay uses its own ledger. The following sections develop these complementary residual guarantees.

\section{Path Selectors and Surviving Sources}\label{sec:path}
Completion can exploit both residual interference and the number $a=|V\setminus F|$ of unsecured sources. Any rumor held by a dormant vertex is already quasi-disseminated. Path selectors~\cite{grx} move the remaining rumors along routes whose combined interference fits one schedule scale. The procedures below apply to any common secured ledger; Section~\ref{sec:adaptivepath} combines them with reception-score pruning.

For a fixed data-carrier set $C\subseteq V$, a carrier path $P=(v_0,\ldots,v_t)$ is simple and has all vertices except possibly $v_t$ in $C$. The path's closed carrier neighborhood is
\begin{equation}\label{eq:pathnb}
 Y_C(P)=C\cap\bigcup_{i=1}^{t}(\NB(v_i)\cup\{v_i\}).
\end{equation}
Only $C$ sends data; other vertices listen. The active-path case uses $C=A$.

Path selection isolates senders in chronological order, carrying a rumor along successive edges in one invocation. The scale must cover both outside interfering carriers and active receivers, which are included in $Y_C(P)$. Fixing $C$ preserves this neighborhood throughout the invocation.

Even a two-edge path can require a large scale if many other carriers send into its receivers. Conversely, a small closed carrier neighborhood permits an ordered subsequence that carries knowledge along the whole path. Vertices execute the common schedule without discovering the path or subsequence. Since packets carry all current knowledge, the selector's guarantees hold simultaneously for all such paths.

\begin{lemma}[Path selection with fixed carriers]\label{lem:path}
Freeze $C$. For $1\le k\le n$, a schedule of length
\begin{equation}\label{eq:pathlatency}
 S_k=\Oh(k^2\log N)
\end{equation}
delivers all entry rumors from $Y_C(P)$ to the endpoint of every carrier path with $|Y_C(P)|\le k$. Only $C$ sends data.
\end{lemma}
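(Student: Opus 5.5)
We construct the path schedule from $(k,k)$-permutation selectors, in the style of~\cite{grx,kuschner}. In this variant the required property is \emph{ordered isolation}. Fix a set $Y\subseteq[N]$ with $|Y|\le k$ and an ordering $y_1,\ldots,y_s$ of a subset of $Y$. We want slots $\tau_1<\tau_2<\cdots<\tau_s$ such that in slot $\tau_i$ the transmitting set meets $Y$ exactly in $\{y_i\}$.

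A random construction gives this. Use $\Oh(k^2\log N)$ slots, each identifier transmitting independently with probability $1/k$. Split the schedule into $s\le k$ consecutive blocks, each of length $\Theta(k\log N)$ times a large constant. In block $i$, a single slot isolates $y_i$ from $Y$ with probability at least $1/(ek)$. So the block fails with probability at most $N^{-ck}$. The number of pairs (set $Y$, ordered subsequence) is at most $N^{k}\,k!\,2^k\le N^{\Oh(k)}$, so a union bound leaves positive probability. Hence the lexicographically first schedule passing the finite test exists and has length $S_k=\Oh(k^2\log N)$, as in \eqref{eq:pathlatency}. Smaller sets are handled by padding, as in Section~II. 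Padding is harmless here because the required isolations all concern actual members of $Y$, and isolation in a superset implies isolation in the subset.

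Delivery is proved by induction along the path. Fix a carrier path $P=(v_0,\ldots,v_t)$ with $|Y_C(P)|\le k$ and a source $x\in Y_C(P)$. Let $i$ be the least index with $x\in \NB(v_i)\cup\{v_i\}$. If $x=v_i$, the rumor starts at $v_i$. Otherwise $x\in\NB(v_i)\cap C$, and we prepend the hop $x\to v_i$. The ordered sequence of senders is $x$ (when $x\neq v_i$) followed by $v_i,v_{i+1},\ldots,v_{t-1}$. All of these are carriers in $Y_C(P)$, since $v_j\in C$ for $j<t$ and $v_j\in \NB(v_{j+1})$.

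We apply ordered isolation with $Y=Y_C(P)$ to get increasing slots. When sender $w$ is isolated at its slot, its receiver $v_{j+1}$ (or $v_i$) has all of its carrier in-neighbors, together with itself if it is a carrier, inside $Y_C(P)$. Therefore $w$ is the unique transmitter that $v_{j+1}$ can hear. Moreover, $v_{j+1}$ listens, because $v_{j+1}\in Y$ when it is a carrier and is silent in that slot; otherwise it only listens. By the induction hypothesis, $w$ holds $x$'s entry rumor at that slot, since knowledge is retained and packets carry all current knowledge. So $v_{j+1}$ learns the rumor, and the endpoint $v_t$ learns it after the last slot.

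One point needs care in the ordered sequence: the path is simple, so each sender appears only once. The prepended $x$ is not among $v_i,\ldots,v_{t-1}$ by the minimality of $i$, unless $x=v_j$ for some $j>i$. In that case we start from $v_j$ directly instead, choosing the latest occurrence of $x$ on the path. Because $C$ is frozen and only $C$ sends data, $Y_C(P)$ does not change during the invocation. The argument holds simultaneously for all paths and sources, since the schedule is oblivious.

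The main obstacle is the union bound over ordered subsequences. It must give $\Oh(k^2\log N)$ rather than $\Oh(k^2\log N\cdot\log k)$. The factor $k!\le k^k\le N^{k}$ is absorbed because $k\le n$ and $N=n^{\Oh(1)}$, so $\log k=\Oh(\log N)$. With that accounted for, the per-block failure probability $N^{-ck}$ dominates.
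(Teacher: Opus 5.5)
Your delivery argument is essentially the paper's. The paper pads $Y_C(P)$, extends the distinct path senders to a permutation, and uses that each receiver's carrier in-neighbors, and the receiver itself if it is a carrier, lie in $Y_C(P)$. It then prepends an arc or starts on a suffix for a general source. The difference is that the paper takes the $\Oh(k^2\log N)$ permutation selector from Kuschner et al.\ (Theorem~1) rather than constructing it.

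\textbf{The gap is in your construction of that selector.} A block of $ck\log N$ random slots isolates a fixed $y_i$ from $Y$ in each slot with probability at most $1/k$. So it misses $y_i$ with probability at least $(1-1/k)^{ck\log N}\ge4^{-c\log N}=N^{-2c}$ for $k\ge2$.
\begin{itemize}
\item The exponent does not grow with $k$, so your claimed block failure bound $N^{-ck}$ is false.
\item A per-configuration failure probability of order $kN^{-\Theta(c)}$ cannot survive a union bound over $N^{\Theta(k)}$ configurations.
\item The blocked design itself is at fault, not just the constant. Since $y_1$ may be any member of $Y$, block~1 alone must isolate every member of every $k$-set. It must therefore be a strong selector, which requires length $\Omega(\min\{N,k^2\log N/\log k\})$.
\item Blocks of length $\Theta(k^2\log N)$ would make the estimate true, but they give total length $\Theta(k^3\log N)$.
\end{itemize}
Your closing paragraph places the obstacle at the $k!$ factor, which is harmless. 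The real obstacle is this per-block estimate.

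\textbf{Repair: drop the blocks and analyse the whole schedule.}
\begin{itemize}
\item Pad $Y$ to a $k$-set and extend the ordered senders to a permutation. Then only the at most $N^k$ ordered $k$-tuples of distinct identifiers matter.
\item Use $T=ck^2\log N$ independent random slots. Conditional on the past, each slot isolates the currently awaited element from $Y$ with probability at least $1/(ek)$.
\item Hence the number of in-order advances stochastically dominates a binomial variable with $T$ trials and success probability $1/(ek)$, whose mean is $ck\log N/e$.
\item A Chernoff bound shows that fewer than $k$ advances occur with probability at most $\exp(-ck\log N/(8e))=N^{-\Omega(ck)}$. This beats the $N^k$ tuples for large $c$.
\end{itemize}

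\textbf{A smaller point in the delivery step.} Take $i$ to be the least index with $i\ge1$. If $x\in\NB(v_0)$ and you prepend an arc into $v_0$, that receiver's carrier in-neighbors need not lie in $Y_C(P)$, which is defined only through $v_1,\ldots,v_t$. Such an $i\ge1$ always exists because $x\in Y_C(P)$.
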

\begin{proof}
By~\cite[Theorem 1]{kuschner}, a permutation selector of size $\Oh(k^2\log N)$ isolates every $k$-set in each prescribed order. This implements path selection~\cite{grx}. Fix $Y=Y_C(P)$ and pad $Y$ to a $k$-set of identifiers. Extend the distinct path senders $v_0,\ldots,v_{t-1}$, in that order, to a permutation of this set. The selector isolates them in successive slots. Every carrier in-neighbor of the next receiver belongs to $Y$; a carrier receiver itself also belongs to $Y$. Consequently each designated reception succeeds in the half-duplex model, carrying all retained entry rumors forward.

For any $u\in Y$, either $u$ lies on $P$ or $u$ has an arc into $P$. Prepend that arc to the appropriate suffix, or start at $u$ when $u$ already lies on the suffix. The resulting simple carrier path has neighborhood contained in $Y$. Applying the same argument delivers the source's entry rumors to the endpoint. All guarantees hold simultaneously because the selector handles every permutation and packets retain all values. For $k=1$, one slot suffices.
\end{proof}

A small surviving population favors direct path selection. Small contender sets favor repeated spreading followed by collection. The third option combines further pruning with path doubling. All successful collections secure disjoint members of the same completion-entry population, bounding their total cost.

\begin{lemma}[Completion sensitive to surviving sources]\label{lem:survivors}
Suppose $F$ is common and secured, $a=n-|F|$, and a common integer $1\le K\le a$ bounds every closed active contender set. Quasi-gossiping can be completed with no identifier round robin in $\Oh(B+\mathcal C(a,K))$ rounds, with second-pruning budget
\[
 \mathcal J(a)=(Ba)^{2/3}[\Lambda\log(a+1)]^{1/3}.
\]
The completion menu is
\begin{equation}\label{eq:completionmenu}
 \mathcal C(a,K)=\min\{a^2\Lambda,K\sqrt{Ba\Lambda},\mathcal J(a)\}.
\end{equation}
If $a=0$, gossiping is already complete.
\end{lemma}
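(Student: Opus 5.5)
The plan is to prove the three menu entries separately. A common controller then chooses the smallest one. Since $a$, $K$, $B$, $\Lambda$ are all common, each vertex can evaluate the three formulas locally and run only the branch with the smallest value. The additive $B$ pays for one final collection or selection and for padding. The case $a=0$ is immediate: $F=V$ means every inventory is complete.

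\emph{First entry, $a^2\Lambda$.} Freeze $C=A$ and apply Lemma~\ref{lem:path} with $k=a$. Fix a rumor $r$ that is not quasi-disseminated. Its source is active. By strong connectivity there is a simple path to a first dormant vertex, or, when there is none, to any target. All vertices of this path except the endpoint are active. Hence its closed carrier neighborhood lies in $A$, so $|Y_A(P)|\le a$. One invocation of length $S_a=\Oh(a^2\Lambda)$ therefore delivers $r$ either to a dormant holder, which makes $r$ quasi-disseminated, or to every target. No collection is needed.

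\emph{Second entry, $K\sqrt{Ba\Lambda}$.} Repeat the proof of Lemma~\ref{lem:complete}, with one change: $n$ is replaced by $a$ in the collection count. Run $q$ strong-selector periods of length $\sigma=\Oh(K^2\Lambda)$, then run $\Drain(q+1)$. By Lemma~\ref{lem:drain} the drain costs at most $R+B\lfloor a/(q+1)\rfloor$. The path argument only needs the first $q+1$ vertices to be active, so $q\le a-1$ suffices. When $q=a-1$, collection is unnecessary because every active path has at most $a-1$ edges. Balancing $q\sigma$ against $Ba/q$ gives $\Oh(B+\sigma+\sqrt{Ba\sigma})=\Oh(B+K\sqrt{Ba\Lambda})$. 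The lower term $\sigma$ is dominated, since $K\le a$ and $\sigma\le K\sqrt{a\cdot K^2\Lambda^2}$ is absorbed in the clipped cases, as in Lemma~\ref{lem:complete}.

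\emph{Third entry, $\mathcal J(a)$, and the main obstacle.} This branch is a second pruning followed by path doubling. First run a fresh $\Prune(D')$ from entry population $a$. By Lemma~\ref{lem:charge} its success count is at most $\lfloor a/D'\rfloor$, so it costs $\Oh(B(1+a/D'))$, and afterwards $d_A<2D'$. Next, use path selectors at scale $k$ with neighborhood doubling in the style of~\cite{grx}. Each active path segment whose closed carrier neighborhood has at most $k$ members is traversed in one invocation of length $\Oh(k^2\Lambda)$, using the permutation selectors of~\cite{kuschner}. Interleave these invocations with collections $\Drain$ that secure batches. The key accounting is that all these collections secure disjoint subsets of the same $a$ entry sources. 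Together with the pruning successes, the total broadcast charge is therefore $\Oh(Ba/D')$, not something multiplied over rounds.

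The difficulty is to show that about $\log(a+1)$ doubling stages, each of cost $k^2\Lambda$, suffice. Either a path of at most $k$ active vertices reaches a dormant holder, or else an endpoint inventory gains enough unsecured origins to be collected. With bounded active degree, a prefix of length $\approx k/D'$ has neighborhood at most $k$. So one would set $k\approx D'\cdot(\text{collection threshold})$ and then optimize. With per-stage cost $k^2\Lambda\log(a+1)$ against $Ba/k$, the optimum is $k=(Ba/(\Lambda\log(a+1)))^{1/3}$. This gives $(Ba)^{2/3}[\Lambda\log(a+1)]^{1/3}=\mathcal J(a)$.

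I expect the main obstacle to be justifying the single $\log(a+1)$ factor. The required ordering guarantee must hold within one invocation even though $F$ grows between stages. I would try to handle this by fixing the carrier set at each stage's entry and invoking quasi-dissemination monotonicity to show that progress made in one stage is never lost in later stages.
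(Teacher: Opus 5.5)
Your first two branches match the paper's argument, apart from two slips.
\begin{itemize}
\item When $F\neq\varnothing$, a path to a first dormant vertex can pass through all $a$ active vertices and so have $a$ edges. Collection may therefore be omitted only at $q=a$, as the paper does, not at $q=a-1$.
\item The term $\sigma$ is absorbed because $\sigma=\Oh(K^2\Lambda)\le\Oh(a^2\Lambda)=\Oh(Ba)$, using $B=\Omega(n\Lambda)$, so $\sigma=\Oh(\sqrt{Ba\sigma})$. The inequality you wrote does not show this.
\end{itemize}

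The genuine gap is the third branch, which you leave as an acknowledged obstacle. The only mechanism you offer is that after $\Prune(D')$ a prefix of about $k/D'$ edges has closed carrier neighborhood at most $k$, so one takes $k\approx D'$ times the collection threshold. That is a one-shot bound. Used alone it yields one path pass at scale $k\approx2D't$ followed by $\Drain(t+1)$. This costs roughly $D'^2t^2\Lambda+Ba/t+Ba/D'$, whose optimum is $(Ba)^{4/5}\Lambda^{1/5}$, not $\mathcal J(a)$. Your final balance of $k^2\Lambda\log(a+1)$ against $Ba/k$ presupposes two things at once: the pruning threshold and every collection threshold are $\Theta(k)$, and usable path lengths still double each round. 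That is precisely the statement needing proof. Frozen carriers and quasi-dissemination monotonicity are needed, but they do not supply it.

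The missing idea is an invariant on closed carrier neighborhoods that is maintained by collection, not by the degree bound.
\begin{itemize}
\item \emph{Parameters.} Set $g=\Lambda\log(a+1)$ and $h=\lceil(Ba/g)^{1/3}\rceil$, and let $k$ be the first power of two at least $4h$. If $k\ge a$, direct selection already costs $\Oh(a^2g)=\Oh(\mathcal J(a))$.
\item \emph{Base case.} After $\Prune(h)$, every closed contender set, and hence every one-edge carrier-path neighborhood, has size at most $2h<k$.
\item \emph{Iteration.} Alternate a scale-$k$ pass of Lemma~\ref{lem:path} on the frozen carriers $A_i$ with $\Drain(k/2)$. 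Prove by induction that every $A_i$-carrier path with at most $2^i$ edges has $|Y_{A_i}(P)|<k$.
\item \emph{Inductive step.} Split an $A_{i+1}$-carrier path with at most $2^{i+1}$ edges into two $A_i$-carrier pieces with at most $2^i$ edges. The pass delivers each piece's entire $Y_{A_i}$ inventory to its endpoint. The frozen collection then leaves $|A_{i+1}\cap Y_{A_i}(P_j)|<k/2$, and these two residual sets unite to $Y_{A_{i+1}}(P)$, which therefore has size below $k$.
\item \emph{Termination.} After $\lceil\log_2k\rceil$ iterations no carrier path with $k$ edges survives, since its $k$ distinct senders lie in its neighborhood. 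One final pass then reaches a first dormant holder or every destination.
\item \emph{Accounting.} Because $k/2\ge2h$, every success in the pruning and in all collections secures at least $h$ sources from the same entry population, giving at most $\lfloor a/h\rfloor$ successes in total. Empty $\Drain$ calls stop after one $R$-round test, costing $\Oh(R\log k)=\Oh(B)$ overall. The data cost is $\Oh(k^2\Lambda\log(k+1))=\Oh(k^2g)$.
\end{itemize}
Balancing $Ba/h$ against $k^2g$ gives $\mathcal J(a)$.
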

\begin{proof}
\emph{Direct path selection.} Run Lemma~\ref{lem:path} with $k=a$. If $F\ne\varnothing$, every active source has a simple path to a first dormant vertex, with all internal vertices active and neighborhood contained in $A$. The source's rumor reaches that dormant holder and is quasi-disseminated. If $F=\varnothing$, every simple source--destination path has neighborhood within $A$, so gossiping itself finishes.

\emph{Spreading and collection.} Run $q$ strong-selector periods of length $\sigma=\Oh(K^2\Lambda)$. Each rumor is then globally known, held by a dormant vertex, or held at $q+1$ active vertices on a simple source path. The latter vertices' original rumors reach the endpoint. Freeze $H_v=K_v$ and run $\Drain(q+1)$, securing a holder and thus quasi-disseminating the rumor. For $q=a$ or $q\ge n-1$, omit collection. The cost is
\[
 \Oh(q\sigma+B(1+a/(q+1))).
\]
Set $q=\min\{a,\lceil\sqrt{Ba/\sigma}\rceil\}$. Then $q\sigma\le\sqrt{Ba\sigma}+\sigma$; if collection is used, $Ba/(q+1)\le\sqrt{Ba\sigma}$. Since $K^2\le na$ and $B=\Omega(n\Lambda)$, $\sigma=\Oh(Ba)$, so the rounding term is absorbed in $\Oh(B+K\sqrt{Ba\Lambda})$.

\emph{Second pruning and path doubling.} Put $g=\Lambda\log(a+1)$ and choose
\[
 h=\left\lceil(Ba/g)^{1/3}\right\rceil,
 \qquad k=\text{first power of two at least }4h.
\]
Since $\log_2(a+1)\le a$ and $B=\Omega(\Lambda)$, we have $g=\Oh(Ba)$ and $h=\Theta((Ba/g)^{1/3})$. Dyadic rounding gives $4h\le k<8h$. If $k\ge a$, then $a=\Oh((Ba/g)^{1/3})$, so direct path selection costs $\Oh(a^2g)=\Oh(\mathcal J(a))$.

Otherwise run a second $\Prune(h)$ from the current secured ledger. Our second pruning stage has at most $\lfloor a/h\rfloor$ successful broadcasts, because each secures $h$ distinct sources from this entry population. The stage's final search certifies closed contender sets of size at most $2h<k$. Let $a'\le a$ sources remain. If $a'\le k$, direct path selection costs $\Oh(k^2g)$. Otherwise repeat path selection at scale $k$, followed by $\Drain(k/2)$.

Each doubling iteration alternates a path pass with collection. The pass gathers each covered entry-neighborhood inventory at an endpoint; collection leaves fewer than half a selector scale of those origins unsecured. Two residual neighborhoods therefore fit within one selector scale. The carrier-path argument below also permits a dormant endpoint.

\emph{Doubling invariant.} Let $A_i$ be the carriers before pass $i$, starting with $i=0$ after second pruning. Freeze $A_i$ during that pass. Following~\cite[Section 3.2]{grx}, we prove by induction that every $A_i$-carrier path of at most $2^i$ edges has neighborhood size below $k$; its endpoint may be dormant. The base case follows from the one-edge bound $2h<k$.

After the path pass and $\Drain(k/2)$, let $A_{i+1}\subseteq A_i$ be the remaining carriers. Split any $A_{i+1}$-carrier path $P$ of at most $2^{i+1}$ edges into at most two $A_i$-carrier paths $P_1,P_2$ of at most $2^i$ edges. The pass delivers each piece's entry-neighborhood inventory to its endpoint. The subsequent frozen collection gives
\[
 |A_{i+1}\cap Y_{A_i}(P_j)|<k/2
 \quad\text{for each nonempty piece }P_j.
\]
Their residual neighborhoods unite to $Y_{A_{i+1}}(P)$, whose size is below $k$. This proves the induction step.

At $i=\lceil\log_2 k\rceil$, a surviving length-$k$ path would contradict the invariant because its $k$ distinct senders belong to its closed carrier neighborhood. Thus every remaining carrier path has fewer than $k$ edges and neighborhood size below $k$. One final path invocation, without collection, reaches a first dormant holder from each active source, or every destination when none is dormant.

\emph{Shared selection charge.} All successful inventory selections in both completion stages secure at least $h$ sources from the same entry population; their total is at most $\lfloor a/h\rfloor$. Final empty collection tests cost $R$, not $B$, and total $\Oh(R\log(k+1))=\Oh(B)$. The stopping probe and key search also cost $\Oh(B)$. Data cost is $\Oh(k^2\Lambda\log(k+1))\le\Oh(k^2g)$. Consequently,
\[
 \Oh(B+Ba/h+k^2g)=\Oh(B+(Ba)^{2/3}g^{1/3}).
\]
This is the third branch. Our balanced branch's internal threshold uses only $a,n,N$. All control uses the physical graph; connectivity of the active induced graph is unnecessary.
\end{proof}

\subsection{Interpreting the completion menu}
The completion menu compares three valid procedures: direct path selection uses the remaining population, spreading uses both population and interference, and second pruning trades collection work against the selector scale. Their upper budgets are computable from common parameters, so choosing the minimum requires no structural guess.

Suppressing logarithms, direct completion costs $a^2$, repeated spreading costs $K\sqrt{na}$, and the balanced branch costs $(na)^{2/3}$, in addition to the common control term. Thus $a\le\sqrt n$ already permits nearly linear completion through the direct branch, irrespective of the original degrees. For larger populations the balanced branch improves the dependence on $a$; a sufficiently small contender bound may make spreading cheaper still. This interpretation concerns the handoff population and does not replace $n$ in physical-network control costs.

The logarithmic improvement in the abstract follows from the same balance. The interleaved path selector of~\cite{grx} has length $\Oh(k^2\Lambda\log^2(k+1))$, whereas the permutation selector of~\cite{kuschner} has length $\Oh(k^2\Lambda)$. Allowing $\Oh(\log(a+1))$ doubling iterations, the two path-work budgets at scale $h$ are bounded, respectively, by
\[
 h^2\Lambda\log^3(a+1)\quad\text{and}\quad
 h^2\Lambda\log(a+1).
\]
In either case the shared selection budget is $Ba/h$. Balancing this selection budget with $h^2g$ gives $(Ba)^{2/3}g^{1/3}$. Replacing $g=\Lambda\log^3(a+1)$ by $g=\Lambda\log(a+1)$ therefore reduces this balanced budget by $\log^{2/3}(a+1)$. Both upper budgets use the same control primitive. The factor compares these constructions; the additive $B$ or another completion branch can dominate the final running time.

The source charge is shared across second pruning and all collections: every success consumes a disjoint subset of the same entry population. Empty collections end after their first frozen OR. Thus $\mathcal J(a)$ contains neither a separate degree charge nor a $B\log k$ term.

The third completion branch needs no input bound on residual degree: $K=a$ is always valid. Thus any common secured ledger with $a$ active sources admits quasi-gossiping in $\Oh(B+\mathcal E(a))$, where $\mathcal E(a)=\min\{a^2\Lambda,\mathcal J(a)\}$. This standalone consequence is $\Ot(n+\min\{a^2,(na)^{2/3}\})$. Exact replay must still include the execution that established that ledger; a completion suffix alone is insufficient.

\section{Adaptive Pruning and Path Completion}\label{sec:adaptivepath}
We now apply the completion menu to the adaptive reception-score procedure of Section~\ref{sec:reception}. Its common ledger and contender cap determine a computable completion budget. The controller stops when elapsed pruning time reaches that budget. To analyze this handoff, we compare it with every prefix of one reference execution, whose residual populations remain well defined even after the actual controller stops.

The virtual reference trace uses the adaptive protocol's padded iterations and reset invocation clocks, but continues until $a=0$, followed by an implicit zero-score search. Each positive score secures a source, so our reference trace has $J\le n$ broadcasts. For each $D$, let $a_D$ be the active count at the trace's first below-$D$ search. These are analytical parameters, not inputs. Thresholds share prefixes because the key search is unconditional and snapshots precede the search. For $0\le j\le J$, write $a_j,e_j$ for the reference state after $j$ broadcasts, with $e_0=n$ and $e_j=\min\{e_{j-1},b_j+1\}$.

The handoff compares increasing elapsed time with a decreasing completion budget. An early stop pays for completion through the stopping test; a later stop inherits a budget no larger than at the comparison prefix. This monotonicity permits every reference prefix, including an unreached one, to bound the actual execution. A favorable cover or survivor population can therefore improve the analysis without becoming a protocol input.

\begin{theorem}[Adaptive survivor bound]\label{thm:pathadaptive}
Only common $n,N$ are needed for deterministic gossiping in
\begin{equation}\label{eq:survivorbound}
 \Ot\bigl(\min_{1\le D\le n}\{n(1+\rho_D)+W_D\}\bigr),
\end{equation}
where
\begin{equation}\label{eq:survivormenu}
 W_D=\min\{a_D^2,\ D\sqrt{na_D},\ (na_D)^{2/3}\}.
\end{equation}
The padded implementation has iteration cost $U=\Theta(B)$ and monotone completion budgets $C_j=C(a_j,e_j)$ defined below. Our protocol's total time satisfies
\[
 T\le4\min_j\max\{jU,C_j\}+2U.
\]
Our protocol has no threshold-search slowdown. In particular, $a_D=\Oh(\sqrt{nL}\,\ell)$ and $\rho_D=\Oh(\ell)$ for some $D$ imply $\Oh(nL^2\ell^2)$ rounds. The general $\Ot(n^{4/3})$ guarantee is retained.
\end{theorem}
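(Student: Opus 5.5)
The plan mirrors the clearing controller of Theorem~\ref{thm:clearadaptive} and its prefix comparison~\eqref{eq:clearprefix}, now driven by the adaptive reception-score procedure $\Prune$ of Section~\ref{sec:reception}.

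\emph{Controller.} Pad each probe--$\MaxSelect$--broadcast iteration to a common length $U=\Theta(B)$. Initialize $F=\varnothing$, $P=0$, $e=n$, and start recording. After each iteration set $P\leftarrow P+U$ and $e\leftarrow\min\{e,b+1\}$. Define the completion budget
\[
 C(a,e)=c\bigl(B+\min\{a^2\Lambda,\min(a,2e)\sqrt{Ba\Lambda},\mathcal J(a)\}\bigr)
\]
for $a\ge1$, and $C(0,e)=0$. Both $a$ and $e$ are common: $a=n-|F|$ is read from the ledger, and $e$ from the announced maxima. The controller stops when $a=0$, $b=0$, or $P\ge C(a,e)$. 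It then runs the cheapest branch of Lemma~\ref{lem:survivors} with $K=\min(a,2e)$, followed by exact replay via Lemma~\ref{lem:replay}.

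\emph{Validity.} The invariant $d_A<2e$ follows exactly as in Theorem~\ref{thm:adaptive}, so $K=\min(a,2e)$ is a valid closed-contender bound. The menu entries are nondecreasing in $a$ and $e$. Along the reference trace, $a_j$ and $e_j$ are both nonincreasing, so $C_j=C(a_j,e_j)$ is nonincreasing.

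\emph{Prefix inequality.} Repeat the argument behind~\eqref{eq:clearprefix} verbatim. If the stop occurs at $i\le j$, completion is paid by the stopping test, giving $\le iU$. If $i>j$, the previous test gives $(i-1)U<C_{i-1}\le C_j$. Adding one final iteration, completion, and doubling by replay yields $T\le4\min_j\max\{jU,C_j\}+2U$.

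\emph{Specialization to a threshold $D$.} Take $j$ to be the index of the first below-$D$ search. By Lemma~\ref{lem:charge}, applied to the shared prefix, $j\le\rho_D+1$. Completing that iteration gives $e_j\le D$, and the active count there is $a_D$; the population only decreases during the final broadcast. Then $C_j=\Ot(n+\min\{a_D^2,D\sqrt{na_D},(na_D)^{2/3}\})$, using $B=\Theta(nL^2\ell)$ and $\Lambda=\Theta(L)$. Together with $jU=\Oh((1+\rho_D)B)$, this gives~\eqref{eq:survivorbound}.

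\emph{Explicit corollary.} Suppose $a_D=\Oh(\sqrt{nL}\,\ell)$ and $\rho_D=\Oh(\ell)$. The direct branch then costs $a_D^2\Lambda=\Oh(nL^2\ell^2)$, and the pruning term costs $(1+\ell)B=\Oh(nL^2\ell^2)$. The $\Ot(n^{4/3})$ guarantee follows by choosing $D=n^{1/3}$ with $\rho_D\le n/D$: this gives $n\rho_D=\Ot(n^{4/3})$ and $D\sqrt{na_D}\le n^{1/3}\cdot n=\Ot(n^{4/3})$. Alternatively, interleave the general branch at a constant share.

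The step I expect to need the most care is the specialization, namely that the reference trace's first below-$D$ state yields both $e_j\le D$ and the population $a_D$. The point is that the adaptive controller's iterations coincide with the $\Prune(D)$ prefix. To establish this I would check the following. The probe amplification $q(\kappa)$ depends only on $e$ and $a$, not on $D$. The key search is unconditional. Lemma~\ref{lem:charge}'s charge applies to every qualifying prefix of this common execution, even though the controller itself never uses $D$.
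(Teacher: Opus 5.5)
Your controller, the monotone budget $C(a,e)$, the prefix inequality, and the specialization at the first below-$D$ search follow the paper's proof essentially step for step. That specialization uses $j\le\rho_D+1$, $a_j\le a_D$ and $e_j\le D$, justified by the adaptive iterations coinciding with the $\Prune(D)$ prefix. The $\Oh(nL^2\ell^2)$ consequence via the direct branch also matches. The paper additionally treats $a_D=0$ separately: the reference trace has ended and $C_j=0$. You should add that clause.

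The step that fails is your derivation of the retained $\Ot(n^{4/3})$ guarantee. At $D=n^{1/3}$ the source budget gives only $\rho_D\le\lfloor n/D\rfloor=n^{2/3}$. Hence $n\rho_D$ can be as large as $n^{5/3}$, not $\Ot(n^{4/3})$. The second branch cannot be rescued by another threshold either. With only $\rho_D\le n/D$ and $a_D\le n$, the pruning and spreading terms give at best $n^2/D+nD\ge2n^{3/2}$.

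The paper instead takes $D=n$. No in-degree reaches $n$, so $\Phi_n=0$ and $\rho_n=0$. With $a_n=n$, the third menu entry gives $W_n\le(na_n)^{2/3}\le n^{4/3}$; this entry is at most $n^{4/3}$ for every $D$.

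Your fallback of interleaving the general branch would also give $\Ot(n^{4/3})$, but it changes the protocol. The stated exact inequality $T\le4\min_j\max\{jU,C_j\}+2U$ would then hold only up to the interleaving constant. It is also unnecessary, because the balanced completion branch already lies in your menu.
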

\begin{proof}
\emph{Common completion budget.} For a sufficiently large common constant $c$, set
\begin{equation}\label{eq:pathbudget}
 C(a,e)=\begin{cases}
 \lceil c[B+\mathcal C(a,\min\{a,2e\})]\rceil,&a>0,\\
 0,&a=0.
 \end{cases}
\end{equation}
The contender argument lies in the domain of Lemma~\ref{lem:survivors}. Both $a$ and $\min(a,2e)$ are nondecreasing in $a,e$, so $C$ is nondecreasing in those variables. A fully secured state needs no completion.

\emph{Executable handoff.} Initialize $F=\varnothing$, elapsed pruning budget $P=0$, and $e=n$. Strong connectivity ensures a positive maximum score whenever $A\ne\varnothing$. Record from the original state. Each iteration probes with the common cap $e$ and performs the complete $\MaxSelect$ operation, padded to $U=\Theta(B)$ rounds. All vertices learn the maximum score $b$. After the winner's broadcast, set
\[
 a=n-|F|,\quad P\leftarrow P+U,\quad e\leftarrow\min\{e,b+1\}.
\]
Stop if $a=0$ or $P\ge C(a,e)$. Otherwise repeat. Before a broadcast, the certificate gives $d_A(v)\le2b$; afterward active degrees only decrease. Hence $d_A(v)<2e$ persists, and closed contender sets have size at most $\min(a,2e)$. Hand off to the cheapest completion branch, ending the outer stopping rule. The completion branch's fixed-threshold pruning is included in $C(a,e)$. Replay the entire first pass, including both stages, by Lemma~\ref{lem:replay}.

\emph{Comparison with a reference prefix.} Let $C_j=C(a_j,e_j)$ along the reference trace, including $j=0$. These budgets are nonincreasing. Fix $0\le j\le J$. If our protocol stops at $i\le j$, completion costs at most $iU$, or is unnecessary when $a_i=0$. Including replay gives $T\le4jU$. If our protocol stops at $i>j$, iteration $i$ began with $(i-1)U<C_{i-1}\le C_j$. Completion costs at most $C_i\le C_j$, giving $T<4C_j+2U$. Minimizing over $j$ proves the prefix inequality. This comparison concerns upper budgets on one greedy trace.

\emph{Profile specialization.} Fix a degree comparator $D$. The prefix before the first below-$D$ search has at most $\rho_D$ broadcasts. If $a_D=0$, the reference ends. Otherwise that search has a positive score $b<D$ and precedes an unconditional broadcast. The ledger update leaves $a_j\le a_D$, while the cap update installs $e_j\le D$, with $j\le\rho_D+1$. This single broadcast explains the extra iteration in the bound; our protocol never searches over candidate thresholds. Monotonicity and the prefix inequality give
\[
 \Oh(U(1+\rho_D)+C(a_D,D)).
\]
This proves~\eqref{eq:survivorbound} simultaneously for every $D$, using the actual $a_D$.

The stated nearly linear specialization uses the first branch: $a_D^2\Lambda=\Oh(nL^2\ell^2)$ and $(1+\rho_D)B=\Oh(nL^2\ell^2)$. For the general guarantee take $D=n$: then $\rho_n=0$, $a_n=n$, and the third branch is $\Ot(n^{4/3})$. Exact replay changes only a constant factor.
\end{proof}

For example, a reference boundary with $j=3$ and $C_3=5U$ gives $T\le12U$ after an earlier handoff, or $T<22U$ after a later one. Both satisfy the prefix inequality. The controller need not predict that boundary or discover the witness that makes its completion budget small.

\section{Uniform Path Covers and Pruning Composition}\label{sec:pathcover}
A path cover describes how many sources can reach useful inventory holders through routes with bounded interference. Our approach first establishes those inventory inclusions and then uses maximum-inventory selection to exploit the cover, without reconstructing the routes. Fix a common secured entry ledger $F_0$, let $C=V\setminus F_0$, and write $a_0=|C|>0$. Let $\mathcal R_k^C(v)$ contain the sources $x\in C$ with a simple carrier path $P$ to $v$ satisfying $|Y_C(P)|\le k$. Include the zero-edge case $x=v$ when $v\in C$. We may assume $1\le k\le a_0$.

A fractional carrier-path cover has exceptions $X\subset C$, $u=|X|<a_0$, and nonnegative receiver weights satisfying
\begin{equation}\label{eq:pathcoverdef}
 \sum_{v:x\in\mathcal R_k^C(v)}w_v\ge1\quad(x\in C\setminus X),
 \qquad \tau=\sum_vw_v\ge1.
\end{equation}
All physical vertices are eligible receivers. The paths, weights, exceptions, and scale are analytical witnesses, not inputs. For $F_0=\varnothing$ this is an original path cover. For a residual cover, $C$ remains the data-carrier set even when later broadcasts secure some carrier vertices. Control always uses the physical graph.

Once a path pass realizes the cover, weighted averaging guarantees a large inventory without revealing the weights or paths. Fixed carriers preserve those paths as sources become secured. Geometric phases increase the affordable selector scale, avoiding separate simulations for every scale.

A secured internal vertex continues relaying while it belongs to $C$, preserving the cover's witness paths. Population completion instead treats a dormant holder as sufficient for quasi-dissemination. The carrier rule may change at that handoff because the cover argument has ended.

\begin{lemma}[Integer-gain cover count]\label{lem:integercover}
Suppose $\mathcal R_k^C(v)\subseteq H_v$ at each maximum-inventory selection, with a fresh saved snapshot allowed at each invocation. For integer $z\ge1$ and real $\tau\ge1$, define
\begin{equation}\label{eq:integercovercount}
 \Gamma(z,\tau)=\begin{cases}
 1,&\tau=1,\\
 \min\{z,\lceil\tau[1+\ln^+(z/\tau)]\rceil\},&\tau>1,
 \end{cases}
\end{equation}
where $\ln^+x=\max(0,\ln x)$ uses natural logarithms. Starting with $a$ unsecured sources, at most $\Gamma(a-u,\tau)$ selections leave at most $u$ whenever $a>u$.
\end{lemma}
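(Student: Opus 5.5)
This is a greedy set-cover argument. Maximum-inventory selection with the inclusion $\mathcal R_k^C(v)\subseteq H_v$ plays the role of choosing the set that covers the most uncovered elements. The fractional cover $(w_v)$ of total weight $\tau$ guarantees that each such choice covers a large share. I would first fix the invariant: let $U_t$ be the unsecured covered sources, $U_t=(C\setminus X)\setminus F$ after $t$ selections, with $z_t=|U_t|$. If the count of unsecured sources is still above $u$, then $z_t\ge a_t-u\ge1$.

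\textbf{Averaging step.} Sum the cover inequality~\eqref{eq:pathcoverdef} over $x\in U_t$ and exchange the order of summation:
\[
 \sum_v w_v\,|\mathcal R_k^C(v)\cap U_t|\ge z_t .
\]
Since $\sum_v w_v=\tau$, some receiver satisfies $|H_v\setminus F|\ge|\mathcal R_k^C(v)\cap U_t|\ge z_t/\tau$. The selection maximizes the gain $|H_v\setminus F|$, and the broadcast secures the winner's whole saved snapshot. Gains are integers, so the number of newly secured sources is at least $\lceil z_t/\tau\rceil\ge1$. The delicate point is that the gain may include sources outside $U_t$ (exceptions or uncovered carriers), so $z_t$ need not drop by the full gain. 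I would avoid this by tracking a single count, $y_t=a_t-u$, the excess of unsecured sources over the allowed residue. Because $X$ has $u$ members, $z_t\ge y_t$, and the gain reduces $a_t$ by exactly the number secured. This gives
\[
 y_{t+1}\le y_t-\lceil y_t/\tau\rceil .
\]
Selections stop being needed once $y_t\le0$. The reduction uses only the frozen snapshot containing $\mathcal R_k^C(v)$ at selection time, as the hypothesis permits.

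\textbf{Counting.} Put $z=a-u$.
\begin{itemize}
\item If $\tau=1$, one selection gains at least $y_0$, so one selection suffices.
\item The bound $z$ follows because each gain is at least $1$ while $y_t>0$.
\item For $\tau>1$, the multiplicative phase gives $y_t\le z(1-1/\tau)^t\le z e^{-t/\tau}$. After $t_1=\lceil\tau\ln^+(z/\tau)\rceil$ steps, $y_{t_1}\le\tau$. If $z\le\tau$, then $t_1=0$ and this holds trivially.
\item In the second phase each step still removes at least $1$, so at most $\lfloor\tau\rfloor$ further steps bring $y$ to $0$.
\end{itemize}
The total is therefore at most $\tau\ln^+(z/\tau)+1+\tau$. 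I expect the main obstacle to be this final integer rounding: showing that the total is at most $\lceil\tau[1+\ln^+(z/\tau)]\rceil$ rather than that quantity plus one.

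To remove the extra $1$ I would sharpen the second phase. When $y_t\le\tau$ but $y_t>\tau-1$, the ceiling still gives a gain of at least $1$, so the remaining count $y_t$ is itself at most $\lfloor\tau\rfloor$ steps. Combining the two phases exactly, I would use the standard discrete estimate: the number of steps needed to go from $y$ to $0$ under $y\mapsto y-\lceil y/\tau\rceil$ is at most $\lceil\tau(1+\ln(y/\tau))\rceil$ for $y\ge\tau$. I would prove this by induction on $y$, comparing with the continuous solution $\tau\ln$ and using $\ln(1-1/\tau)\le-1/\tau$. Taking the minimum with $z$ then completes the proof.
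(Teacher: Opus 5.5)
Your proposal follows the paper's proof: the same double-counting step, the same reduction to the excess $a-u$ with recurrence $y_{t+1}\le y_t-\lceil y_t/\tau\rceil$, and the same two phases, a multiplicative one of $\lceil\tau\ln^+(z/\tau)\rceil$ steps followed by at most $\lfloor\tau\rfloor$ unit steps.

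The extra $+1$ you flag as the main obstacle is not real. It comes from replacing $\lceil\tau\ln^+(z/\tau)\rceil$ by $\tau\ln^+(z/\tau)+1$. If you keep the ceiling, then because $\lfloor\tau\rfloor$ is an integer,
\[
\lceil x\rceil+\lfloor\tau\rfloor=\lceil x+\lfloor\tau\rfloor\rceil\le\lceil x+\tau\rceil,
\qquad x=\tau\ln^+(z/\tau).
\]
This bounds your two-phase total by $\lceil\tau[1+\ln^+(z/\tau)]\rceil$ directly, and it is exactly the paper's closing step. The induction on $y$ you sketch would also work, using $\tau\ln(1-1/\tau)\le-1$, but it is not needed.
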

\begin{proof}
For $a>u$, at least $a-u$ unsecured sources are covered. Double counting gives
\[
 a-u\le\sum_{x\in C\setminus(F\cup X)}
              \sum_{v:x\in\mathcal R_k^C(v)}w_v
       \le\sum_vw_v|H_v\setminus F|.
\]
The winner's integer gain is at least $\lceil(a-u)/\tau\rceil$. With $z=(a-u)^+$ this yields
\begin{equation}\label{eq:pathcovercontract}
 z_{\rm next}\le\max\{0,z-\lceil z/\tau\rceil\}.
\end{equation}
For $\tau=1$, one selection suffices. Suppose $\tau>1$. If $z\le\tau$, unit progress needs at most $z\le\lfloor\tau\rfloor$ steps; here \eqref{eq:integercovercount} equals $z$. Otherwise, $z_{\rm next}\le(1-1/\tau)z$ and $1-1/\tau\le e^{-1/\tau}$ show that $h=\lceil\tau\ln(z/\tau)\rceil$ steps leave at most $\tau$. At most $\lfloor\tau\rfloor$ more steps reach zero. Combining this with unit progress gives the sharper count
\[
 \min\{z,\lceil\tau\ln(z/\tau)\rceil+\lfloor\tau\rfloor\}
 \qquad(z>\tau>1).
\]
Since $\lceil x\rceil+\lfloor\tau\rfloor\le\lceil x+\tau\rceil$, the compact bound \eqref{eq:integercovercount} is valid and exceeds this count by at most one. Once $a\le u$, monotonicity preserves that inequality.

The proof requires witness inclusions at each invocation, not equality of successive inventories. Replacing $H_v$ by a larger genuine inventory preserves every inequality. Flags, scores, and saved payloads must nevertheless remain frozen within each key search and designated broadcast.
\end{proof}

For example, two inventories jointly covering nine unsecured sources guarantee a maximum gain of at least five, leaving at most four sources. Subsequent maximum selections leave at most two, then one, then zero. Overlap does not weaken the averaging argument. The integer ceiling preserves these final unit gains, which a fractional contraction alone would not certify.

\begin{theorem}[Uniform path-cover bound]\label{thm:pathcover}
One deterministic branch, using only $n,N$, completes gossiping for every original path cover in
\begin{equation}\label{eq:pathcoverfixed}
 \Oh\bigl(k^2\Lambda+B[1+\Gamma(n-u,\tau)]+\mathcal E(u)\bigr).
\end{equation}
Thus our protocol simultaneously attains
\begin{equation}\label{eq:pathcoverbound}
 \Ot\bigl(k^2+n\tau+\min\{u^2,(nu)^{2/3}\}\bigr),
\end{equation}
without a path-scale pool. From a secured ledger $F_0$, the same construction completes quasi-gossiping in additional first-pass time
\[
 \mathcal T=\Oh\bigl(k^2\Lambda+B[1+\Gamma(a_0-u,\tau)]+\mathcal E(u)\bigr)
\]
for any carrier-path cover as above. Here $\mathcal E(0)=0$. Complete-history replay is required for full gossiping.
\end{theorem}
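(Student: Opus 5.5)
The branch is one geometric execution with phase index $i=0,1,2,\ldots$ and selector scale $k_i=2^i$, capped at $a_0$. The data carriers stay fixed as $C=V\setminus F_0$ throughout each phase, while the ledger $F$ grows. Phase $i$ runs three steps in order.
\begin{enumerate}
\item One path pass from Lemma~\ref{lem:path} at scale $k_i$, costing $S_{k_i}=\Oh(k_i^2\Lambda)$.
\item Maximum-inventory selections with fresh frozen snapshots $H_v=K_v$. Each call is a $\Select$ of the maximum $|H_v\setminus F|$ via the $\MaxSelect$ key search, followed by broadcast of the winner's snapshot. The number of selections is capped at a phase budget $\beta_i$, proportional to $k_i^2\Lambda/B+1$.
\item A handoff test. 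The controller computes the common completion budget $\lceil c[B+\mathcal E(a)]\rceil$ of Lemma~\ref{lem:survivors} from the current $a=n-|F|$. It compares this budget with elapsed time, exactly as in the prefix comparison of Theorem~\ref{thm:pathadaptive}. On stopping, it runs the standalone completion $\Oh(B+\mathcal E(a))$ from the end of Section~\ref{sec:path}, with the carrier rule switched to $A$. Finally it replays the whole recorded first pass by Lemma~\ref{lem:replay}.
\end{enumerate}
The only inputs are $n,N$. The cover $(k,\tau,u)$ appears only in the analysis.

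\emph{Witness inclusion.} I would first show that after the path pass at any scale $k_i\ge k$, every $v$ has $\mathcal R_k^C(v)\subseteq K_v$. Each witness path $P$ has $|Y_C(P)|\le k\le k_i$, and Lemma~\ref{lem:path} delivers every entry rumor of $Y_C(P)$, including the source $x$, to the endpoint. The path's neighborhood is defined relative to the fixed set $C$, not the shrinking $A$, so the inclusion survives later ledger growth. Knowledge is monotone, so later frozen snapshots still satisfy the hypothesis of Lemma~\ref{lem:integercover}. That lemma then shows that $\Gamma(a_0-u,\tau)$ selections, taken in any phase with $k_i\ge k$, leave at most $u$ unsecured sources. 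Phases with $k_i<k$ do no harm, since selections only enlarge $F$.

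\emph{Accounting.} Let $i^\ast$ be the first phase with $k_i\ge k$. Geometric summation bounds all path passes through $i^\ast$ by $\Oh(k^2\Lambda)$. The selection work up to that phase is bounded by the same amount, because of the proportional budget $\beta_i$. After phase $i^\ast$, the remaining $\Gamma$ selections cost $B\,\Gamma$. Their path passes are matched by that cost, since later scales are charged against the selections they enable. At that point the residual population is at most $u$. The reference-prefix inequality $T\le4\min_j\max\{jU,C_j\}+2U$ then carries over verbatim: the budgets $C_j$ are nonincreasing in $a$, and $\mathcal E$ is monotone. This yields the bound $\Oh(k^2\Lambda+B[1+\Gamma]+\mathcal E(u))$, with $\mathcal E(0)=0$. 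In the original case $F_0=\varnothing$ and $a_0=n$, the same argument gives~\eqref{eq:pathcoverfixed}. Finally $\Gamma\le\Oh(\tau L)$ and $B=\Ot(n)$ give~\eqref{eq:pathcoverbound}.

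\emph{Main obstacle.} The hardest part is balancing how many selections each phase may spend, so that no path-scale pool is needed. The phase has to keep selecting while gains are large. Otherwise a large $\tau$ forces many cheap phases, each paying a repeated path pass at a growing scale. I would first try a rule that keeps selecting while the selection cost stays below a constant times the current path-pass cost, and otherwise doubles $k_i$. The charging must then show two things. A doubling is triggered only when the phase budget is exhausted, which charges the pass to selections already made. Doubling stops at $k$ or at $a_0$ and never overshoots by more than a factor of two. If this fails, I would instead interleave scale and selection costs by letting time alone drive doubling, paying a constant-factor overhead. A secondary subtlety is the carrier-rule switch at handoff. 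It is harmless because the cover argument has ended and Lemma~\ref{lem:survivors} needs only a common secured ledger.
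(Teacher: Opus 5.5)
Your plan uses the same architecture as the paper: fixed carriers $C$, persistence of $\mathcal R_k^C(v)\subseteq K_v$ after the first pass at scale at least $k$, Lemma~\ref{lem:integercover} applied to fresh snapshots, the monotone budget $\lceil c[B+\mathcal E(a)]\rceil$, population-only completion at handoff, and replay of the whole history. Two steps of your accounting, however, do not go through as written.

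\emph{The handoff comparison does not carry over verbatim.} In Theorem~\ref{thm:pathadaptive} the clock advances by exactly $U$ between consecutive stopping tests. So elapsed time at boundary $j$ is $jU$, and a late handoff overshoots by one iteration; the constants $4$ and $2U$ come from this. In the path-cover execution, elapsed time also contains every path pass. The step across a phase transition can exceed $U$ by a large factor; in your design, where the test sits at the end of the phase, that step is an entire phase. Before the prefix inequality can be used, you must show that every such jump is at most a constant multiple of the time already elapsed plus $\Oh(U)$. You then run the comparison with elapsed time in place of $jU$.

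The paper proves exactly this. It shows $P_j\le3P_{j-1}+2U$, because the new phase's pass is at most $T_i$ while the preceding full phase already paid $T_i/2$ in selections. It then obtains $2\mathcal T\le8\min_j\max\{P_j,C_j\}+4U$, and notes that applying the constant-iteration comparison directly would omit these jumps. Your remark that a doubling occurs only after a phase's selection budget is exhausted is the right ingredient. It has to be used here, in the handoff comparison, not only in the witness-cost paragraph.

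\emph{The $+1$ in $\beta_i$ costs a logarithm in \eqref{eq:pathcoverfixed}.} Your scale doubles every phase, so reaching scale $k$ takes about $\log_2 k$ phases. Each of those phases spends at least one selection of cost $\Theta(B)$, however cheap its pass is. The stopping test need not fire while $a$ is still large, so the work through $i^\ast$ is $\Oh(k^2\Lambda+B\log k)$, not $\Oh(k^2\Lambda)$. For $\tau=1$, $u=0$ and $k^2\Lambda\le B$, the theorem promises $\Oh(k^2\Lambda+B)$, so your bound is off by a $\Theta(\log k)$ factor. Only the soft form \eqref{eq:pathcoverbound} survives.

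The cure is to make every phase's cost grow geometrically from $\Theta(U)$. Your own fallback does this, and it is the paper's schedule:
\begin{itemize}
\item Phase $i$ has budget $T_i=2^iU$.
\item It runs the largest listed scale $k_i$ with $c_sk_i^2\Lambda\le T_i$.
\item It then performs up to $2^i$ selections, testing after each one.
\end{itemize}
Every phase then costs at most $2T_i$. All phases through the first phase $p$ whose scale covers $k$ cost less than $4T_p=\Oh(U+k^2\Lambda)$.

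Suppose the population first reaches $u$ in a later phase $q>p$. Then completing phase $q-1$ used fewer than $t=\Gamma(a_0-u,\tau)$ post-witness selections, so $2^{q-1}<t$ and $T_q<2Ut$. Note that a later pass is paid by the selections of the \emph{preceding} completed phase, not by the selections it enables. Your obstacle paragraph states this correctly, but your accounting paragraph does not.
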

\begin{proof}
\emph{A single geometric execution.}
Pad the path-selector length to $s(k)=c_s k^2\Lambda$ with an integer common constant $c_s$ large enough for Lemma~\ref{lem:path}. Choose integer $U=\Theta(B)\ge s(1)$ bounding a full maximum-key selection and inventory broadcast. Use the scale list $1,2,4,\ldots$ through $a_0$, capping the scale list's last element at $a_0$. Phase $i=0,1,\ldots$ has budget $T_i=2^iU$. First run $S_{k_i}$ on the fixed carriers $C$, where $k_i$ is the largest list element with $s(k_i)\le T_i$. Then perform up to $2^i$ maximum-inventory iterations, each padded to $U$.

At each iteration entry, save fresh $H_v=K_v$ and score $b_v=|H_v\setminus F|$. Search unconditionally for the largest score and broadcast that winner's saved inventory. Credit only this designated snapshot to the common ledger; actual packets retain all genuine knowledge. All vertices, including those outside $C$, may win. A source outside $F$ knows its own rumor, so a positive score exists until $F=V$. Every iteration secures a new source. The counterfactual reference execution that never invokes early completion therefore terminates after at most $a_0$ iterations. This inventory-score trace is separate from the degree-score reference execution.

\emph{Observable stopping rule and phase overshoot.}
Let $P$ be all elapsed path and selection time since entry. For a large common constant $c$, use the monotone completion budget
\[
 C_H(a)=\begin{cases}\lceil c[B+\mathcal E(a)]\rceil,&a>0,\\0,&a=0.\end{cases}
\]
After every selection, stop if $a=n-|F|=0$ or $P\ge C_H(a)$. At a nonterminal stop, use the population-only completion of Lemma~\ref{lem:survivors}. Population completion may reduce the active set between frozen data invocations; the fixed carrier convention ends at this handoff.

At reference selection boundaries write $P_j,a_j,C_j=C_H(a_j)$, with $P_0=0$. The sequence $C_j$ is nonincreasing. Within one phase the next boundary adds exactly $U$. Before phase $i\ge1$, the preceding full phase paid $2^{i-1}U=T_i/2$ in selections. Hence phase $i$'s data pass adds at most $T_i\le2P_{j-1}$. At the first boundary the pass adds at most $U$. Uniformly,
\[
 P_j\le3P_{j-1}+2U.
\]
Fix any reference comparator $j$, and let $h$ be the actual handoff boundary. Here $\mathcal T$ counts all first-pass time since entry, including the completion suffix. If $h\le j$, a nonterminal handoff has completion budget at most $P_h$, giving $\mathcal T\le2P_h\le2P_j$. When $a_h=0$, no completion is needed and the same bound holds. If $h>j$, the prior stopping test gives $P_{h-1}<C_{h-1}\le C_j$. The phase bound and monotonicity then give $P_h\le3C_j+2U$ and $C_h\le C_j$. Consequently,
\[
 2\mathcal T\le8\min_{0\le j\le J}\max\{P_j,C_j\}+4U.
\]
This accounts for whole path passes at phase transitions. Applying the preceding constant-iteration comparison directly would omit these jumps.

\emph{Amortized witness cost.}
Round witness $k$ to the first list scale $k'\ge k$, so $k'\le2k$. Let $p$ be the first phase with $k_p\ge k'$. Then $T_p=\Oh(U+k^2\Lambda)$. That phase's completed pass establishes $\mathcal R_k^C(v)\subseteq K_v$ at every receiver. Knowledge subsequently grows, so every later saved snapshot preserves these inclusions. Fixing $C$ is essential: removing newly secured carriers could remove an internal path sender. Inventory growth causes no such problem.

Let $t=\Gamma(a_0-u,\tau)$. Lemma~\ref{lem:integercover} ensures $a\le u$ within $t$ selections after this pass. If $a\le u$ is reached before phase $p$, that comparator costs below $4T_p$, including terminal cases. Otherwise let $q$ contain the first post-pass selection leaving $a\le u$; at most $t$ such selections are needed. A full phase costs at most $2T_i$, so all phases through $q$ cost less than $4T_q$. If $q=p$, this is $\Oh(U+k^2\Lambda)$. If $q>p$, completing phase $q-1$ required fewer than $t$ of these selections, so $2^{q-1}<t$ and $T_q<2Ut$. Thus some reference comparator satisfies
\[
 P_j=\Oh(B+k^2\Lambda+Bt),\qquad C_j=\Oh(B+\mathcal E(u)).
\]
The phase-aware prefix comparison proves the additional first-pass bound. For $F_0=\varnothing$, replay of the entire path, control, and completion history costs another $\mathcal T$ and proves~\eqref{eq:pathcoverfixed}. Since $\Gamma(z,\tau)=\Oh(1+\tau\log(z+1))$ and $B=\Ot(n)$, the theorem's soft form is~\eqref{eq:pathcoverbound}. Every branch is correct without a witness. A constant share of the full portfolio preserves the portfolio's earlier guarantees.
\end{proof}

A phase-boundary example illustrates the overshoot calculation. If $T_i=8U$, the next path pass and first selection cost at most $9U$. The preceding phase has already paid $4U$ in selections, so $P_{j-1}\ge4U$ and $9U\le2P_{j-1}+U$. Thus elapsed work pays for the entire phase jump.

Pruning can now supply the entry ledger for the same geometric construction. The next result keeps the returned carrier set fixed throughout those phases.

\begin{corollary}[Amplified pruning followed by a residual path cover]\label{cor:prunedpathcover}
For fixed public $D$, run $\Prune(D)$, keep the pruning procedure's actual returned ledger $F_D$, and pin $C=A_D=V\setminus F_D$ throughout the geometric phases. If $a_D>0$ and this carrier set admits a cover $(k,\tau,u)$, full gossiping costs
\[
 \Oh\bigl(k^2\Lambda+B[1+\rho_D+\Gamma(a_D-u,\tau)]+\mathcal E(u)\bigr).
\]
If $a_D=0$, pruning already completes gossiping. With only $n,N$ as inputs, a private pool simultaneously gives
\[
 \Ot\bigl(n(1+\rho_D+\tau)+k^2+\min\{u^2,(nu)^{2/3}\}\bigr)
\]
for every cover at an actual dyadic-threshold state.
\end{corollary}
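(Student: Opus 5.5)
The plan is to compose three ingredients that already appear in the paper: the charge of Lemma~\ref{lem:charge} for the pruning stage, the residual form of Theorem~\ref{thm:pathcover} with entry ledger $F_0=F_D$, and exact replay via Lemma~\ref{lem:replay} over the whole first pass. First, $\Prune(D)$ from $F=\varnothing$ makes at most $\rho_D$ successful selections and costs $\Oh((1+\rho_D)B)$ rounds. It returns a common ledger $F_D$ with $a_D=n-|F_D|$. If $a_D=0$, every source has been secured by a designated broadcast, so $F=V\subseteq K_v$ for all $v$ and gossiping is complete. Otherwise $F_D$ is a common secured ledger. Since the carrier set is $C=V\setminus F_D$, we have $a_0=a_D>0$, exactly as the residual statement of Theorem~\ref{thm:pathcover} requires.

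Second, we run the single geometric execution of Theorem~\ref{thm:pathcover} from $F_0=F_D$, pinning $C=A_D$ throughout the phases. The residual clause of that theorem gives additional first-pass time
\[
\Oh\bigl(k^2\Lambda+B[1+\Gamma(a_D-u,\tau)]+\mathcal E(u)\bigr).
\]
This step needs two checks. The witness inclusions $\mathcal R_k^C(v)\subseteq K_v$ must refer to the pinned carrier set, and they do because the cover is defined relative to $C=A_D$ and fixed carriers preserve the witness paths. The ledger used by Lemma~\ref{lem:integercover} must be the one actually returned by pruning, which we guarantee by keeping that execution's trace and not transferring it from another. Recording starts at the original state, so Lemma~\ref{lem:replay} doubles the combined pruning, path, and completion history. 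Adding the two stage costs gives the fixed-$D$ bound.

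For the soft version with only $n,N$ as inputs, we would run a private pool in which each dyadic threshold $D'=2^i$ executes its own $\Prune(D')$ followed by the geometric cover branch and replay. These $\lceil\log_2 n\rceil+1$ copies are interleaved, each with its own inventories, ledger, clocks, and recording, and the pool stops at the first certified completion. The interleaving costs a factor of $\Oh(L)$, which $\Ot$ absorbs. Then $B=\Ot(n)$ and $\Gamma(z,\tau)=\Oh(1+\tau\log(z+1))$ yield the displayed expression for every cover of the carrier set of that copy's actual state. The phrase ``at an actual dyadic-threshold state'' means exactly that the witness must live on the ledger produced by the copy itself; no rounding between thresholds is claimed. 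The completion check of Section~II makes each copy's success certifiable.

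I expect the main obstacle to be justifying that pruning's output really is a valid entry for Theorem~\ref{thm:pathcover}'s residual clause. The geometric phases' stopping rule, with budget $C_H(a)$ and population-only completion, must interact correctly with a nonempty $F_0$. Its overshoot comparison $P_j\le3P_{j-1}+2U$ has to be measured from the entry point rather than from the original state. I would handle this by noting that the theorem's accounting is already stated as additional first-pass time since entry. The pruning time is added separately, and only replay spans both stages.
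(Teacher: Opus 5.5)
Your proposal is correct and follows the paper's proof essentially step for step. You use Lemma~\ref{lem:charge} for the pruning cost, the residual clause of Theorem~\ref{thm:pathcover} on the actually returned ledger with pinned carriers, replay of the whole history from the original rumors via Lemma~\ref{lem:replay}, and a dyadic pool of private branches verified by the frozen OR of $|K_v|<n$ without rounding residual witnesses (the only cosmetic addition is to cap the last dyadic threshold at $n$, since $\Prune(D)$ is defined for $1\le D\le n$).
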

\begin{proof}
Pruning costs $T_0=\Oh((1+\rho_D)B)$ by Lemma~\ref{lem:charge}. Run the uniform construction on the pruning procedure's returned ledger, giving quasi-gossiping after $T_0+\mathcal T$ rounds from the original rumors. Exact replay must include this entire history, not just the new suffix. Lemma~\ref{lem:replay} gives full gossiping in $2(T_0+\mathcal T)$.

Use private branches for $D=1,2,4,\ldots$, capped at $n$, at $\Oh(\log n)$ slowdown. Each branch keeps its own carrier list and recording from the original state. A post-replay frozen OR of $|K_v|<n$ verifies completion. No witness is needed to run or verify a branch. The minimum is over these actual states: rounding an arbitrary integer $D$ can change $F_D$, available carrier paths, and the cover. We assert no constant-factor rounding rule for residual witnesses.
\end{proof}

For $\tau=1,u=0$, the original-cover bound is $\Oh(k^2\Lambda+B)$ without knowing $k$. More generally, $\Gamma\le a_0-u$ and integer gains sharpen $\tau\log n$ to $\Oh(\tau[1+\ln^+((a_0-u)/\tau)])$. For $z=n$, $\tau=n/\log n$, the count becomes $\Oh(n\log\log n/\log n)$ instead of $\Oh(n)$. Geometric execution also removes the logarithmic overhead of independent scale simulations. These improve explicit budgets without claiming a better general polynomial exponent. Amplified pruning can lower the path scale; fixing residual carriers preserves the witness paths.

At linear density, sufficient residual conditions are $\rho_D+\tau=\Ot(n^{1/5})$, $k=\Ot(n^{3/5})$, and $u=\Ot(n^{4/5})$ at an actual dyadic state. A sparsity-only guarantee for these residual witnesses is not proved here. A direct cover need not have small path scale, so the separate direct-reception branch remains useful.

\section{Reception Covers and Structural Improvements}\label{sec:cover}
Path covers require a data pass to assemble useful inventories. Direct reception covers instead use one-hop adjacency, so the degree probes already provide the necessary evidence. A weighted cover lower-bounds the largest reception score; the winner's saved inventory then secures a corresponding set of active origins. Iterating this gain contracts the unsecured population.

At reference prefix $j$, a residual reception cover consists of exceptions $U\subset A_j$ and nonnegative receiver weights $w_v$ such that
\begin{equation}\label{eq:coverdef}
 \sum_{v:(x,v)\in E}w_v\ge1\quad(x\in A_j\setminus U),\quad \tau=\sum_vw_v.
\end{equation}
Write $u=|U|<a_j$; necessarily $\tau\ge1$. These are analytical witnesses, not inputs. Unit weights on receivers cover the union of their in-neighborhoods. The averaging step below is the greedy-cover argument also used in~\cite[Section 6]{wu}; here our averaging step acts on certified neighbor counts.

\begin{theorem}[Residual-cover contraction]\label{thm:fractional}
Write $\mathcal E(a)=\min\{a^2\Lambda,\mathcal J(a)\}$. At any reference prefix $j$ with $a_j>u$, consider a residual cover of weight $\tau$ and $u$ exceptions. For $1\le r\le a_j-u$, put
\[
 \lambda=1-(1-1/Q)/\tau,\qquad
 t=\left\lceil\frac{\log((a_j-u)/r)}{\log(1/\lambda)}\right\rceil.
\]
The same adaptive protocol takes
\begin{equation}\label{eq:fractional}
 \Oh\bigl(B(1+j+t)+\mathcal E(u+r)\bigr)
\end{equation}
rounds. Thus our protocol's bound is $\Ot(n(j+\tau)+(nu)^{2/3})$, simultaneously for all such witnesses. Original covers have $j=0$; only $n,N$ are inputs.
\end{theorem}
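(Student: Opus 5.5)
The plan is to reuse the prefix-comparison framework of Theorem~\ref{thm:pathadaptive} on the same adaptive reception-score execution, so that~\eqref{eq:fractional} follows once we find a reference prefix $j'\le j+t$ whose residual population satisfies $a_{j'}\le u+r$. Substituting such a comparator into $T\le4\min_j\max\{jU,C_j\}+2U$ gives $\max\{(j+t)U,C(a_{j'},e_{j'})\}$. The first branch of $\mathcal C$ and the third branch $\mathcal J$ bound $C(a,e)$ by $\Oh(B+\mathcal E(a))$ independently of $e$, and $\mathcal E$ is monotone in $a$. Hence the comparator costs $\Oh(B(1+j+t)+\mathcal E(u+r))$.

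\emph{Contraction step.} Fix a reference iteration $i\ge j$ with $z_i=|A_i\setminus U|>r$. The cover is defined at prefix $j$, but since $A_i\subseteq A_j$, every $x\in A_i\setminus U$ is still covered. Double counting against the certified scores gives
\[
 z_i\le\sum_{x\in A_i\setminus U}\sum_{v:(x,v)\in E}w_v\le\sum_v w_v\,d_{A_i}(v)\le\tau\max_v d_{A_i}(v).
\]
By~\eqref{eq:probe}, with $q(\kappa)\ge Q$, the maximum score is at least $(1-1/Q)z_i/\tau$. The winner's saved inventory contains every heard sender, and the completed broadcast secures these senders. Each heard sender is active, but it may lie in $U$. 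To handle this, I would count instead $a_i-u\le z_i$: removing $b$ active sources reduces $a$ by $b$, so
\[
 a_{i+1}-u\le(a_i-u)-(1-1/Q)z_i/\tau\le\lambda(a_i-u),
\]
where the last inequality uses $z_i\ge a_i-u$. Iterating from $a_j-u$ for $t$ steps then yields $a_{j+t}-u\le r$, by the definition of $t$. The reference trace reaches this prefix unless it terminates earlier with $a=0$, in which case the comparator is even cheaper.

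\emph{Main obstacle.} The delicate point is making the double counting work on the reference trace with adaptive caps. The probe bound is only valid relative to the invariant $d_A<2e$, which Lemma~\ref{lem:charge} guarantees. Also, the score lower bound must hold for the actual maximum, which the unconditional key search supplies. I would also check that dormant receivers carry weight legitimately, since they listen and are eligible winners.

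\emph{Soft form.} Choose $r=1$. Then $\log(1/\lambda)\ge(1-1/Q)/\tau$, so $t=\Oh(\tau L)$ and $\mathcal E(u+1)=\Ot((nu)^{2/3}+n)$ via $\mathcal J$. With $B=\Ot(n)$ this gives $\Ot(n(j+\tau)+(nu)^{2/3})$. The bound holds simultaneously over all witnesses because they are only comparators on one execution.
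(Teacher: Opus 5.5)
Your proposal is correct and follows the paper's proof: you use the same double counting against the residual cover to get $a_{\rm next}-u\le\lambda(a-u)$ on the reference trace, and your detour through $z_i=|A_i\setminus U|\ge a_i-u$ is equivalent to the paper's direct sum over all of $A$. You then apply the prefix inequality of Theorem~\ref{thm:pathadaptive} at the boundary $\min\{J,j+t\}$, using $C(a,e)=\Oh(B+\mathcal E(a))$, which yields~\eqref{eq:fractional} and, with $r=1$, the soft form.
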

\begin{proof}
\emph{From a cover to a certified gain.} Later active sets are subsets of $A_j$, so the same weights still cover every remaining nonexceptional source. At population $a>u$, at least $a-u$ sources are covered. Double counting gives
\[
 a-u\le\sum_{x\in A}\sum_{v:(x,v)\in E}w_v
       =\sum_vw_vd_A(v)\le\tau\max_vd_A(v).
\]
The maximum score is at least $(1-1/Q)\max_vd_A(v)$. The winning receiver's saved inventory contains that many distinct active neighbor origins, giving
\begin{equation}\label{eq:covercontraction}
 a_{\rm next}-u\le\lambda(a-u).
\end{equation}
Once $a\le u$, monotonicity preserves the inequality $a\le u$. After $t$ further reached iterations the population is at most $u+r$.

\emph{From a reference prefix to the actual stopping time.} If the adaptive protocol stops before prefix $j+t$, our protocol's cost, including completion and replay, is $\Oh(B(1+j+t))$. Otherwise the reference boundary there has completion budget at most $\Oh(B+\mathcal E(u+r))$. Apply the prefix inequality of Theorem~\ref{thm:pathadaptive}; if the reference ends earlier, gossiping is already complete. This also covers $t=0$. Finally $-\ln\lambda\ge(1-1/Q)/\tau$ gives $t=\Oh(1+\tau\log n)$. Set $r=1$ for the soft bound; the additive $n$ absorbs the constant offset when $u=0$.
\end{proof}

\begin{corollary}[Successive residual covers]\label{cor:stagedcovers}
Let $J$ be the number of positive reference broadcasts. Start at $j_0=0$ and take $q\ge1$ analytical stages with $j_i<J$ for $i<q$. Stage $i$ has a reception cover on $A_{j_i}$ of weight $\tau_i$, exceptions $u_i<a_{j_i}$, and target $1\le r_i\le a_{j_i}-u_i$. Define $t_i$ by Theorem~\ref{thm:fractional} and $j_{i+1}=\min\{J,j_i+t_i\}$. Put $s=0$ if $j_q=J$, and $s=u_{q-1}+r_{q-1}$ otherwise. The same protocol takes
\[
 \Oh\!\left(B\left(1+\sum_{i<q}t_i\right)+\mathcal E(s)\right).
\]
In soft notation this is $\Ot(n\sum_{i<q}\tau_i+(ns)^{2/3})$. No stage, cover, or target is supplied to the protocol.
\end{corollary}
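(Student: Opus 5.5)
The plan is to iterate the contraction argument of Theorem~\ref{thm:fractional} stage by stage along the single reference trace of the adaptive protocol. Then I close with the prefix inequality of Theorem~\ref{thm:pathadaptive}, applied at the final reference boundary $j_q$.

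The first step is an induction on $i$. I will show that the reference state at boundary $j_{i+1}$ has $a_{j_{i+1}}\le u_i+r_i$, or else $j_{i+1}=J$, in which case the trace is fully secured. Stage $i$ starts at prefix $j_i<J$ with a cover on $A_{j_i}$. Every later active set is a subset of $A_{j_i}$, so the same weights keep covering the remaining nonexceptional sources. The double-counting step of Theorem~\ref{thm:fractional} then gives the certified maximum score and the contraction~\eqref{eq:covercontraction} at every reference iteration after $j_i$. After $t_i$ reached iterations, the population is therefore at most $u_i+r_i$. If the trace runs out first, then $j_{i+1}=J$ and $a_J=0$. No hypothesis links consecutive stages beyond the requirement that stage $i+1$'s cover lives on $A_{j_{i+1}}$, and that requirement is built into the statement. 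Telescoping gives $j_q\le\sum_{i<q}t_i$.

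The second step is the final comparison. Invoke the prefix inequality $T\le4\max\{j_qU,C_{j_q}\}+2U$ with comparator $j=j_q$. We have $U=\Theta(B)$, so $j_qU=\Oh(B\sum_i t_i)$. For the completion budget, split into two cases. If $j_q=J$, then $a_J=0$ and $C_{j_q}=0$, which matches $s=0$. Otherwise $a_{j_q}\le u_{q-1}+r_{q-1}=s$. The budget $C(a,e)$ is monotone in $a$, and its completion menu contains the $e$-independent branches $a^2\Lambda$ and $\mathcal J(a)$. This gives $C_{j_q}=\Oh(B+\mathcal E(s))$. The additive $B$ and $2U$ are absorbed into $B(1+\sum t_i)$.

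For the soft form, use $t_i=\Oh(1+\tau_i\log n)$ and $\tau_i\ge1$, which gives $Bt_i=\Ot(n\tau_i)$. Then $\mathcal E(s)\le\mathcal J(s)=\Ot((ns)^{2/3})$. The main obstacle, which I expect to be mild, is checking the comparison across stages. The reference trace is defined beyond the actual handoff, so every stage boundary may be unreached by the real execution. The point is that the prefix inequality is already stated for arbitrary reference prefixes, including unreached ones. So no stage needs to be reached, and the protocol needs no knowledge of the stages.
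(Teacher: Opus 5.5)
Your proposal is correct and follows essentially the same route as the paper. Both run a stage-by-stage induction in which each stage's cover persists on the nested active subsets of $A_{j_i}$, yielding at most $u_i+r_i$ survivors after $t_i$ reached iterations (or $a_J=0$). Both then apply the prefix inequality of Theorem~\ref{thm:pathadaptive} at the cumulative boundary $j_q\le\sum_{i<q}t_i$, bounding that boundary's completion budget by $\Oh(B+\mathcal E(s))$ through the monotone $e$-independent branches.
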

\begin{proof}
At stage $i$, the cover remains valid for every later active subset of $A_{j_i}$. The contraction proof leaves at most $u_i+r_i$ sources after $t_i$ further iterations. Induction justifies each reached boundary $j_{i+1}$ and leaves at most $s$ sources at the final one. Earlier secured sources never reenter any covered population.

The cumulative prefix contains at most $\sum_i t_i$ iterations. Apply the prefix inequality there, bounding the prefix's completion budget by $\Oh(B+\mathcal E(s))$. If the adaptive execution stops before that boundary, our protocol's completion is already paid by the controller's elapsed budget. If the reference ends first, all original sources are secured. Both cases fit the same expression. Finally $t_i=\Oh(1+\tau_i\log n)$ and $\tau_i\ge1$ give the soft form. The stages are an analysis of one unchanged execution, so no stage-search slowdown is hidden.
\end{proof}

\begin{corollary}[Minimum and quantile out-degree]\label{cor:outdegree}
Let $u_d=|\{x:\deg^+(x)<d\}|$ for $1\le d\le n-1$. With only $n,N$, gossiping takes
\[
 \Ot\bigl(\min_d\{n^2/d+(nu_d)^{2/3}\}\bigr).
\]
In particular, minimum out-degree $\delta^+$ gives $\Ot(n^2/\delta^+)$, with $\delta^+$ unknown.
\end{corollary}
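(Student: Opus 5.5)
We derive the corollary from Theorem~\ref{thm:fractional} at the original prefix $j=0$, using a unit-weight reception cover. For a fixed $d$, the exceptions are $U=\{x:\deg^+(x)<d\}$, so $u=u_d$. Every nonexceptional source $x$ has at least $d$ out-neighbors. We put weight $w_v=1/d$ on every physical vertex $v$. Then each $x\notin U$ satisfies
\[
 \sum_{v:(x,v)\in E}w_v=\deg^+(x)/d\ge1,
\]
so~\eqref{eq:coverdef} holds with $\tau=n/d$. We have $\tau\ge1$ because $d\le n-1$.

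The only side condition is $a_0=n>u_d$. When $u_d=n$, no vertex has out-degree at least $d$, and the claimed bound is at least $(n\cdot n)^{2/3}=n^{4/3}$ up to logarithms. The general $\Ot(n^{4/3})$ guarantee retained by Theorem~\ref{thm:pathadaptive}, which runs in the same portfolio, covers this case.

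When $u_d<n$, Theorem~\ref{thm:fractional} with $r=1$ gives $\Oh(B(1+t)+\mathcal E(u_d+1))$ with $t=\Oh(1+\tau\log n)$. Using $B=\Ot(n)$ and $\mathcal E(a)\le\mathcal J(a)=\Ot((na)^{2/3})$, this becomes
\[
 \Ot\bigl(n+n\cdot n/d+(n(u_d+1))^{2/3}\bigr).
\]
The extra $n^{2/3}$ from the $+1$ is absorbed by $n\le n^2/d$, so the bound is $\Ot(n^2/d+(nu_d)^{2/3})$.

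Theorem~\ref{thm:fractional} holds simultaneously for all witnesses on one execution of the adaptive protocol, which takes only $n,N$ as inputs. Hence no search over $d$ is required, and the minimum over $d$ holds. For the minimum out-degree we take $d=\delta^+$. We have $\delta^+\ge1$ by strong connectivity and $\delta^+\le n-1$ because the graph is simple. Then $u_{\delta^+}=0$, and $\mathcal E(1)=\Oh(\Lambda)$ is absorbed, which gives $\Ot(n^2/\delta^+)$ with $\delta^+$ unknown.

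The only point needing care is this edge bookkeeping: the case $u_d=n$, which is handled by the general branch, and absorbing the $+1$ from $r=1$. Once the reception-cover framework is available, the rest is substitution.
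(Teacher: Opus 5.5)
Your proof is correct and follows the paper's own argument: weight $1/d$ on every receiver with the out-degree-below-$d$ sources as exceptions (so $\tau=n/d$, $u=u_d$), then Theorem~\ref{thm:fractional} at $j=0$, the general $\Ot(n^{4/3})$ branch when $u_d=n$, and $d=\delta^+$ for the final claim. One wording slip: the cover you build is the uniform-weight cover with weights $1/d$, not a ``unit-weight'' cover.
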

\begin{proof}
Put weight $1/d$ on every receiver and except sources of out-degree below $d$. This gives $\tau=n/d$ and $u=u_d$; apply Theorem~\ref{thm:fractional}. If $u_d=n$, use the general branch, which fits the displayed bound. Take $d=\delta^+$ for the last claim. Neither $d$ nor the out-degree quantile is an input.
\end{proof}
For directed $\Delta$-regular graphs, this improves both the $\Ot(n\sqrt\Delta)$ benchmark of~\cite{wu} and the general $\Ot(n^{4/3})$ bound when $\Delta=n^{\gamma+o(1)}$ with $\gamma>2/3$. The exponent becomes $2-\gamma$.

\subsection{A nonregular family for the quantile bound}\label{sec:nonregularquantile}
Corollary~\ref{cor:outdegree} applies even when minimum out-degree is one. If $u_d=n^{\alpha+o(1)}$ sources have out-degree below $d=n^{\beta+o(1)}$, its two terms give exponent
\begin{equation}\label{eq:quantileexponent}
 \max\{2-\beta,\;2(1+\alpha)/3\}.
\end{equation}
The first term pays for the reception-cover weight $n/d$; the second completes the exceptional population. A few low-degree sources therefore need not determine the full cost.

To realize this distinction, choose $1\le u$ and $2\le d<n-u$. On $n-u$ core vertices, form a directed circulant by connecting each vertex to its next $d$ vertices in cyclic order. At one core vertex $c$, attach the directed cycle $c,p_1,\ldots,p_u,c$. The circulant contains a directed Hamilton cycle on the core, and the attached cycle makes the whole graph strongly connected. Every core vertex has out-degree at least $d$, while each peripheral vertex has out-degree one. Hence $u_d=u$, $\delta^+=1$, and
\[
 m=(n-u)d+u+1.
\]
For sufficiently large $n$, the choice $u=d=\lfloor n^{4/5}\rfloor$ is admissible and gives $m=\Theta(n^{9/5})$. Equation~\eqref{eq:quantileexponent} yields $\Ot(n^{6/5})$, saving $2/15$ in the exponent against the general $\Ot(n^{4/3})$ upper bound. The minimum-out-degree specialization alone gives only $\Ot(n^2)$. No matching lower bound is asserted.

Neither the core nor $d,u$ is supplied to the protocol. The same reception-score execution exploits the quantile witness through its reference-prefix comparison. The next result uses a different witness: one receiver with few missing incoming arcs.

\subsection{A near-universal receiver}
\begin{corollary}[A near-universal receiver]\label{cor:cover}
Suppose receiver $z$ has incoming arcs from all but at most $u$ other vertices. With all other degrees unrestricted, gossiping takes
\begin{equation}\label{eq:hubbound}
 \Oh\bigl(nL^2\ell^2+(u+1)^2L\bigr)
\end{equation}
and also
\begin{equation}\label{eq:coverbound}
 \Ot\bigl(n+(nu)^{2/3}\bigr).
\end{equation}
In particular, $u=\Oh(\sqrt{nL}\,\ell)$ gives $\Oh(nL^2\ell^2)$. No knowledge of $z,u$ is required.
\end{corollary}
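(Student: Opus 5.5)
Both bounds will come from the adaptive reception-score protocol. They are two analyses of that one execution, applying the reference-prefix machinery already proved. For \eqref{eq:coverbound} I would use Theorem~\ref{thm:fractional} with $j=0$ and an original reception cover. Put unit weight on $z$ alone, so $\tau=1$. The exceptions are the at most $u$ vertices other than $z$ lacking an arc into $z$, together with $z$ itself, giving $u+1$ exceptions. Every nonexceptional source then has an arc into $z$, so \eqref{eq:coverdef} holds. If $a_0=n\le u+1$, the general branch already fits. Otherwise Theorem~\ref{thm:fractional} with $r=1$ and $\lambda=1-(1-1/Q)<1/2$ gives $t=\Oh(\log n/\log Q)=\Oh(L/\ell)$. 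The bound is then $\Oh(B(1+t)+\mathcal E(u+2))$, which in soft form is $\Ot(n+(nu)^{2/3})$.

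For the sharper explicit bound \eqref{eq:hubbound}, the factor $t=\Oh(L/\ell)$ multiplied by $B$ gives only $\Oh(nL^3)$, so I would redo the count more carefully. The key point is that every selection winning with receiver $z$ is charged by Lemma~\ref{lem:charge}. By the two-step contraction, $z$ can win at most $H_\star=\Oh(\ell)$ times. I would therefore split the reference trace by winners. Any iteration whose winner is not $z$ still secures at least $(1-1/Q)\max_v d_A(v)$ sources. Since $d_A(z)\ge a-u-1$, the maximum is at least $a-u-1$, so the gain is at least $(1-1/Q)(a-u-1)$ whatever the winner. The first case is $a-u-1\ge$ a constant fraction of $a$. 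Then each iteration removes a constant fraction of the whole active population, and the amplified accuracy $q(\kappa)$ makes this more precise. The second case is that the population is already $\Oh(u+1)$, where direct path selection costs $a^2\Lambda=\Oh((u+1)^2L)$.

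To combine the cases I would follow the proof of Lemma~\ref{lem:charge}, tracking $x=a-u-1$. With the contender cap $e\le b+1$ installed after each win, the next probe has $\kappa\le 2x+2$ and amplification $q\ge nQ/(2x+2)$. The winner then leaves at most $x/q$ active heard neighbors. From these I would derive the same paired quadratic recurrence $x_{r+2}\le x_r^2/(4n)$ for the excess over $u+1$. This gives $\Oh(\ell)$ iterations until at most $u+2$ sources remain. Then I take a reference prefix $j=\Oh(\ell)$ with $a_j\le u+2$. Applying the prefix inequality of Theorem~\ref{thm:pathadaptive}, the direct branch $\mathcal C\le a_j^2\Lambda$ yields $\Oh(B\ell+(u+1)^2L)=\Oh(nL^2\ell^2+(u+1)^2L)$. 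The final claim follows by substituting $u=\Oh(\sqrt{nL}\,\ell)$, and nothing about $z$ or $u$ enters the protocol.

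The main obstacle is the transfer step in the recurrence. The winner need not be $z$, and the survivors near the winner are not the survivors near $z$. So I would bound $x$ via $d_A(z)$, which always dominates $a-u-1$, rather than via the winner's neighborhood. The worry is that the quadratic contraction uses the winner's residual degree, not the global population. If this fails, I would fall back on the one-step bound with amplification $q(\kappa)$. With $\kappa=\Theta(a)$ it yields $a_{\rm next}-u-1\le (a-u-1)\cdot 2a/(nQ)+u+1$. That is still doubly exponential in $\log(n/a)$, and should again give $\Oh(\ell)$ iterations.
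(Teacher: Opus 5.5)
Your derivation of \eqref{eq:coverbound} is fine. Unit weight on $z$, with the at most $u+1$ exceptions (the missed vertices and $z$), is an original reception cover with $\tau=1$. Theorem~\ref{thm:fractional} with $r=1$ then gives $\Oh(B(1+L/\ell)+\mathcal E(u+2))$, which suffices in soft form. The paper instead gets both bounds from one $\Oh(\ell)$-iteration reference prefix that ends at population at most $2(u+1)$. Your route costs an extra $L/\ell$ factor in iterations, which $\Ot$ absorbs.

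Your main route to \eqref{eq:hubbound}, however, does not work as written.

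\textbf{Where it fails.} The cap update $e\le b+1$ bounds the next contender bound by the winning score $b$. But $x=a-u-1$ bounds $b$ only from below, via $b\ge(1-1/q)d_A(z)$. In fact $b$ can be close to $a$, for example when some receiver hears almost every active source; this can be $z$ itself once the vertices it misses have been secured. So $\kappa\le2x+2$ does not follow from the cap. Consequently the paired recurrence of Lemma~\ref{lem:charge}, which tracks one receiver's residual degree, does not transfer to the excess, and your stated worry is justified. The inequality does hold while $x\ge u+1$, but only because $\kappa\le a\le2x$. Once $x<u+1$, the certified contraction is only by a factor of order $(u+1)/(nQ)$ per step. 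So for $u=\Theta(n)$ your argument cannot reach $a\le u+2$ in $\Oh(\ell)$ iterations.

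\textbf{The fix.} Both problems disappear if you commit to your fallback, which is the paper's proof; it just needs to be stated sharply. Put $s=u+1$. Always $\kappa\le a$, so $q(\kappa)\ge nQ/a$. The maximum score is at least $(1-1/q(\kappa))(a-s)$, and the winner's snapshot secures every heard sender, so
\[
 a_{\rm next}\le s+\frac{a-s}{q(\kappa)}\le s+\frac{a^2}{nQ}.
\]
If $a^2/(nQ)\le s$, the next population is at most $2s$. Otherwise $a_{\rm next}\le2a^2/(nQ)$, so the quantity $2a/(nQ)$ squares at every step, starting from at most $2/Q\le1/2$. Hence $\Oh(\ell)$ iterations reach $a\le2s$.

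Stop at $2s$, not at $u+2$. The direct branch then costs $(2s)^2\Lambda=\Oh((u+1)^2L)$, and the prefix inequality of Theorem~\ref{thm:pathadaptive} gives $\Oh(B\ell+(u+1)^2L)$. No two-step cap argument is needed, because the population itself bounds the current probe's contender set.
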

\begin{proof}
Put $s=u+1$. At population $a>s$, the maximum active in-degree is at least $a-s$. Since $q(\kappa)\ge nQ/a$, a winning inventory leaves
\[
 a_{\rm next}\le s+(a-s)/q(\kappa)\le s+a^2/(nQ).
\]
If $s\ge a^2/(nQ)$, the next population is at most $2s$. Otherwise $a_{\rm next}\le2a^2/(nQ)$. Until reaching $2s$, the normalized population $2a/(nQ)$ squares at each step and starts at most $2/Q\le1/2$. Thus $\Oh(\ell)$ iterations suffice. The prefix comparison gives $\Oh(B\ell+\mathcal E(2s))$. The direct path-selection branch of our population-only completion proves~\eqref{eq:hubbound}; the balanced branch of our completion procedure proves~\eqref{eq:coverbound}, including $u=0$. The stopping rule needs neither $z$ nor $u$.

A finite upper count is $2+\lceil\log_2(1+\log_2(nQ))\rceil$: otherwise repeated squaring would leave fewer than one active source while still exceeding $2s$. This comparison uses the reference trace; the actual controller can hand off sooner.
\end{proof}

\subsection{Why a cover gives a different amortization}
The weighted degree-tail charge counts how often individual receivers can win at a fixed score threshold. A reception cover instead lower-bounds the best score using the current population. Our reception-cover strategy can yield fast pruning even when many receivers have large original degrees. Equation~\eqref{eq:covercontraction} uses only the gain from the selected neighbors; any additional rumors in the saved inventory improve the gain. The contraction remains valid when the winner is dormant, when exceptions have already been secured, or when the maximum-scoring receiver changes between iterations.

For a fractional example, suppose every nonexceptional source has arcs into at least two of three receivers. Weight $1/2$ on each gives $\tau=3/2$. Some receiver then has at least $2(a-u)/3$ active in-neighbors, so the maximum certified score is at least $(1-1/Q)2(a-u)/3$. Overlapping neighborhoods preserve the bound. The receiver identities and weights need not be known.

Fix the cover's entry prefix $j$. After $h$ additional broadcasts, the population is $a_{j+h}$. Each step leaves at most a $\lambda$ fraction of the previous positive excess over $u$; monotonicity preserves zero excess. Induction gives
\[
 (a_{j+h}-u)^+\le\lambda^h(a_j-u).
\]
No exact neighborhood count or local estimate of the cover is required. The lower bound on the saved maximum score follows directly from Lemma~\ref{lem:probe}. The score's upper bound is not needed for this argument.

\subsection{Interpretation and scope of reception covers}
The cover inequality counts arcs from each source to the weighted receivers. Reversing the arcs in~\eqref{eq:coverdef} would invalidate the double-counting step. Consequently Corollary~\ref{cor:outdegree} uses minimum \emph{out}-degree; large minimum in-degree alone does not supply that witness. Equality of the two degrees is unnecessary.

For one near-universal receiver, adaptive amplification makes the residual recurrence quadratic until the exception count dominates. This improves the control term from $\Oh(nL^3)$ to $\Oh(B\ell)=\Oh(nL^2\ell^2)$. A cover of arbitrary fixed weight greater than one need not have this quadratic contraction. Distributed constructions can instead benefit from the truncated degree profile when only a few receivers have degree at least two.

At the original state $j=0$, let $\kappa_u(G)$ be the minimum fractional reception-cover weight after allowing at most $u$ exceptional sources, for $0\le u<n$. The minimum exists: weights above one can be truncated to one without violating a source constraint, leaving a compact feasible domain for each of finitely many exception sets. Strong connectivity ensures feasibility by assigning weight one to every receiver. The invariant is nonincreasing in $u$ and at least one. Theorem~\ref{thm:fractional} therefore gives the envelope
\begin{equation}\label{eq:coverenvelope}
 \Ot\bigl(\min_{0\le u<n}\{n\kappa_u(G)+(nu)^{2/3}\}\bigr).
\end{equation}
The envelope holds on one execution because the stopping proof compares it with every cover. No exception-set search is performed. The second term pays for completion of the exceptional population, which remains available in the physical control network.

In the distributed construction below, each $z_i$ has only its cycle successor as an out-neighbor, so $\delta^+=1$ gives a quadratic uniform-weight bound. A reception cover of fixed weight $h$ nevertheless leaves only $h$ exceptions and gives nearly linear communication. Dense graphs can also have useful fractional covers without a small set of near-universal receivers. Arbitrary weights and exceptions capture both possibilities.

\begin{corollary}[A cover criterion for the density benchmark]\label{cor:coverdensity}
The $\Ot((mn)^{3/5})$ benchmark holds for arbitrary polynomial identifiers whenever some reference prefix $j$ has a residual reception cover with
\[
 j+\tau=\Ot((mn)^{3/5}/n),\quad u=\Ot(m^{9/10}n^{-1/10}).
\]
For $m=\Theta(n)$, the sufficient conditions are $j+\tau=\Ot(n^{1/5})$ and $u=\Ot(n^{4/5})$. No global cover or degree-tail hypothesis is needed.
\end{corollary}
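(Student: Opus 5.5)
This is a direct specialization of Theorem~\ref{thm:fractional}, so the plan is to substitute the witness bounds into its guarantee. In its explicit form,~\eqref{eq:fractional} with $r=1$ gives
\[
 \Oh\bigl(B(1+j+t)+\mathcal E(u+1)\bigr),\qquad t=\Oh(1+\tau\log n).
\]
In soft notation this is $\Ot\bigl(n(1+j+\tau)+(nu)^{2/3}\bigr)$, because $B=\Theta(nL^2\ell)$ and $\mathcal E(a)\le\mathcal J(a)=\Ot((na)^{2/3})$. That theorem holds on one execution simultaneously for every residual witness at every reference prefix. The protocol needs only $n,N$, and Theorem~\ref{thm:fractional} is proved for arbitrary $N=n^{\Oh(1)}$ (only $\Lambda=\Theta(L)$ enters). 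So no identifier-range restriction arises, unlike the condensing branch with its explicit $N$ term.

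First, substitute $j+\tau=\Ot((mn)^{3/5}/n)$. The term $n(j+\tau)$ becomes $\Ot((mn)^{3/5})$. The additive $n$ is absorbed because $m\ge n$ by strong connectivity, so $(mn)^{3/5}\ge n^{6/5}\ge n$. Second, substitute $u=\Ot(m^{9/10}n^{-1/10})$. Then $nu=\Ot(m^{9/10}n^{9/10})$, and so
\[
 (nu)^{2/3}=\Ot\bigl((mn)^{3/5}\bigr).
\]
The polylogarithmic factors inside $\Ot$ stay polylogarithmic after raising to the power $2/3$. The linear-density specialization follows by putting $m=\Theta(n)$: $(mn)^{3/5}/n=\Theta(n^{1/5})$ and $m^{9/10}n^{-1/10}=\Theta(n^{4/5})$.

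Two edge cases need checking. The theorem requires $u<a_j$; if instead $a_j\le u$, I would take the comparator at prefix $j$ directly in the prefix inequality of Theorem~\ref{thm:pathadaptive}. Its completion budget is $\Oh(B+\mathcal E(a_j))\le\Oh(B+\mathcal E(u))$, which gives the same bound. If the reference trace ends before $j$, gossiping is already complete. Finally, I would remark that the hypothesis concerns a prefix of the protocol's own reference trace rather than a global cover or degree tail, which justifies the last sentence of the statement.

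The main obstacle is only bookkeeping of the $\Ot$ notation in the second substitution: the exponent $9/10$ is chosen exactly so that $(nu)^{2/3}$ matches $(mn)^{3/5}$. Beyond that, the argument is routine once Theorem~\ref{thm:fractional} is granted.
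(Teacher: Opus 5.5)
Your proposal is correct and matches the paper's proof, which simply observes that both terms $n(j+\tau)$ and $(nu)^{2/3}$ in the soft bound of Theorem~\ref{thm:fractional} fit within $\Ot((mn)^{3/5})$. Your edge case $a_j\le u$ is harmless but unnecessary, since the definition of a residual reception cover already requires $u<a_j$.
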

\begin{proof}
The terms $n(j+\tau)$ and $(nu)^{2/3}$ in Theorem~\ref{thm:fractional} both fit within the stated benchmark. The cover is a witness for analysis and need not be supplied or reconstructed.
\end{proof}

\begin{proposition}[Joint coverage without a near-universal receiver]\label{prop:jointcover}
For every fixed integer $h\ge2$, there are sparse strongly connected graphs with linear diameter, no receiver adjacent from all but $o(n)$ other vertices, and polynomial-identifier gossiping time $\Oh(nL^2\ell)$.
\end{proposition}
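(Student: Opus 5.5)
The plan is to build an explicit family and then read off the time bound from the clearing profile of Theorem~\ref{thm:clearadaptive} at threshold $D=2$. Clearing gives $\Oh(nL^2\ell)$ directly. By contrast, the reception-cover contraction of Theorem~\ref{thm:fractional} would cost $\Oh(h\log n)$ iterations and hence an extra logarithmic factor.

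\emph{Construction.} Fix $h\ge2$ and take $n$ large. Arrange all vertices on a directed Hamilton cycle
\[
 z_1\to z_2\to\cdots\to z_h\to v_1\to v_2\to\cdots\to v_{n-h}\to z_1 .
\]
Partition $\{v_1,\ldots,v_{n-h}\}$ into $h$ groups $V_1,\ldots,V_h$ of sizes $\lfloor (n-h)/h\rfloor$ or $\lceil (n-h)/h\rceil$. Add an arc from every $x\in V_i$ to $z_i$, skipping it when it would duplicate a cycle arc. Each $z_i$ keeps its cycle successor as its only out-neighbor, matching the description given earlier in the section.

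The basic properties are as follows.
\begin{itemize}
\item The graph is strongly connected because of the Hamilton cycle.
\item It is sparse, since $m\le 2n$.
\item A non-hub vertex has in-degree one, namely its cycle predecessor.
\item A hub $z_i$ has in-degree at most $|V_i|+1\le n/h+2$. It therefore misses at least $n(1-1/h)-\Oh(1)=\Theta(n)$ other vertices, so no receiver is adjacent from all but $o(n)$ vertices.
\end{itemize}

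For linear diameter, consider any path from $z_h$ to $v_{n-h}$. Every non-cycle arc enters a hub, and hubs leave only along the cycle, ending at $z_h\to v_1$. So every such path must eventually traverse $v_1\to v_2\to\cdots\to v_{n-h}$ along the cycle. Jumping back into a hub only restarts this traversal. Hence the distance from $z_h$ to $v_{n-h}$ is at least $n-h-1=\Omega(n)$.

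\emph{Time bound.} Evaluate $\chi_2=\min\{\lfloor n/2\rfloor,t_2\}$ from~\eqref{eq:clearcharge}. Every non-hub closed in-neighborhood has size two, so only the at most $h$ distinct hub neighborhoods have size greater than $2$. This gives $t_2\le h$. Substituting into~\eqref{eq:clearprofile},
\[
 \widehat G_2=2nL^{3/2}\ell^{1/2}+n(1+h)L^2\ell=\Oh(nL^2\ell)
\]
for fixed $h$. The protocol of Theorem~\ref{thm:clearadaptive} uses only $n,N$ and works for every polynomial identifier range, which proves the claim.

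As a cross-check, unit weights on $z_1,\ldots,z_h$ form a reception cover of weight $h$ whose exceptions are the hubs themselves. This exhibits the ``joint coverage'' the proposition's title refers to, but the clearing argument is what gives the sharp bound.

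\emph{Main obstacle.} The step needing the most care is verifying linear diameter, because arcs into hubs might create shortcuts. Placing all hubs consecutively on the cycle is what makes the argument work: every shortcut lands in the hub block, and the block exits only through $z_h\to v_1$, so every route must still cross the entire $v$-segment. A secondary point is the bookkeeping of distinct closed neighborhoods, specifically confirming that each hub's in-neighborhood, which contains its cycle predecessor, still yields at most $h$ large classes. It does, since there are only $h$ hubs.
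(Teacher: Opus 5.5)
Your proposal is correct and follows the paper's proof: the same Hamilton-cycle-plus-hub construction, the bound $t_2\le h$ because every non-hub has in-degree one, and Theorem~\ref{thm:clearadaptive} at $D=2$. The only difference is cosmetic. You place the hubs consecutively, so the forced in-neighbor chain $z_h,v_1,\ldots,v_{n-h}$ gives the linear distance, whereas the paper interleaves each $z_i$ with its block $P_i$ and measures the distance from $z_i$ to the last vertex of $P_i$.
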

\begin{proof}
Take a directed Hamilton cycle in the order
\[
 z_1,P_1,z_2,P_2,\ldots,z_h,P_h,z_1,
\]
where each $P_i$ is a directed block of $\Theta(n/h)$ peripheral vertices. Add the arc $(p,z_i)$ for every $p\in P_i$, omitting duplicates. There are between $n$ and $2n$ edges. Unit weights on $\{z_1,\ldots,z_h\}$ cover every peripheral source, leaving at most $h$ exceptions. Only the $h$ receivers $z_i$ have in-degree at least two, so $t_2\le h$ and $\chi_2\le h$. Theorem~\ref{thm:clearadaptive} with $D=2$ gives $\Oh(nL^{3/2}\ell^{1/2}+(1+h)B)=\Oh(nL^2\ell)$ for fixed $h$. This argument uses the clearing branch directly and does not require a survivor estimate from another execution.

A peripheral receiver has in-degree one. A receiver $z_i$ has in-degree at most $|P_i|+1=n/h+\Oh(1)$ if the blocks are balanced, so every receiver misses $\Omega(n)$ vertices. From $z_i$ to the last vertex of $P_i$, every added arc returns to $z_i$; no added arc skips forward in $P_i$. The distance is therefore $|P_i|=\Theta(n)$ for fixed $h$. The cycle ensures strong connectivity and the distance also proves an $\Omega(n)$ gossiping lower bound. This family shows why joint coverage is strictly more general than a single near-universal receiver.
\end{proof}

The older reception-score analysis also applies to the distributed construction. A below-$2$ search gives $d_A(v)\le1$, leaving at most one peripheral source per block and at most $h$ hub sources. Thus $a_2\le2h$ and $\rho_2\le hH_\star$. Theorem~\ref{thm:pathadaptive} gives $\Oh(B(1+hH_\star)+\mathcal E(2h))=\Oh(nL^2\ell^2)$ for fixed $h$. The clearing proof above removes one factor $\ell$.

\begin{proposition}[Sparse graphs with linear diameter]\label{prop:core}
There are strongly connected graphs with $m=\Theta(n)$, maximum in-degree and diameter $\Theta(n)$, and $\Theta(\sqrt n)$ receivers of in-degree $\Theta(\sqrt n)$, on which the protocol takes $\Oh(nL^2\ell)$ rounds for arbitrary polynomial identifiers. This is optimal in the polynomial exponent and violates condition~\eqref{eq:concentration}.
\end{proposition}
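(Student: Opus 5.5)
We construct an explicit family and then apply Theorem~\ref{thm:clearadaptive} with $D=2$, mirroring Proposition~\ref{prop:jointcover}. Put $s=\lceil\sqrt n\rceil$.

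\emph{Construction.} Arrange all $n$ vertices on one directed Hamilton path $v_1\to v_2\to\cdots\to v_n$ and add the closing arc $v_n\to v_1$, so the graph is strongly connected. Choose $s$ hubs $z_1,\ldots,z_s$ spread along this cycle. Hub $z_i$ receives arcs from the $\Theta(\sqrt n)$ vertices of a private block $P_i$ placed immediately before it. These added arcs go forward only into their hub, so they never shortcut the cycle.

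To make the maximum in-degree $\Theta(n)$, designate one further hub $z_0$. It receives arcs from every vertex of a block $P_0$ of size $\Theta(n)$, placed just before $z_0$. The blocks are disjoint, so the arc count is $n+\sum_i|P_i|+|P_0|=\Theta(n)$.

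\emph{Degrees and diameter.} The in-degrees are read off directly. Each $z_i$ with $i\ge1$ has in-degree $\Theta(\sqrt n)$, there are $\Theta(\sqrt n)$ of them, and $z_0$ has in-degree $\Theta(n)$.

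For the diameter, every added arc ends at a hub that lies cycle-forward of its tail, at distance at most its block length. It can therefore shorten a walk only by skipping one block toward its hub. After reaching $z_0$, the walk continues along the cycle. Hence the distance from the first vertex after $z_0$ around to the vertex just before $P_0$ is $\Omega(n-|P_0|)$. Taking $|P_0|\le n/2$ keeps this $\Theta(n)$.

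\emph{Violation of~\eqref{eq:concentration}.} We have $\dbar=\Theta(1)$, while $h_{\lceil C_0\dbar\rceil}\ge s$ once $\sqrt n$ exceeds $C_0\dbar$. This is unbounded, so no fixed $C_1$ works.

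\emph{Running time.} Take $D=2$. A closed neighborhood of size greater than $2$ requires in-degree at least $2$. Only the $s+1$ hubs qualify, so $t_2\le s+1$ and $\chi_2=\min\{\lfloor n/2\rfloor,t_2\}=\Oh(\sqrt n)$. Theorem~\ref{thm:clearadaptive} then gives
\[
\widehat G_2=\Oh\bigl(nL^{3/2}\ell^{1/2}+n\sqrt n\,L^2\ell\bigr).
\]
This is too large: a $\sqrt n$ class count is not enough. I therefore expect this to be the main obstacle, namely obtaining $\Oh(nL^2\ell)$ even though there are $\Theta(\sqrt n)$ heavy receivers.

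\emph{Fix.} The source budget $\lfloor n/D\rfloor$ must bite, so use a larger $D$. Take $D=\Theta(\sqrt n)$, chosen just below the small-hub in-degrees. Then $t_D\le s+1$ and $\lfloor n/D\rfloor=\Theta(\sqrt n)$. The charge $n\chi_D L^2\ell$ is still $n^{3/2}$, and the completion term $nD$ is also $n^{3/2}$. This also fails.

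So the construction must be modified so that the heavy receivers' in-neighborhoods overlap into $\Oh(1)$ distinct closed neighborhoods. Make all $\Theta(\sqrt n)$ heavy receivers share one common in-neighbor set $S$ of size $\Theta(\sqrt n)$. Their closed neighborhoods $S\cup\{z_i\}$ are still distinct, so $t_2$ is unchanged. However, a single clearing phase at any $z_i$ secures all of $S$. Afterwards every other hub has active in-degree at most $1$ and cannot score $\ge2$.

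With this design the correct accounting runs through the clearing execution directly. There is one phase for $S$ and one for $P_0$, hence $\Oh(1)$ phases. The prefix inequality~\eqref{eq:clearprefix} then applies with comparator $j=\Oh(1)$ and $e_j\le2$, giving $\Oh(B+C_2)=\Oh(nL^2\ell)$.

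It remains to check two things. First, the ends of $S$ must not create shortcuts. Route every arc from $S$ into the hubs, and place all hubs immediately after $S$ on the cycle, so the long cycle segment keeps diameter $\Theta(n)$. Second, optimality follows because the linear diameter forces an $\Omega(n)$ lower bound, and $nL^2\ell=n^{1+o(1)}$ matches it in the polynomial exponent.
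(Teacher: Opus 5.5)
Your final argument is correct, but it takes a different route from the paper's.

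\textbf{The paper's route.} The paper makes the $\Theta(\sqrt n)$ heavy receivers a complete directed core $c_1,\ldots,c_r$ on the Hamilton cycle $z,p_1,\ldots,p_f,c_1,\ldots,c_r,z$. Each $c_i$ with $i\ge2$ then has the entire core as its closed in-neighborhood. So $t_2\le3$: the classes are the core, the core plus $p_f$, and $\NB(z)\cup\{z\}$. Theorem~\ref{thm:clearadaptive} at $D=2$ then gives $\Oh(B)$ with no further work.

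\textbf{Your route.} In your bipartite design $S\to\{z_1,\ldots,z_s\}$ the closed neighborhoods stay distinct. Hence $\chi_2=\Theta(\sqrt n)$, and the theorem's profile bound gives only $\Oh(n^{3/2}L^2\ell)$. You correctly argue beneath it, on the clearing execution itself. Any phase with maximum score at least $2$ has a winner of active in-degree at least $2$, because scores never exceed active degrees. That winner is therefore $z_0$ or a hub. Clearing $z_0$ reduces its active degree to zero. Clearing any hub secures all of $S$, and since the hubs sit consecutively after $S$, every hub is left with active in-degree at most $1$. So at most two such phases occur, and the prefix inequality \eqref{eq:clearprefix} finishes the bound.

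\textbf{What each buys.} Your route shows that the clearing execution can beat its own $\chi_D$ charge. One clearing neutralizes every receiver whose in-neighborhood it covers up to one vertex, not only receivers in its own class. The cost is that you must trace winners and invoke \eqref{eq:clearprefix} directly. The paper's route needs only a one-line appeal to the proved profile theorem.

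\textbf{Details to tighten.}
\begin{itemize}
\item The cap reaches $e_j\le2$ only after a complete phase whose maximum is at most $1$. So the comparator is $j\le3$, not $j=2$, with $C_j=0$ if $F_j=V$. This gives $T\le4\max\{3U,C_2\}+2U=\Oh(B)$.
\item State the final construction once, explicitly: $S$ as a consecutive block, the hubs immediately after it, and $P_0,z_0$ disjoint from both. The abandoned attempts should not appear in the written proof.
\end{itemize}
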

\begin{proof}
Let $r=\lfloor\sqrt n\rfloor$ and $f=n-r-1$. Use the directed Hamilton cycle
\[
 z,p_1,\ldots,p_f,c_1,\ldots,c_r,z.
\]
Add $(p_i,z)$ for every $i$ and all missing arcs between distinct vertices of $\{c_1,\ldots,c_r\}$. The graph is simple and strongly connected. The graph's edge count lies between $n$ and $2n+r^2$, while $z$ has in-degree $f+1=\Theta(n)$. Every core vertex has in-degree at most $r$, and every $p_i$ has in-degree one. Among receivers of degree at least two, the closed neighborhoods are the core, the core plus $p_f$, and $\NB(z)\cup\{z\}$. Hence $t_2\le3$. Theorem~\ref{thm:clearadaptive} with $D=2$ gives $\Oh(B)=\Oh(nL^2\ell)$; the finitely many small cases are absorbed by the constant.

The distance from $z$ to $p_f$ is $f=\Theta(n)$: an additional arc from a peripheral vertex goes back to $z$, and core arcs have both endpoints beyond $p_f$. Thus $\Omega(n)$ rounds are necessary for the original rumor of $z$ to reach $p_f$. Finally the average degree is bounded, whereas the $r$ core receivers have growing degree. They violate~\eqref{eq:concentration} for every fixed $C_0,C_1$ when $n$ is sufficiently large.
\end{proof}

The sparse core also retains the earlier receiver-based guarantee: $z$ misses at most $r=\lfloor\sqrt n\rfloor$ other vertices, so Corollary~\ref{cor:cover} gives $\Oh(nL^2\ell^2)$. Clearing its three high-degree classes improves this to $\Oh(nL^2\ell)$.

\begin{corollary}[A two-parameter family]\label{cor:parameter}
For integers $1\le s\le u\le n/2$, there are strongly connected graphs with $m=\Theta(n+su)$ and linear diameter and maximum in-degree that admit
\begin{equation}\label{eq:parameter}
 \Ot(n+[n(u-s)]^{2/3})
\end{equation}
rounds for polynomial identifiers. At $u=\lfloor n^{7/10}\rfloor$, $s=\lfloor n^{1/2}\rfloor$, this is $\Ot(n^{17/15})$, saving $14/75$ against the density exponent $33/25$. If $s=u$, the time is $\Oh(nL^2\ell)$ with an $\Omega(n)$ lower bound. Taking $s=u=\lfloor n^{7/10}\rfloor$ gives $m=\Theta(n^{7/5})$, saving $1/3$ against the general exponent $4/3$.
\end{corollary}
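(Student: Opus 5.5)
Build a graph that interpolates between the two constructions above: a distributed cover, as in Proposition~\ref{prop:jointcover}, and a complete core, as in Proposition~\ref{prop:core}. Set $f=n-u-1$ and take the directed Hamilton cycle
\[
 z,p_1,\ldots,p_f,c_1,\ldots,c_u,z .
\]
Add $(p_i,z)$ for every $i$. Choose $s$ hub vertices $c_1,\ldots,c_s$ in the block of $u$ extra vertices, and add arcs from every $c_j$ with $j\le u$ to every hub $c_i$ with $i\le s$, omitting loops and duplicates. This contributes $\Theta(su)$ arcs, so $m=\Theta(n+su)$. As in the earlier propositions, $z$ has in-degree $\Theta(n)$. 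The distance from $z$ to $p_f$ is $f=\Theta(n)$, because every added arc from a peripheral vertex returns to $z$ and every core arc stays beyond $p_f$. Hence the diameter is linear and $\Omega(n)$ is a lower bound whenever $u\le n/2$.

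For the upper bound, use the residual reception cover of Theorem~\ref{thm:fractional} at $j=0$. Put unit weight on $z$ and on the $s$ hubs, so $\tau=s+1$. This covers every $p_i$ (through $z$) and every $c_j$ (through the hubs). The only possible exceptions are the non-hub core vertices whose sole out-arc into the cover is missing, together with a constant number of boundary vertices. I intend to arrange that exactly $u-s+\Oh(1)$ vertices are exceptions, namely the non-hub $c_j$. One clean choice is to add only the hub-to-hub arcs and the arcs $(c_j,z)$ for the hubs. The exception count is then $u-s+\Oh(1)$ while the edge count stays $\Theta(n+su)$ through the hub in-arcs. Whichever variant I use, I must check that the count of uncovered sources is $u-s$ up to $\Oh(1)$. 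Theorem~\ref{thm:fractional} then gives $\Ot(n\tau+(n(u-s))^{2/3})$. This matches~\eqref{eq:parameter} provided $\tau$ is polylogarithmic or is absorbed.

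The main obstacle is this weight term: $\tau=s+1$ would cost $ns$, which is too large. I will therefore not rely on hub weights. Instead I will use clearing (Theorem~\ref{thm:clearadaptive} with $D=2$) together with the survivor bound of Theorem~\ref{thm:pathadaptive}. For the case $s=u$, the receivers of in-degree at least two have few distinct closed neighborhoods: the complete core, the core together with its cycle predecessor, and $\NB(z)\cup\{z\}$. Hence $t_2=\Oh(1)$, and clearing gives $\Oh(B)=\Oh(nL^2\ell)$, exactly as in Proposition~\ref{prop:core}. For general $s<u$, I make the hubs' closed in-neighborhoods identical or nearly identical, giving $\Oh(1)$ classes. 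Clearing then secures every source except the $u-s$ non-hub core vertices; these have in-degree at most one from the cycle, so they are not charged. I then apply the survivor-sensitive completion: the balanced branch of Lemma~\ref{lem:survivors} with $a=u-s+\Oh(1)$ gives $\Ot((n(u-s))^{2/3})$. The sticking point is that clearing and survivor completion are different executions. I will resolve it by adding a clearing-then-population-completion branch to the portfolio: run $\CPrune(2)$, then the standalone completion $\Oh(B+\mathcal E(a))$, then replay the whole history. Alternatively, the $\Prune(2)$ trace with $\rho_2=\Oh(\ell)$ and $a_2=u-s+\Oh(1)$ plugged into Theorem~\ref{thm:pathadaptive} gives the same bound up to polylogarithmic factors.

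The numerics are then arithmetic. For $u=n^{7/10}$ and $s=n^{1/2}$, we have $m=\Theta(n^{6/5})$, so the density exponent is $3(1+6/5)/5=33/25$. The bound is $(n\cdot n^{7/10})^{2/3}=n^{17/15}$, and $33/25-17/15=14/75$. For $s=u=n^{7/10}$, we have $m=\Theta(n^{7/5})$, the time is $\Oh(nL^2\ell)$, and the saving against $4/3$ is $1/3$.
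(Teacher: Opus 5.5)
Your structural checks and exponent arithmetic are fine. The upper-bound argument, however, rests on a misdiagnosis, and the variants you assemble do not fit together as written.

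\textbf{The cover obstacle is not real.} A reception cover needs, for each nonexceptional source, only one weighted receiver that the source points into; it does not need one weight per hub. The paper's proof uses exactly this, orienting the $\Theta(su)$ extra arcs \emph{out of} the $s$ selected core vertices $S$ into all other vertices of the core $C=\{c_1,\ldots,c_u\}$. The covering then works as follows:
\begin{itemize}
\item every selected source other than $c_u$ points into $c_u$;
\item $c_u$ points into $z$ along the cycle;
\item every $p_i$ points into $z$.
\end{itemize}
Unit weights on $z$ and $c_u$ therefore give $\tau=2$, with exceptions inside $\{z\}\cup(C\setminus S)$, at most $u-s+1$ of them. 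Theorem~\ref{thm:fractional} at $j=0$ then gives~\eqref{eq:parameter}. Clearing with $t_2\le3$ is used only for $s=u$.

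This orientation is what reconciles the two requirements you could not meet simultaneously: $\Theta(su)$ arcs, and $u-s$ genuinely uncovered sources with $\tau=\Oh(1)$. Your own variants each fail one of them.
\begin{itemize}
\item With your orientation (every core vertex into every hub), the non-hub $c_j$ point into all hubs. They are therefore not exceptions, contrary to your claim. In fact unit weights on $z$ and the single hub $c_1$ already give a cover of weight $2$ with $\Oh(1)$ exceptions.
\item Your ``clean choice'' (hub-to-hub arcs plus $(c_j,z)$ for hubs) adds only $\Theta(s^2)$ arcs, so $m=\Theta(n+s^2)$, not $\Theta(n+su)$. At $s=\lfloor n^{1/2}\rfloor$, $u=\lfloor n^{7/10}\rfloor$ this makes $m=\Theta(n)$, and the $33/25$ comparison fails.
\end{itemize}

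\textbf{The fallback works on your first construction, but not for the reason you give.} The survivor step and the new portfolio branch are unnecessary. There, $c_2,\ldots,c_s$ all have closed in-neighborhood $C$, $c_1$ has $C\cup\{p_f\}$, and the non-hubs and the $p_i$ have in-degree one. So $t_2\le3$ for \emph{every} $s$, and Theorem~\ref{thm:clearadaptive} at $D=2$ gives $\Oh(nL^2\ell)$ directly.

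Your survivor justification (non-hubs have in-degree one, so they are not charged, hence they survive) confuses a vertex's role as receiver with its role as source. The non-hubs are in-neighbors of every hub. At any below-$2$ stopping certificate, where $d_A\le1$ everywhere, $d_A(c_1)\le1$ and $d_A(z)\le1$ leave at most four active sources. This gives $a_2=\Oh(1)$ in your $\Prune(2)$ alternative too.

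Your family is thus degenerate: the term $[n(u-s)]^{2/3}$ never enters. The statement still follows because it is an existential upper bound, so the proof should be written as that one-line clearing argument. It would not exhibit the interpolation in $u-s$ that the corollary is about. For that, the paper's outward orientation and the $\tau=2$ cover are the essential ingredients.
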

\begin{proof}
Use the cycle $z,p_1,\ldots,p_f,c_1,\ldots,c_u,z$, where $f=n-u-1$. Add all $(p_i,z)$ and the arcs from $s$ selected core vertices to all other core vertices, omitting duplicates. The cycle ensures strong connectivity and $m=\Theta(n+su)$.

Put unit reception weights on $z,c_u$. Each peripheral source reaches $z$; each selected core source other than $c_u$ reaches $c_u$; and $c_u$ reaches $z$. The uncovered sources are contained in $\{z\}$ and the unselected core, so there are at most $u-s+1$. Theorem~\ref{thm:fractional} with $\tau=2$ proves~\eqref{eq:parameter}; the additive $n$ absorbs the constant exception offset. 

When $s=u$, there are at most three distinct closed in-neighborhoods among receivers of degree at least two: the core, the core plus $p_f$, and $\NB(z)\cup\{z\}$. Thus $\chi_2\le t_2\le3$. Theorem~\ref{thm:clearadaptive} with $D=2$ gives $\Oh(B)=\Oh(nL^2\ell)$, independently of the number of core vertices. The small cases $u\le2$ satisfy the same estimate by the same bound on $t_2$.

Vertex $z$ has in-degree $f+1=\Theta(n)$ and distance $f=\Theta(n)$ to $p_f$: added peripheral arcs return to $z$, and core arcs cannot skip that path. This proves the diameter and $\Omega(n)$ lower bound. Substitution gives density exponent $33/25$ and time exponent $17/15$ in the first case; $s=u=n^{7/10+o(1)}$ gives density exponent $7/5$ and time exponent one. The savings compare upper bounds.
\end{proof}

\subsection{Exact accounting for the two-parameter family}\label{sec:familyaccounting}
A single receiver missing $u=n^{\alpha+o(1)}$ sources gives exponent $\max\{1,2(1+\alpha)/3\}$. In Corollary~\ref{cor:parameter}, the second weighted receiver reduces the exceptions to $u-s+1$, while the exact arc count below determines the density exponent.

Write $C=\{c_1,\ldots,c_u\}$ and let $S\subseteq C$ be the $s$ selected core sources. Counting the added arcs exactly gives
\begin{equation}\label{eq:familyexactedges}
 m=n+f+s(u-1)-|S\setminus\{c_u\}|.
\end{equation}
The subtraction removes precisely the selected-source core arcs already on the Hamilton cycle; all peripheral arcs into $z$ are new. Thus $m=\Theta(n+su)$, including $u=1$.

Unit weights at $z,c_u$ cover every peripheral source, every selected core source, and $c_u$ itself. A valid exceptional set is
\[
 U_*=(\{z\}\cup(C\setminus S))\setminus\{c_u\},
 \qquad e=|U_*|\le u-s+1.
\]
At a reference boundary, the maximum score is at least $(1-1/Q)(a_j-e)/2$ whenever $a_j>e$. Consequently,
\[
 (a_{j+1}-e)^+\le\left(1-\frac{1-1/Q}{2}\right)(a_j-e)^+.
\]
The factor is at most $3/4$, so $\lceil\log_{4/3}n\rceil$ selections suffice to leave at most $e+1$ sources. The prefix comparison gives $\Oh(B\log n+\mathcal E(e+1))$, even if the actual handoff occurs earlier. Dormant weighted receivers remain eligible, so the witness persists.

When $s=u$, $U_*=\{z\}$. The earlier reception-score analysis also gives $\Oh(BH_\star+\mathcal E(3))=\Oh(nL^2\ell^2)$: at most three high-degree neighborhood classes qualify before the first below-$2$ search. At that search, $b\le1$ and integrality give $d_A(v)\le1$ everywhere. The two unit-weight receivers cover all but $z$, so $a-1\le d_A(z)+d_A(c_u)\le2$ and $a\le3$. Clearing strengthens this to $\Oh(B)$ by securing each of those classes once. Both bounds apply throughout $m=\Theta(n+u^2)$, up to quadratic density; the improved statement in Corollary~\ref{cor:parameter} uses the clearing execution. The distance $f=\Theta(n)$ establishes the linear polynomial lower bound, without matching the logarithmic factors.

\emph{Why the complete core is inexpensive.} When $s=u$, many high-degree receivers share one closed neighborhood. Clearing an elected receiver secures every source in that neighborhood, so no other receiver in its class can later justify a positive-score phase. At most three high-degree classes are charged. The bounded-degree completion then costs $\Oh(B)$ at comparator $D=2$. The protocol uses certified scores and common ledgers without discovering the core or the classes. This explains the improved logarithmic budget on the dense family.

The survivor bound also transfers the sparse density benchmark to arbitrary polynomial identifiers. The sufficient conditions are that some $D$ has $\rho_D=\Ot(n^{1/5})$ and $a_D=\Ot(n^{4/5})$. Substitution in~\eqref{eq:survivorbound} proves $\Ot(n^{6/5})$ without an exact frontier.

\section{Condensing on a Frozen Active Set}\label{sec:transfer}
The condensing analysis of~\cite{wu} uses round robin for packet accumulation and for prescribed delivery across a frontier. These uses impose different guarantees. Our packet certificate handles accumulation by making sufficiently large pending knowledge a collection witness. Exact one-hop coverage is retained for a particular frontier packet, with its cost controlled by residual interference after pruning.

\subsection{Proof strategy and delivery requirements}
The path-based completion procedures above use ordered delivery. We next develop an alternative based on layered condensing: an unresolved packet either reaches a higher receiver class or reaches a holder whose own original rumor is globally known. Progress through the finitely many classes establishes quasi-gossiping. Pruning fixes the data sender set at entry, while all physical vertices remain available for control.

Condensing requires a second ledger. The original-source ledger $F_0$ stays fixed and determines the data senders; the phase ledger records owners whose entry packets have been collected. Collecting an owner's packet makes its own original rumor global; any target retained by that owner is then quasi-disseminated, even if learned after the snapshot was created. Retention and complete-history replay preserve this progress.

The analysis separates time near collections from the remaining time. Collections consume pending packet owners and are countable. Away from collections, reaching a packet threshold would force a flush at the next frozen test. The certificate must therefore deliver the whole neighboring pending union, giving a latency bound for a specified target.

A frontier into an unprocessed class has no corresponding collection implication. With residual active in-degree bounded by $\kappa$, exact one-hop delivery costs $\Ot(\min\{N,\kappa^2\})$. This separate requirement explains why the density transfer has conditions. The proof next specifies the class budgets, aligned frames, and finite stopping horizon.

The indices play different roles. A source phase $s$ follows packets \emph{owned} by vertices of $V_s$; a receiver channel $j$ controls interference on edges \emph{entering} $V_j$. For processed high channels, $j\le s$. Their packet certificates use the source threshold $\lceil\beta_s\rceil$, while $\beta_j$ bounds the channel period because $\beta_s\le\beta_j$. Eligibility counts packets by owner class at every holder, regardless of the holder's own class. Keeping these roles separate explains both the latency sum and the common collection bound.

\emph{Proof outline.} We first align complete data invocations between packet-ledger updates. A threshold outcome then forces a countable collection, so invocations away from collections deliver the full neighboring packet union. A finite phase horizon leaves enough subsequent tests to act on any persistent witness. Finally, an exact schedule crosses the unprocessed frontier, where the collection implication is unavailable.

\begin{lemma}[Frozen-active-set condensing]\label{lem:active}
Suppose a common $F_0$ consists of globally known original sources. Freeze $A=V\setminus F_0$ as the data sender set. Let $\kappa\ge1$ be a common upper bound on $d_A(v)$ for every receiver, and let $M\ge\max(n,\sum_v d_A(v))$ be common. Put $H_\kappa=n+\sqrt{M\kappa}$. A finite procedure reaches quasi-gossiping in
\begin{equation}\label{eq:activecost}
 \Ot(\min\{N,\kappa^2\}+(MH_\kappa)^{3/5})
\end{equation}
rounds. Control uses the full physical graph. The data subgraph may be disconnected and may have zero in-degrees. If $d_A(v)=0$ for every $v$, quasi-gossiping already holds.
\end{lemma}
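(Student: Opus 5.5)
We follow the layered condensing scheme of~\cite{wu}, replacing round robin by the packet-union certificate of Lemma~\ref{lem:union} wherever accumulation is involved. Exact one-hop delivery is kept only for the frontier crossing. First we partition receivers into $\Oh(\log n)$ classes $V_1,\ldots,V_h$ by active in-degree, in decreasing order, with thresholds $\beta_j$. Since $\sum_v d_A(v)\le M$, the class of in-degree about $\beta$ has at most $M/\beta$ members. Each source phase $s$ works with packets owned by $V_s$ and uses threshold $\lceil\beta_s\rceil$. Receiver channels $j\le s$ are interleaved round-robin among the classes. Channel $j$ runs the schedule of Lemma~\ref{lem:union} with $K=\beta_j+1$, $r=\lceil\beta_s\rceil$, and a fixed $Q$. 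One period of channel $j$ therefore costs $\Ot(\beta_j+\beta_s\beta_j)$ up to logarithms.

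Between data invocations we run frozen OR tests and $\Drain$-style collections on the phase ledger $\mathcal L$, with eligibility $|W_v\setminus\mathcal L|\ge\lceil\beta_s\rceil$. Each successful collection secures at least $\beta_s$ new owners of class $V_s$. So phase $s$ has at most $|V_s|/\beta_s\le M/\beta_s^2$ collections, each costing $B$. We align complete data invocations between ledger updates, so every invocation sees frozen entry packet sets. Take an invocation not adjacent to a collection. The threshold alternative of~\eqref{eq:union} would make the receiver eligible at the next frozen test and force a collection. Hence the union alternative holds, and the full neighboring pending union is delivered. An unresolved packet therefore advances one hop per such invocation until its holder's owner is collected, which makes it quasi-disseminated by definition. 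By ``advances'' we mean it either reaches a receiver of a higher channel or reaches a collected holder.

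The latency sum counts the invocations a target needs to cross each processed channel, bounded by the class path length. Balanced against the collection count $\sum_s M/\beta_s^2\cdot B$, this gives the $(MH_\kappa)^{3/5}$ term. The $\sqrt{M\kappa}$ part of $H_\kappa$ comes from the top class, whose threshold is capped at $\kappa$. We would choose $\beta$ geometrically and minimize $\sum_s(\text{latency}\cdot\beta_s^2+MB/\beta_s^2)$. We must also fix a finite phase horizon with enough frozen tests after the last delivery, so that any persistent witness gets acted on.

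The crossing into an as-yet-unprocessed class needs a prescribed packet to cross one arc, and no collection implication is available there. For that frontier we use either exact singleton slots of length $N$ or a strong selector for $\kappa+1$ contenders of length $\Oh(\kappa^2\log N)$, whichever is cheaper. This crossing is paid once per phase, giving the $\min\{N,\kappa^2\}$ term up to logarithms.

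The main obstacle is the alignment and frozen-test argument. We must show that an invocation ``away from collections'' really certifies the union alternative for every pending packet, even though the ledger changes only at common boundaries and holders keep learning. We must also account for the overshoot at phase boundaries. We would first prove a per-phase invariant: after each complete frame, either a collection occurred (charged to the $M/\beta_s^2$ budget) or every $V_s$-owned pending packet advanced one hop along its shortest data path. Finally, when $d_A\equiv0$, strong connectivity forces every active source to have a dormant out-neighbor. Then $A$ is empty or globally quasi-disseminated, and the statement is immediate.
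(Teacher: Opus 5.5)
Your outline has the paper's architecture: degree layers, source phases, packet-union certificates on processed channels, a separate phase ledger with frozen collections, aligned invocations, and an exact $\min\{N,\kappa^2\}$ crossing at an unprocessed frontier. The step that bounds the latency is missing, however.

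\textbf{The gap.} You create packets only for $V_s$ and make eligibility depend only on $|W_v\setminus\mathcal L|\ge\lceil\beta_s\rceil$. Persistent eligibility can then cap only the number of $V_s$ vertices on a target's route. Nothing caps the number of vertices of the other processed classes. So ``one hop per unpaid invocation'' gives a latency of order $\sum_j|V_j|\cdot(\text{period of channel }j)$, which can far exceed $(MH_\kappa)^{3/5}$. Your closing invariant, one hop per complete frame, is coarser still.

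\textbf{How the paper closes it.} Every vertex creates a phase packet. Eligibility counts pending packets per owner class with class-specific thresholds: $\lceil\alpha_j\rceil$ for every low class and $\lceil\beta_j\rceil$ for every processed high class $j\le s$, as in~\eqref{eq:phaseelig}. At an analytical frame boundary $\tau$, take an unreceived $w$ nearest to $v$, its received predecessor $u$, and a loop-erased temporal reception path $P$ of the target to $u$. Since payloads append all known phase packets, $u$ holds the packet of every vertex of $P$.
\begin{itemize}
\item If $w$ is processed and $P$ followed by $(u,w)$ stayed within every class threshold, the fixed-path latency induction would have delivered the target by $\tau$. That induction allows at most $2\lambda_j$ or $2\gamma_j$ unpaid rounds per edge into class $j$.
\item Hence some class reaches its threshold on $P$. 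Those witness packets are still pending, because flushing any of them makes the target quasi-disseminated through that owner's own rumor and retention.
\item So $u$ stays eligible through the $C_s+2$ reserved frames, which exceeds the collection cap.
\end{itemize}
This argument is what justifies the bounded-route sum $A_s$ and the horizon $T_s=A_s+4T_0(C_s+1)$.

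\textbf{The cost accounting also has to be rebuilt.}
\begin{itemize}
\item Keep the degree scale and the threshold separate. The bound $|V_j|\le4M/k_j$ uses the degree $k_j$, and it needs $|V_0|\le n\le M$ separately because of zero in-degrees. The thresholds $\alpha_j=(T_0M/k_j^3)^{1/2}$ and $\beta_j=(T_0M/k_j^2)^{1/3}$ are instead chosen to balance travel against interruptions, with strong selectors on the low classes $k_j<(T_0M)^{1/5}$. Your bound $|V_s|/\beta_s\le M/\beta_s^2$ uses one symbol for both roles.
\item A collection does not cost $B$. Ledger commitments may not split an invocation, so one frozen selection runs per frame of length $T_0$. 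Every channel period must divide and fit in $T_0$, and a flush charges three frames of paid time.
\item The largest high period is about $k_j\beta_j$ with $k_j\approx K\le2\kappa$, and it fits in a frame only if $T_0\gtrsim\sqrt{MK}$. This frame-fit requirement, not a threshold cap, is the source of $\sqrt{M\kappa}$ in $H_\kappa$.
\item Phases must run upward in degree. Then every processed channel $j\le s$ has $\beta_s\le\beta_j$, and its period $k_j\beta_s$ stays within the balanced $k_j\beta_j$.
\item For the case $d_A\equiv0$: it forces $A=\varnothing$ directly, since any out-neighbor of an active source has positive active in-degree. No dormant out-neighbor is needed.
\end{itemize}
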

\begin{proof}
We specify our condensing procedure and give the finite-horizon analysis. Throughout the procedure, only $A$ transmits in data slots. All vertices may transmit in assigned control slots. Every rumor retained by a dormant holder is quasi-disseminated, because the holder's own source lies in $F_0$. Accordingly, a temporal reception path of a rumor that is still not quasi-disseminated cannot pass through a dormant vertex. The original-source ledger $F_0$ and the data sender set are never changed by the phase packet collections below.

\subsubsection*{Degree classes, including zero in-degrees}
Run Lemma~\ref{lem:probe} on $A$ with closed-contender bound $\min(n,\kappa+1)$ and $Q=2$, and put
\[
 \widehat d_v=\max\{1,\min(n-1,2b_v)\}.
\]
If all scores are zero, every active source would have to have no out-neighbor, contradicting strong connectivity unless $A$ is empty. Thus the claimed zero-degree case is already quasi. Otherwise, frozen score tests find the common maximum $K=\max_v\widehat d_v$. Then
\[
 d_A(v)\le\widehat d_v\le\max\{1,2d_A(v)\},\qquad K\le2\kappa.
\]
Use layers $V_j=\{v:k_j/2<\widehat d_v\le k_j\}$, where $k_j=2^j$ and $0\le j\le\lceil\log K\rceil$. Each vertex knows its own class and includes the class index in phase packet metadata. An estimate of one can represent a true zero degree; estimates in larger classes satisfy $d_A(v)>k_j/4$.

Clamp $M$ to $\min(M,nK)$ and keep the notation $M$. The result is still at least $n$ and still bounds $\sum_vd_A(v)$, because $K\ge1$ and $d_A(v)\le K$. Define common weights
\begin{equation}\label{eq:layerweights}
 w_j=4M/k_j.
\end{equation}
For $j\ge1$, summing $d_A(v)>k_j/4$ gives $|V_j|\le w_j$. For $j=0$, the separate inequality $|V_0|\le n\le M$ proves the same bound, including all zero-degree vertices. This base-layer treatment is necessary; a positive-degree counting inequality alone would not bound their number.

The mixed scale $H=n+\sqrt{MK}$ meets two frame requirements. The $n$ term covers physical control and the base class, including zero-degree receivers. The square-root term lets the largest high-channel period fit between ledger updates. Thus frame feasibility depends on both the physical network size and residual interference; class-size estimates will separately bound the number of collections.

\subsubsection*{Frames and thresholds}
Write $H=n+\sqrt{MK}$. For a sufficiently large common constant $c_0$, let $T_0$ be the first power of two at least $c_0H\log^3N$. Thus
\begin{equation}\label{eq:T0journal}
 T_0=\Theta(H\log^3N),\qquad x=(T_0M)^{1/5}.
\end{equation}
Indices with $k_j<x$ are low; the remaining ones are high. Set
\begin{equation}\label{eq:transferparams}
 \alpha_j=(T_0M/k_j^3)^{1/2},\qquad
 \beta_j=(T_0M/k_j^2)^{1/3}
\end{equation}
for low and high classes, respectively. Eligibility uses $t_j=\lceil \alpha_j\rceil$ in a low class and $t_j=\lceil \beta_j\rceil$ in a high class. These thresholds are at least one. A threshold larger than $n$ never creates an eligible witness; whenever that threshold would be passed to the packet certificate, use a strong sender selector instead.

These choices balance travel time against collection time. For a tentative class threshold $t$, a bounded path uses at most $t$ incoming edges in that class. A low channel has latency proportional to $k_j^2$; the high-channel certificate instead has latency proportional to $k_jt$, with logarithms suppressed. Since the class has at most $4M/k_j$ owners and each collection consumes $t$, the contributions to balance are
\[
 \begin{array}{ll}
 \text{low:}&tk_j^2+T_0M/(k_jt),\\
 \text{high:}&k_jt^2+T_0M/(k_jt).
 \end{array}
\]
Equating the two terms in each row gives~\eqref{eq:transferparams}. The formulas therefore pay for both movement and interruptions by collection; optimizing only the data latency would not justify the phase budget.

A low receiver-class channel repeats a strong selector for $\min(n,k_j+1)$ closed contenders. Let the channel's power-of-two padded period be
\[
 \lambda_j=\Oh(k_j^2\log N).
\]
In a high source phase $s$, each processed high receiver-class channel $j\le s$ repeats Lemma~\ref{lem:union} with threshold $\lceil \beta_s\rceil$, amplification two, and closed contender bound $\min(n,k_j+1)$. The certificate's packet sets consist only of entry-pending packets owned by $V_s$. Since $\beta_s\le \beta_j$, pad this period to
\begin{equation}\label{eq:gammajournal}
 \gamma_j=\Oh(k_j\beta_j\log^{3/2}N).
\end{equation}
If $\beta_s>n$, the strong-selector shortcut has no larger order: $k_j\le2n$ and $\beta_j\ge \beta_s>n$ imply $k_j^2=\Oh(k_j\beta_j)$.

Every data period fits in $T_0/4$ and divides $T_0$. For a high period, let $c_\gamma$ bound the constant in its cubed cost, including power-of-two padding. Since $k_j\le2K$,
\[
 \frac{\gamma_j^3}{T_0^3}
 \le\frac{2c_\gamma MK\log^{9/2}N}{T_0^2}
 \le\frac{2c_\gamma}{c_0^2(\log N)^{3/2}}.
\]
Here $T_0^2\ge c_0^2MK\log^6N$ and $\log N\ge1$. Choosing $c_0^2\ge128c_\gamma$ gives $\gamma_j\le T_0/4$; the excess logarithmic power supplies slack. For low periods, if $K^2\le n$, then $\lambda_j=\Oh(n\log N)$. Otherwise $M\le nK<K^3$, so $(n+\sqrt{MK})^3\ge M^2$. The restriction $k_j<x$ then gives
\[
 \lambda_j\le\Oh((T_0M)^{2/5}\log N)\le T_0/4.
\]
The last inequality follows from $T_0^3=\Omega(M^2\log^9N)$ and a sufficiently large frame constant. Rounding periods to powers of two changes constants only. The control selection budget $B$ also fits inside a frame.

\subsubsection*{Phase state and the actual protocol}
Initialization creates a phase-zero packet at every vertex, containing its entry inventory, owner identifier, and computed layer. Only packets carrying this metadata enter eligibility counts. Run all low channels and use only low-class eligibility, with the high sums in the budgets below empty. Use the same frame horizon and append the same exact one-hop schedule as in a high phase. The initialization goal is quasi-dissemination or a holder in an active high layer. Subsequent high source layers are processed in increasing order. At each phase entry every vertex creates a fresh packet $\pi_u$ containing its full inventory, own rumor, and owner layer; all original-rumor values are retained.

Let $\mathcal L$ be a common phase ledger, initially empty. A packet is pending exactly when its owner is not in $\mathcal L$. Let $P_v^j$ be the pending phase packets known at $v$ whose owners belong to $V_j$. A vertex is eligible if
\begin{equation}\label{eq:phaseelig}
 (\exists\hbox{ low }j:|P_v^j|\ge\lceil \alpha_j\rceil)
 \ \vee\
 (\exists\hbox{ high }j\le s:|P_v^j|\ge\lceil \beta_j\rceil).
\end{equation}
One frozen selection is run per logical frame. At frame entry, save eligibility flags and the complete phase-packet inventories. If the selection is nonempty, broadcast the winner's saved packet inventory and add the saved packets' owner identifiers to $\mathcal L$ at the frame boundary. Credit only that saved inventory. The phase ledger is separate from $F_0$; a phase collection never changes $A$.

The following steps specify one high source phase.
\begin{enumerate}
\item Create the entry snapshots and reset $\mathcal L$. Start all low channels and all high receiver-class channels $j\le s$ at a common period boundary.
\item At every data invocation, freeze each sender's pending $V_s$ projection. Packet gating uses that projection for the entire invocation. Every actual payload appends all phase packets and all original rumors currently known.
\item In parallel logical time, execute one frozen control selection per frame using~\eqref{eq:phaseelig}. Delay the selection's ledger commitment to the common frame boundary.
\item Run for the common phase budget specified below. Then execute one exact one-hop data schedule, again with sender set $A$ and full payloads.
\item Retain all original-rumor values, discard obsolete phase objects, and continue to the next high source class.
\end{enumerate}
There are $\Oh(\log n)$ logical channels. All frame and latency budgets below use logical time. Serialize channel slots in a common order on one half-duplex radio, adding a logarithmic factor to the physical running time. Freeze logical decisions at round boundaries; full payloads may include information received in earlier serialized slots. Control ignores extra fields, and frozen data gates preserve the certificate and charging arguments.

There are consequently two kinds of freezing. Entry-pending projections determine which senders participate during one data invocation. Saved eligibility and inventories determine which packet owners a control selection can credit. Payload contents may continue to grow in both operations, but that growth changes neither decision. The packet ledger is committed only at aligned boundaries, preserving the pending-universe premise of every certificate.

\subsubsection*{Counting collections and paid frames}
A successful selection has a saved witness in some processed source class $j$. The selection's broadcast removes at least $t_j$ previously pending owner identifiers of that class. Charge the selection to one such class, for example the first qualifying one. Since the universe of that class has size $|V_j|$, the number of successes is at most
\begin{equation}\label{eq:Fsjournal}
 F_s=\sum_{k_j<x}\frac{w_j}{\lceil \alpha_j\rceil}
       +\sum_{x\le k_j\le k_s}\frac{w_j}{\lceil \beta_j\rceil}.
\end{equation}
This is a common upper bound; actual layer sizes need not be learned. Our collection bound remains valid when the saved inventory also broadcasts packets of other classes, because those extra removals can only reduce later pending sets.

Call a flush frame and its two preceding frames paid. Also mark the final two phase frames paid. All other frames are unpaid, so there are at most $3F_s+2$ paid frames. These labels are assigned retrospectively in the analysis; our protocol never branches on paid status. The final-two-frame convention keeps each unpaid frame's subsequent frozen tests inside the phase.

In an unpaid frame no receiver can acquire $\lceil \beta_s\rceil$ pending packets from $V_s$. Indeed, there is no flush in that frame or in its next two frames. Those packets would persist to the next frame-entry test, making a vertex eligible and causing a flush in the next frame, a contradiction. Pending here refers to the current phase ledger, not to actual global knowledge inferred from some unrelated transmission.

Consider an invocation of Lemma~\ref{lem:union} contained in an unpaid frame. The invocation's designated universe is the set of $V_s$ owners not in $\mathcal L$ at entry. Already flushed packets are excluded from both activation and the threshold count. No commitment occurs inside the invocation because periods divide frames. The threshold alternative would contradict the preceding paragraph. The invocation must therefore deliver \emph{every} neighboring entry-pending $V_s$ packet.

Deleting paid frames is only an accounting device: the protocol executes every frame and retains all knowledge. A target can wait through paid time for an unpaid invocation. Aligned periods bound this waiting and prevent a certificate from straddling a ledger commitment, preserving its frozen pending universe.

\subsubsection*{Unpaid edge latency}
Suppose a target packet from $V_s$ is at $u$ and $(u,v)$ enters a processed high receiver layer $j$. Until the target becomes quasi-disseminated or is flushed, the target remains in the sender's pending projection. If a complete invocation remains in the current unpaid frame, the next full one finishes within $2\gamma_j$ unpaid rounds after an arbitrary arrival time. If none remains, fewer than $\gamma_j$ unpaid rounds remain before the frame ends. Skip paid frames; the next unpaid frame begins at a complete invocation boundary and contributes at most another $\gamma_j$. Arrival during a paid frame is even simpler, since no time is charged before the next unpaid frame. Thus the unpaid delay is at most $2\gamma_j$.

For a low receiver class, strong selectors give an ordinary delay at most $2\lambda_j$, and hence the same upper bound in unpaid time. These bounds concern a specified pending target. The all-or-threshold interface and frame alignment supply the required delivery guarantee.

To compose the edge bounds, fix a processed-layer path starting at the target's owner at phase entry. Induct on its prefixes: the target reaches each prefix endpoint within the sum of the corresponding unpaid edge latencies, unless it has already become quasi-disseminated or reached a higher layer. The base case is its presence at the owner. At each step, retention and the unchanged pending status permit the next edge bound; flushing the target already resolves it. The horizon below supplies sufficient unpaid time for every bounded path.

Call a processed-layer path bounded if the path has at most $\lceil \alpha_j\rceil$ vertices in each low class and at most $\lceil \beta_j\rceil$ in each processed high class. Summing the appropriate incoming-edge delays gives
\begin{equation}\label{eq:Asjournal}
 A_s=2\sum_{k_j<x}\lceil \alpha_j\rceil\lambda_j
       +2\sum_{x\le k_j\le k_s}\lceil \beta_j\rceil\gamma_j.
\end{equation}
Overcounting the path's starting vertex is harmless. Round $A_s$ up to a frame boundary. With $C_s=\lceil F_s\rceil$, reserve
\begin{equation}\label{eq:Tsjournal}
 T_s=A_s+4T_0(C_s+1).
\end{equation}
The first $A_s$ unpaid rounds end at a frame boundary $\tau$ with
\begin{equation}\label{eq:taubudget}
 \tau\le A_s+(3C_s+2)T_0,\qquad
 T_s-\tau\ge(C_s+2)T_0.
\end{equation}
Such a $\tau$ exists: even after deleting all possible paid frames, more than $A_s$ unpaid rounds remain. The reserved tail contains at least $C_s+2$ complete frames, hence more frozen tests than the total permitted number of successful selections.

Before $\tau$, the unpaid travel budget suffices for every bounded route; afterward, persistent eligibility would force too many collections. This analytical boundary is not a protocol input. The procedure always runs to its prescribed common phase boundary.

For example, $A_s=3T_0$ and $C_s=2$ give a fifteen-frame phase. At most eight frames are paid, so the first three unpaid frames end by frame eleven, leaving at least four complete tests. Persistent eligibility would force four successes, contradicting the bound of two. The travel budget alone would not ensure this response time.

\subsubsection*{Path witnesses and the frontier}
Fix $\pi_v$ with $v\in V_s$. Quasi-dissemination and retention by a higher layer are monotone. We may assume neither objective holds before the appended exact schedule; otherwise the phase has already succeeded for this target. At $\tau$, choose an unreceived vertex $w$ of minimum graph distance from $v$. A predecessor $u$ on a shortest path has received the target. Let $P$ be a loop-erased temporal reception path to $u$. Deleted excursions can be replaced by waiting. Every vertex packet on $P$ traveled with the target along the remaining suffix, because payloads append all known phase packets.

\emph{A processed frontier forces collection.} The path has neither a dormant vertex nor a higher-layer vertex, either of which would resolve the target. If $w$ is processed, $P$ followed by $(u,w)$ cannot be bounded: summing the edge latencies along that fixed vertex sequence, starting at $v$ at phase entry, would deliver the target by $\tau$. This comparison does not restart the latency clock at the observed arrival at $u$. Some class therefore exceeds its vertex threshold. Removing the final vertex leaves at least the threshold number of distinct owner packets on $P$, all held at $u$.

None of these witness packets can have been flushed. If an owner's packet was flushed before it received the target, that reception would already make the target quasi-disseminated. A later flush has the same consequence by retention. Thus every witness owner remains pending, and $u$ stays eligible until a resolving flush occurs. Since $\tau$ is a frame boundary, each of the next $C_s+2$ complete frames begins with a frozen test witnessing this eligibility. Every such selection must succeed, exceeding the total cap $C_s\ge F_s$. The processed-frontier case therefore forces the phase objective before the exact schedule.

\emph{An unprocessed frontier needs exact delivery.} If $w$ is unprocessed, $u$ retains the target until phase end. Append the shorter of $N$ singleton slots and a strong selector for $\min(n,K+1)$ closed contenders, using only $A$ as data senders. The cost is $\Ot(\min\{N,\kappa^2\})$. The predecessor $u$ is active, and every receiver has at most $K$ active in-neighbors. The exact schedule therefore delivers this particular target to $w$. A dormant $w$ makes the target quasi; an active $w$ is in a higher layer. A threshold-only guarantee could instead deliver other packets and leave this target unresolved.

For initialization, the same argument has only low processed classes. A phase-zero packet already held at an active high layer meets the initialization goal; every other unresolved packet is forced into such a layer by the appended exact schedule. If there is no high layer, initialization already establishes quasi-gossiping. At entry to each high phase $s$, the induction invariant is
\begin{equation}\label{eq:condensinginvariant}
 \begin{gathered}
 r\text{ not quasi-disseminated}\quad\Longrightarrow\\
 r\in K_v\text{ for some }v\in A\cap\bigcup_{j=s}^{\lceil\log K\rceil}V_j.
 \end{gathered}
\end{equation}
A holder above $s$ already meets the next goal. Otherwise a holder in $V_s$ places the rumor in its entry snapshot, and the phase makes that snapshot quasi or moves it above $s$. At the last high phase there is no higher layer, so every remaining rumor becomes quasi-disseminated. The target need not originate at its current holder; the entry snapshot preserves the whole inventory. This induction uses no connectivity of the active induced graph.

\subsubsection*{Cost calculation}
The cutoff balances opposite degree dependencies: low-class strong-selector costs increase with class scale, while balanced high-class costs decrease. Their meeting point explains the fifth-root cutoff and three-fifths exponent. The calculation below bounds each contribution before summing over classes.

For a low class, ignoring only logarithmic factors,
\begin{align*}
 \alpha_j\lambda_j&=\Ot((T_0M)^{1/2}k_j^{1/2}),\\
 T_0w_j/\alpha_j&=\Ot((T_0M)^{1/2}k_j^{1/2}).
\end{align*}
Since $k_j<x=(T_0M)^{1/5}$, both are at most $\Ot((T_0M)^{3/5})$. For a high class,
\begin{align*}
 \beta_j\gamma_j&=\Ot((T_0M)^{2/3}k_j^{-1/3}),\\
 T_0w_j/\beta_j&=\Ot((T_0M)^{2/3}k_j^{-1/3}),
\end{align*}
and $k_j\ge x$ gives the same upper bound. The ceilings change these estimates by constants, since thresholds are at least one. Rounding $A_s$ and adding frames costs $\Oh(T_0)$ per phase. With $M\ge n$ and $K\le n-1$, we have $n+\sqrt{MK}=\Oh(M)$, so these additive frames are absorbed by $(T_0M)^{3/5}$.

Equating these two scale expressions, with logarithms suppressed, gives $k_j^{5/6}=(T_0M)^{1/6}$, hence $k_j=(T_0M)^{1/5}$. Substitution yields $(T_0M)^{3/5}$. Our low/high split therefore equalizes the largest polynomial contributions. The separate frame-fit calculation remains necessary to keep complete data invocations between ledger updates.

There are $\Oh(\log n)$ layers, phases, and logical channels. The exact end schedules contribute $\Ot(\min\{N,\kappa^2\})$ in total. Since $K\le2\kappa$, the other costs are at most the second term of~\eqref{eq:activecost}. Degree estimation and common score tests fit below that bound. This proves the lemma.
\end{proof}

The frozen-set lemma needs common bounds on interference and active arcs. Pruning supplies the first; verified geometric guesses supply the second. Each trial retains its own original state and complete recording, so an underestimated guess can lose progress but cannot falsely certify completion.

\subsection{Interpreting the path witness}
An unresolved target creates an overrepresented owner class at a holder when a bounded processed route fails. The reserved tests force collection. For example, $x$ may create and forward $\pi_x$ before learning target $\pi_v$. A later collection of $\pi_x$ need not contain $\pi_v$, but it disseminates $x$'s own rumor while $x$ retains the target. This establishes quasi-dissemination; complete-history replay supplies full delivery.

The proof uses two clocks. The protocol executes every frame chronologically, while travel is charged only in unpaid time. A flush in frame $f$ may charge $f-2,f-1,f$; overlapping charges reduce the total. Retention preserves targets during these interruptions. Bounded unpaid travel, bounded paid time, and the reserved tests together give the chronological horizon. Neither retrospective frame labels nor $\tau$ is a protocol input. At an unprocessed frontier, the appended exact schedule supplies the prescribed delivery that the threshold argument cannot ensure.

\section{Pruned Density Transfer}
We now combine pruning with the frozen-active-set completion. A threshold $D$ creates three costs: securing enough sources to reduce contention, providing exact residual frontier coverage, and running the remaining condensing phases. Theorem~\ref{thm:density} displays all three costs so that a favorable degree profile can pay for the first without concealing the second. The proof then removes the common edge-budget and threshold inputs by verified trials and a dyadic pool. Those trials compare complete executions, each with its own state and full replay history.

\begin{theorem}[Pruned density transfer]\label{thm:density}
For $1\le D\le n$, put $E_D=\min\{N,D^2\}$, $H_D=n+\sqrt{mD}$, and $P_D=n(1+\rho_D)+E_D$. With only common $n,N$, gossiping takes
\begin{equation}\label{eq:transfer}
 \Ot(\min\{n^{4/3},\min_D[P_D+(mH_D)^{3/5}]\}).
\end{equation}
In particular, the $\Ot((mn)^{3/5})$ bound of~\cite{wu} holds with polynomial identifiers if some $D$ satisfies
\begin{equation}\label{eq:transferconditions}
 n\rho_D+E_D=\Ot((mn)^{3/5}),\qquad mD=\Ot(n^2).
\end{equation}
For $m=\Theta(n)$, some $D=\Ot(n^{3/5})$ with $h_D=\Ot(n^{1/5})$ suffices. The original maximum in-degree is unrestricted.
\end{theorem}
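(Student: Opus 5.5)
For fixed public $D$ and a common edge guess $M$, I would run the following branch: $\Prune(D)$ from the original state with recording, then the frozen-active-set condensing procedure of Lemma~\ref{lem:active} on the returned ledger, then full replay. By Lemma~\ref{lem:charge}, pruning costs $\Oh((1+\rho_D)B)=\Ot(n(1+\rho_D))$. It also returns $d_A(v)<2D$ at every receiver, so Lemma~\ref{lem:active} applies with $\kappa=2D$. The returned $F_D$ consists of globally known sources: each credited snapshot was disseminated by a completed designated broadcast, which is exactly the hypothesis of Lemma~\ref{lem:active}. The active arc count is at most $m$. Hence any common $M\ge\max(n,m)$ is admissible, and with $M=\Theta(\max(n,m))$ the condensing cost is
\[
 \Ot\bigl(\min\{N,D^2\}+(MH_{2D})^{3/5}\bigr)=\Ot\bigl(E_D+(mH_D)^{3/5}\bigr),
\]
using $m\ge n$ by strong connectivity and $H_{2D}=\Oh(H_D)$. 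Lemma~\ref{lem:replay} then doubles this first pass, giving $\Ot(P_D+(mH_D)^{3/5})$ for a known $D$ and a valid guess $M$.

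To remove the inputs, I would use two devices. For $D$, I would interleave private branches $D=1,2,4,\ldots$ capped at $n$, at an $\Oh(\log n)$ slowdown. Rounding an arbitrary $D$ up to the next power of two $D'\le2D$ is harmless here, because $\rho_{D'}\le\rho_D$ by monotonicity of $\Phi,\Psi,s$ in $D$, while $E_{D'}\le4E_D$ and $H_{D'}\le2H_D$. Unlike the residual-cover corollary, this bound uses only profile quantities, not witnesses. For $M$, within each $D$-branch I would run trials with $M=n2^i$. Each trial starts from the original state with its own inventories, ledger and recording, is padded to its computed budget, and ends with replay and the frozen OR test of $|K_v|<n$. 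Budgets grow geometrically in $M$, so the first trial with $M\ge m$ dominates the total up to a constant. An invalid guess may fail, but it cannot certify false completion, since inventory counts only genuine rumors. The $n^{4/3}$ term comes from interleaving the general branch (Theorem~\ref{thm:pathadaptive} at $D=n$, or~\cite{kuschner}) with a constant share.

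For the specializations, the first condition in~\eqref{eq:transferconditions} bounds $P_D$, since $n=\Oh((mn)^{3/5})$. The second gives $\sqrt{mD}=\Oh(n)$, hence $H_D=\Oh(n)$ and $(mH_D)^{3/5}=\Ot((mn)^{3/5})$. For $m=\Theta(n)$, take $D=\Ot(n^{3/5})$ with $h_D=\Ot(n^{1/5})$. Then~\eqref{eq:coarsetail} gives $\rho_D\le H_\star h_D=\Ot(n^{1/5})$, so $n\rho_D=\Ot(n^{6/5})$. Also $E_D\le D^2=\Ot(n^{6/5})$ and $mD=\Ot(n^{8/5})\le\Ot(n^2)$, which matches $(mn)^{3/5}=\Theta(n^{6/5})$.

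I expect the main obstacle to be the trial structure for $M$. Each trial must include its own pruning prefix and full replay, so that Lemma~\ref{lem:replay} applies to a complete history. The padding must also ensure that an underestimated trial terminates at a common boundary instead of running unboundedly. I would handle this by fixing each trial's length to the explicit $\Ot$ budget computed from $(n,N,D,M)$ and aborting at that length.
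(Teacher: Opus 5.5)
Your proposal is correct and follows the paper's proof essentially step for step: fixed-$D$ pruning, Lemma~\ref{lem:active} with $\kappa=\Oh(D)$ and a valid edge guess, complete-history replay, verified geometric edge guesses each started from its own original state, a dyadic threshold pool justified by monotonicity of $\rho_D$, the general branch for the $n^{4/3}$ term, and the same substitutions for the two specializations. One caution about your termination remark: do not pad $\Prune(D)$ to a length computed from $(n,N,D,M)$, because the only such a priori bound is about $\lfloor n/D\rfloor$ iterations, which would destroy the $n\rho_D$ term. Instead, let pruning stop at its own common boundary and fix only the condensing horizons from $M$; these are finite for every guess. Note also that the repeated pruning prefixes make the trial overhead an $\Oh(\log n)$ factor rather than a constant, which $\Ot$ absorbs.
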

\begin{proof}
\emph{Fixed threshold and a valid edge budget.} Fix $D$ and record from the original rumors. Run $\Prune(D)$, obtaining ledger $F_0$, pruning cost $\Oh((1+\rho_D)B)$, and residual active in-degrees below $2D$ (Lemma~\ref{lem:charge}). Freeze the active set and apply Lemma~\ref{lem:active} with $\kappa=\min(n-1,2D)$ and an edge upper bound $M\ge m$. The physical control network remains strongly connected; common frame budgets give a common quasi-gossiping boundary.

Replay the entire first pass, including pruning, to obtain gossiping by Lemma~\ref{lem:replay}. The exact coverage cost is at most $\Ot(E_D)$ because $\kappa=\Oh(D)$. For a valid guess $M\le2m$, the remaining condensing cost is $\Ot((mH_D)^{3/5})$. The pruning term is $\Ot(n(1+\rho_D))$, giving the fixed-$D$ expression.

\emph{Verified edge-budget guesses.} Use guesses $n,2n,4n,\ldots$, capped at $n(n-1)$. Each trial starts from genuine initial rumors and includes pruning and complete replay. Its frame and phase horizons are computed from the guess and are finite even if that guess underestimates the active edge count. The protocol executes one bounded selection per frame and stops at the prescribed horizon; a false collection-count estimate affects the progress proof, not termination of the trial.

After replay, freeze the flags $|K_v|<n$ and compute their OR; accept only a false answer. An underestimated guess cannot falsely certify all $n$ genuine source values. The first valid guess is at most $2m$ and completes correctly. The algebraic bounds on prescribed horizons increase with the edge-budget parameter up to constant and logarithmic factors, so each earlier trial fits within the valid trial's upper budget. There are $\Oh(\log n)$ trials, absorbing repeated pruning and verification in~\eqref{eq:transfer}.

\emph{Unknown pruning threshold.} Use dyadic $D$ values and cap the final one at $n$. For any analytical comparator $D$, the first covering value $D'$ satisfies $D\le D'\le2D$. Monotonicity gives $\rho_{D'}\le\rho_D$, while $E_{D'}\le4E_D$ and $H_{D'}\le\sqrt2H_D$, so rounding loses only a constant. Interleave these $\Oh(\log n)$ private branches and the general branch~\cite{kuschner}. The logarithmic slowdown is allowed by $\Ot$. Each simulation keeps separate ledgers, knowledge, clocks, and verification flags; our protocol stops at the first common certified completion.

\emph{Density and sparse specializations.} The second hypothesis in~\eqref{eq:transferconditions}, $mD=\Ot(n^2)$, gives $H_D=n+\sqrt{mD}=\Ot(n)$. Its first hypothesis bounds $n\rho_D+E_D$ by the density budget; the additive $n$ also fits because $m\ge n$. For sparse graphs, $D=\Ot(n^{3/5})$ gives $E_D=\Ot(n^{6/5})$ and $H_D=\Ot(n)$. Equation~\eqref{eq:coarsetail} gives $\rho_D=\Ot(h_D)$, so $h_D=\Ot(n^{1/5})$ makes the pruning cost $\Ot(n^{6/5})$. No original-degree cap enters this argument.
\end{proof}

\begin{corollary}[Uniform identifier-range extension]\label{cor:uniformspan}
Every graph in the model admits gossiping in
\begin{equation}\label{eq:uniformspan}
 \Ot(\min\{n^{4/3},N+(mn)^{3/5}\})
\end{equation}
rounds, using only common $n,N$. Consequently, the density benchmark $\Ot(\min\{n^{4/3},(mn)^{3/5}\})$ holds whenever $N=\Ot((mn)^{3/5})$, and in particular for every graph when $N=\Ot(n^{6/5})$.
\end{corollary}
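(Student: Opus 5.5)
The plan is to run the general branch of~\cite{kuschner} alongside a single no-pruning condensing branch and stop at the first certified completion; the two branches account for the two terms of the minimum. The $n^{4/3}$ term is the retained general guarantee (Theorem~\ref{thm:adaptive}), so only $\Ot(N+(mn)^{3/5})$ needs an argument. For that term I would not use Theorem~\ref{thm:density} with a small threshold $D$. Instead I would apply Lemma~\ref{lem:active} directly with no pruning: take $F_0=\varnothing$ (trivially a set of globally known sources), $A=V$, and $\kappa=n-1$. This $\kappa$ is a valid common bound on every $d_A(v)$ using only $n$.

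With this choice the exact-frontier cost $\Ot(\min\{N,\kappa^2\})$ is at most $\Ot(N)$. The condensing cost is $\Ot((MH_\kappa)^{3/5})$ with $H_\kappa=n+\sqrt{M(n-1)}$. For an edge budget $M\le 2m$ padded to at least $n$, we get $\sqrt{Mn}\le\sqrt{2mn}$, and $m\le n^2$ gives $\sqrt{mn}\le n^{3/2}$. So $H_\kappa$ can be as large as $n^{3/2}$, which is bad. The key step is therefore to control $H_\kappa$.

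The fix is to look inside the proof of Lemma~\ref{lem:active}. The clamp $M\leftarrow\min(M,nK)$ and the frame constraint only need $H=n+\sqrt{MK}$ with $K\le 2\kappa$. The $\sqrt{MK}$ term exists so that the largest high period $\gamma_j\le T_0/4$. Taking $T_0\approx H\log^3N$ gives $(T_0M)^{3/5}\approx(M\sqrt{MK})^{3/5}$. That is too big when $K\approx n$ and $m\gg n$, and I expect this frame-fit issue to be the main obstacle.

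I would try two routes, in this order:
\begin{enumerate}
\item Use Theorem~\ref{thm:density} with $D=n$. Then $\rho_n=0$ and $E_n=\min\{N,n^2\}\le N$, but $H_n=n+\sqrt{mn}$, which has the same problem.
\item Restrict the channel structure so that high classes with $k_j>x$ never need $\gamma_j$ beyond $T_0$. Observe that $\gamma_j\approx k_j\beta_j=(T_0Mk_j)^{1/3}$, and require $(T_0Mk_j)^{1/3}\le T_0$, i.e.\ $k_j\le T_0^2/M$. With $T_0=\Theta(n\log^3 N)$ this fails for dense graphs. In that case I would cap $K$ at $T_0^2/M$ by running a preliminary $\Prune(D)$ with $D\approx n^2/m$ and paying $\rho_D\le s_D=n/D\approx m/n$ selections. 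That costs $\Ot(m)$, which needs $m\le(mn)^{3/5}$, i.e.\ $m\le n^{3/2}$. For $m\ge n^{3/2}$ we have $(mn)^{3/5}\ge n^{3/2}>n^{4/3}$, so the general branch already suffices there.
\end{enumerate}
Checking this case split is the crux. For $m\le n^{3/2}$ choose $D=\lceil n^2/m\rceil$. Then $mD=\Oh(n^2)$, so $H_D=\Oh(n)$, and the pruning cost is $n\rho_D\le n\cdot n/D\le m\le(mn)^{3/5}$. The frontier cost is $E_D\le N$. Theorem~\ref{thm:density} then yields $\Ot(N+(mn)^{3/5})$, and the dyadic pool and verified edge guesses inside that theorem remove any need to know $m$. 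The ``consequently'' clauses then follow by substitution: $N=\Ot((mn)^{3/5})$ absorbs $N$, and $m\ge n$ gives $(mn)^{3/5}\ge n^{6/5}$.
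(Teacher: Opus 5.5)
Your final paragraph is the paper's proof. It applies Theorem~\ref{thm:density} at $D=\lceil n^2/m\rceil\le n$, with $n\rho_D\le n^2/D\le m\le(mn)^{3/5}$, with $mD\le n^2+m$ (so $H_D=\Oh(n)$) and $E_D\le N$, and lets the general branch cover dense graphs. Your cutoff $m\le n^{3/2}$, exactly where $m\le(mn)^{3/5}$ holds, is a harmless widening of the paper's $m\le n^{11/9}$, where $(mn)^{3/5}$ reaches $n^{4/3}$.

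Drop the opening no-pruning plan with $\kappa=n-1$ and the detours into the frame constraints of Lemma~\ref{lem:active}. They are unnecessary and contradict your stated intention not to use Theorem~\ref{thm:density}, which is exactly what your final argument relies on.
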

\begin{proof}
For $m\le n^{11/9}$, choose $D=\lceil n^2/m\rceil\le n$. Then
\[
 n\rho_D\le n^2/D\le m\le(mn)^{3/5},\quad mD\le n^2+m,
\]
while $E_D\le N$. Thus Theorem~\ref{thm:density} costs $\Ot(N+(mn)^{3/5})$. For $m\ge n^{11/9}$, the general $n^{4/3}$ branch is no slower than $(mn)^{3/5}$. Taking the faster branch proves~\eqref{eq:uniformspan}. The consequences follow from $m\ge n$; threshold adaptation removes $m$ as an input.
\end{proof}

The additive identifier penalty separates the proven extension from the remaining gap. If $N=n^{\alpha+o(1)}$ and $m=n^{c+o(1)}$, the unconditional exponent is at most $\min\{4/3,\max\{\alpha,3(c+1)/5\}\}$. For $6/5<\alpha<4/3$, the density benchmark is therefore retained whenever $c\ge5\alpha/3-1$; the general branch covers $c\ge11/9$. The profile theorem can do better when $N$ is larger. No assertion of the density benchmark for every polynomial range follows from this corollary.

\emph{Why the range is uniform.} Strong connectivity gives $m\ge n$, so the smallest density budget is $n^{6/5}$. This absorbs $N=\Ot(n^{6/5})$ on every graph; denser graphs can absorb more. Degree concentration supplies nearly edge-linear gossiping for arbitrary polynomial ranges.

\subsection{Relation to the unpruned transfer}
Choose $D=\Delta+1$ when $\Delta<n-1$, and $D=n$ otherwise. Then $h_D=0$ and $\rho_D=0$. Also $D=\Oh(\Delta)$ because strong connectivity gives $\Delta\ge1$. Thus~\eqref{eq:transfer} contains the unpruned bound
\[
 \Ot\!\left(\min\{N,\Delta^2\}
       +[m(n+\sqrt{m\Delta})]^{3/5}\right).
\]
The new theorem additionally permits expensive original degrees when they can be removed with a small weighted tail charge. Our pruned-density strategy therefore does more than extend the regular case or assume that every receiver already has a bounded degree.

The unrestricted polynomial-identifier density extension remains open. For an unfavorable profile, pruning and exact residual coverage can exceed the density budget at every useful threshold. The theorem exposes that tradeoff and retains the general fallback; it does not preclude a different selector or frontier mechanism.

\subsection{A sparse family with linear maximum in-degree}
Let $n=q^5$ and partition the vertices into $q$ groups of $q^4$ vertices. In every group choose a center and direct an arc from each other group member into that center. Add a directed Hamilton cycle through all vertices, omitting duplicate arcs. Finally add every missing arc into one distinguished center. The graph is simple and strongly connected. This graph has between $n$ and $3n$ arcs, so $m=\Theta(n)$; its distinguished center has in-degree $n-1$.

Every noncenter has in-degree one, and each nondistinguished center has in-degree between $q^4-1$ and $q^4$. At $D=\lceil q^3\rceil$, for sufficiently large $q$, exactly the $q$ centers have degree at least $D$. Consequently $D=\Theta(n^{3/5})$ and $h_D=\Theta(n^{1/5})$, which satisfy the sparse transfer theorem. However, the average degree is bounded, so for every fixed $C_0,C_1$ the $q$ high-degree centers violate~\eqref{eq:concentration} for sufficiently large $q$. The original maximum-degree sufficient condition also fails. The family demonstrates that the pruned-density theorem covers profiles beyond both earlier specializations. The distinguished receiver has in-degree $n-1$, while every other receiver has in-degree at most $q^4$. For large $q$, its amplification-$4$ reception score exceeds $q^4$, so the clearing branch elects it in its first phase. Clearing this universal receiver secures all sources and gives $\Oh(B)=\Oh(nL^2\ell)$ rounds on this family. The profile calculation remains useful to show that the density transfer does not require a small original maximum degree.

\section{Structural Consequences and Scope}\label{sec:structural}
Clearing, reception covers, and density transfer exploit different graph features. We now specialize them to degree concentration, regularity, and degree tails. Each structural assumption supplies an analytical comparator for a protocol using only $n,N$; the resulting bounds identify when large original degrees remain affordable.

\begin{corollary}[Nearly edge-linear gossiping]\label{cor:edge}
Under~\eqref{eq:concentration}, deterministic gossiping takes
\begin{equation}\label{eq:edgepoly}
 \Oh(mL^2\ell)
\end{equation}
rounds for every $N=n^{\Oh(1)}$, using only common $n,N$. No additional bound on the maximum in-degree or on $\Delta/\dbar$ is required.
\end{corollary}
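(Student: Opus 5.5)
The plan is to specialize Theorem~\ref{thm:clearadaptive} at the comparator $D=\lceil C_0\dbar\rceil$. Two cases arise. If $D\le n$, evaluate $\widehat G_D$ directly. If $D>n$, then $C_0m>n(n-1)$, so $m=\Omega(n^2)$; here take $D=n$, where $\chi_n=0$ and $\widehat G_n=\Oh(n^2L^{3/2}\ell^{1/2}+nL^2\ell)=\Oh(mL^2\ell)$. The protocol itself is the clearing controller of Theorem~\ref{thm:clearadaptive}, so it uses only $n$ and $N$, and the threshold is purely analytical.

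For $D\le n$ there are two terms to bound. The completion term satisfies
\[
nDL^{3/2}\ell^{1/2}\le n(C_0\dbar+1)L^{3/2}\ell^{1/2}=\Oh((m+n)L^{3/2}\ell^{1/2})=\Oh(mL^2\ell),
\]
because strong connectivity gives $m\ge n$. For the charge term, I use the chain $\chi_D\le t_D\le h_D$ from the remark after~\eqref{eq:clearcharge}. Condition~\eqref{eq:concentration} gives $h_D\le C_1\dbar$. Therefore
\[
n(1+\chi_D)L^2\ell\le n(1+C_1\dbar)L^2\ell=\Oh((n+m)L^2\ell)=\Oh(mL^2\ell).
\]
Adding the two terms gives~\eqref{eq:edgepoly} for every $N=n^{\Oh(1)}$. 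The only place the identifier range enters is $B=\Theta(nL^2\ell)$, which relies on $\log N=\Theta(L)$.

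I expect the main obstacle to be justifying $t_D\le h_D$ carefully. $t_D$ counts distinct closed neighborhoods $S$ with $|S|>D$. Each such $S$ is $\NB(v)\cup\{v\}$ for some receiver $v$ with $\indeg(v)\ge D$, so mapping each class to one such representative is an injection into the set counted by $h_D$. Beyond this, I only need to note that no bound on $\Delta$ or on $\Delta/\dbar$ is used anywhere: individual degrees enter only through the tail count $h_D$. The clearing charge has no $\log(\Delta/D)$ factor, unlike the bound~\eqref{eq:coarsetail}. This is exactly why the extra factor $\ell$ of the earlier $\rho_D$ analysis disappears.
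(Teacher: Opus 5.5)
Your proposal is correct and follows the paper's proof essentially step for step. Like the paper, you specialize Theorem~\ref{thm:clearadaptive} at the comparator $D=\lceil C_0\dbar\rceil$ and use $\chi_D\le t_D\le h_D\le C_1\dbar$ when $D\le n$. Otherwise you fall back to $D=n$, where $\chi_n=0$ and $m=\Omega(n^2)$ absorb the $n^2L^{3/2}\ell^{1/2}$ completion term.
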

\begin{proof}
Put $D_0=\lceil C_0\dbar\rceil$. Since $\dbar\ge1$, $nD_0=\Oh(m)$. If $D_0\le n$, condition~\eqref{eq:concentration} and~\eqref{eq:clearcharge} give $\chi_{D_0}\le t_{D_0}\le h_{D_0}=\Oh(\dbar)$. Theorem~\ref{thm:clearadaptive} therefore gives
\[
 \Oh\!\left(mL^{3/2}\ell^{1/2}+n(1+\dbar)L^2\ell\right)
 =\Oh(mL^2\ell).
\]
If $D_0>n$, take $D=n$. Then $t_n=\chi_n=0$ and $n^2<C_0m=\Oh(m)$, yielding the same conclusion. The threshold and structural constants are comparators in the analysis, not protocol inputs.
\end{proof}

\subsection{Earlier concentration refinements}\label{sec:legacyconcentration}
For comparison, the reception-score branch under~\eqref{eq:concentration} retains the earlier bound
\begin{equation}\label{eq:legacyconcentration}
 \Oh\!\left(mL^2\bigl[\ell+\min\{\ell^2,\log(1+\Delta/\dbar)\}\bigr]\right).
\end{equation}
Indeed, at $D_0=\lceil C_0\dbar\rceil\le n$, Lemma~\ref{lem:charge} gives
\[
 \rho_{D_0}=\Oh\!\left(\dbar\min\{H_\star,1+\log^+(\Delta/D_0)/\ell\}\right).
\]
Substitution in Theorem~\ref{thm:profile} proves~\eqref{eq:legacyconcentration}; if $D_0>n$, use $D=n$ and $m=\Omega(n^2)$. Exact pruning gives $\Oh(mL^2\ell)$ when $N=\Oh(B)$. For polynomial identifiers, the earlier branch also gives that bound if $\Delta\le\dbar L^a$ for fixed $a\ge0$, or whenever $\Phi_D=\Oh(\dbar)$ at some $D=\Theta(\dbar)$. For example, $h_{Q^jD}\le C\theta^j$ with $C=\Oh(\dbar)$ and fixed $0<\theta<1$ supplies the latter condition.

These comparisons preserve the progression of the guarantees. Fixed-accuracy receiver charging gives $\Oh(mL^3)$; adaptive accuracy gives~\eqref{eq:legacyconcentration} and hence $\Oh(mL^2\ell^2)$. Clearing gives $\Oh(mL^2\ell)$ under concentration alone, removing the additional maximum-degree or summable-tail requirement for that logarithmic budget.

For $m=\Theta(n^c)$, $1\le c<4/3$, condition~\eqref{eq:concentration} saves exponent $\min\{4/3-c,(3-2c)/5\}>0$ against the shared linear-identifier benchmarks~\cite{kuschner,wu}; the saving is $1/9$ at $c=11/9$.

\emph{How the average degree pays for clearing.} Choosing $D$ proportional to $\dbar$ makes the completion factor $nD$ proportional to $m$. Condition~\eqref{eq:concentration} leaves at most $\Oh(\dbar)$ relevant closed-neighborhood classes. Charging each once at cost $\Oh(B)$ gives $\Oh(\dbar B)=\Oh(mL^2\ell)$. A large individual degree is affordable because all broadcasts needed to clear its class fit within that one charge.

An edge count alone gives only $h_D\le m/D$. At a threshold proportional to $\dbar$, this permits $\Theta(n)$ high receivers, which can be much larger than $\dbar$. Our current charging argument therefore needs the concentration condition or another progress witness. This distinction separates the proved nearly edge-linear class from the unrestricted sparse-graph question.

\subsection{Linear diameter and large maximum in-degree}
Fix $1\le c<4/3$, and let $h=\lfloor n^{c-1}\rfloor$, $s=\lfloor n/2\rfloor$, $\tau=n-s$. In a core of $s$ vertices, connect $h$ hubs to every nonhub in both directions. At one nonhub $a$, attach a directed cycle $a,t_1,\ldots,t_\tau,a$ through the remaining vertices. For sufficiently large $n$, $s-h\ge n/4$, and
\[
 m=2h(s-h)+\tau+1=\Theta(n^c).
\]
The diameter and maximum in-degree are $\Theta(n)$. Reaching $t_\tau$ from $a$ requires $\tau$ edges. Only hubs have in-degree above $h+1$. Moreover, $\dbar\ge(h+1)/2$, so $3\dbar>h+1$ and $h\le2\dbar$. Thus~\eqref{eq:concentration} holds with $C_0=3,C_1=2$, realizing the preceding exponent savings on graphs with these large parameters.

\subsection{Regular graphs with polynomial identifiers}
Regularity permits completion without a prescribed frontier. Spreading creates many holders of every rumor through out-ball growth, and maximum-inventory collection converts this multiplicity into global progress. Selectors and control depend logarithmically on the identifier range, with no $N$-slot sweep. Thus the regular-graph extension is complete despite the frontier cost in general condensing.

The second regular-graph bound uses a different consequence of regularity. Uniform weight $1/\Delta$ on every receiver covers every source, since each source has $\Delta$ out-neighbors. The total cover weight is $n/\Delta$, so our reception-cover strategy gives $\Ot(n^2/\Delta)$. Our protocol retains the better of this bound and the spreading bound. The two mechanisms exploit outgoing coverage and balanced-cut expansion, respectively.

For $\Delta=n^{\gamma+o(1)}$, $0\le\gamma\le1$, these mechanisms give exponents $1+\gamma/2$ and $2-\gamma$. The smaller exponent is strictly below $4/3$ for fixed $\gamma\ne2/3$. At $\gamma=2/3$, our stated upper bound matches the general polynomial exponent; no strict improvement at this crossover is asserted.

\begin{theorem}[Regular gossiping with polynomial identifiers]\label{thm:regular}
If every vertex has both in-degree and out-degree $\Delta$, then gossiping takes $\Ot(\min\{n\sqrt\Delta,n^2/\Delta\})$ rounds for $N=n^{\Oh(1)}$, using only common $n,N$. This extends the identifier range of~\cite[Theorem 2]{wu}; $\Delta$ need not be known.
\end{theorem}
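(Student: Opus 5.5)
The bound is a minimum of two terms, and I would obtain each from a separate branch of the portfolio. Interleaving them costs only a constant factor, and no branch needs $\Delta$ as an input.

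\emph{The $n^2/\Delta$ term.} Put weight $1/\Delta$ on every receiver. Each source has exactly $\Delta$ out-neighbours, so every source is covered with total weight one. This is a reception cover with $\tau=n/\Delta$ and $u=0$ at the original prefix $j=0$. Corollary~\ref{cor:outdegree} with $d=\Delta=\delta^+$, or directly Theorem~\ref{thm:fractional} with $r=1$, then gives $\Oh(B(1+\tau\log n)+\mathcal E(1))=\Ot(n^2/\Delta)$. This holds on the single adaptive reception-score execution, with no knowledge of $\Delta$.

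\emph{The $n\sqrt\Delta$ term.} Here I would follow the spreading-and-collection idea described just before the theorem, in the style of~\cite[Theorem~2]{wu}. Every round-robin step would be replaced by selectors over $[N]$ and frozen-OR control. Since all in-degrees are $\Delta$, there is no need to prune. Put $K=\Delta$ and use strong-selector periods of length $\sigma=\Oh(\Delta^2\Lambda)$; each period delivers every vertex's knowledge to all its out-neighbours.

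Run $q$ periods, freeze inventories, and execute $\Drain(t)$ with maximum-inventory selection. The point of regularity is that after $q$ periods each unsecured rumor is held by every vertex of the source's $q$-step out-ball. I would then invoke the out-ball growth lemma for $\Delta$-regular digraphs from~\cite{wu}: in a strongly connected $\Delta$-regular digraph, balanced-cut expansion makes the $q$-step out-ball have size $\Omega(\min\{n,q\Delta\})$ or similar. This replaces the path argument of Lemma~\ref{lem:complete}, which only gives $q+1$ holders.

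A holder count alone does not suffice. Following the remark before Lemma~\ref{lem:complete}, I need the inventory connection: by symmetry of regular expansion, the in-balls of radius $q$ also have size $\Omega(q\Delta)$. Hence each vertex's inventory contains $\Omega(q\Delta)$ origins, and $\Drain(\Theta(q\Delta))$ leaves each inventory almost fully secured. A pigeonhole argument on in-ball $\cap$ out-ball then shows that some holder of each rumor is secured, giving quasi-dissemination.

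The collection cost is $B\cdot n/(q\Delta)$ against the spreading cost $q\Delta^2\Lambda$. Balancing gives $q\approx\sqrt{Bn}/\Delta^{3/2}$ and total $\Ot(n\sqrt{\Delta})$ after substituting $B=\Ot(n)$. If the resulting $q$ is below one or above the diameter, the endpoint clipping of Lemma~\ref{lem:complete} applies. Lemma~\ref{lem:replay} then upgrades quasi-gossiping to gossiping.

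Since $\Delta$ is unknown, I would run dyadic guesses $\Delta'=1,2,4,\ldots$ as private branches. Each branch sets its own $q$, and each ends with a frozen OR of $|K_v|<n$, so only a true completion is accepted. The branch with $\Delta'\ge\Delta$, $\Delta'\le2\Delta$ is valid, because a strong selector for $\Delta'+1$ contenders still isolates every sender. Only an $\Oh(\log n)$ slowdown is incurred.

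\emph{Main obstacle.} The hard step is the spreading branch's guarantee that some holder of each rumor has its own source secured. The Wu--Chrobak argument establishes this through a prescribed exact frontier delivery, and it is unclear whether its expansion statement gives both large out-balls and large in-ball inventories at the same radius. If it does not, I would fall back on the condensing machinery. Lemma~\ref{lem:active} with $F_0=\varnothing$, $\kappa=\Delta$ and $M=n\Delta$ costs $\Ot(\min\{N,\Delta^2\}+(n\Delta\cdot n\sqrt{\Delta})^{3/5})$. That is too weak in general, so the real work is proving the in-ball/out-ball intersection property directly from regular expansion.
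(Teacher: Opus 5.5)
\paragraph{The $n^2/\Delta$ branch and degree guessing.} Your $n^2/\Delta$ branch matches the paper's: uniform weight $1/\Delta$ and Corollary~\ref{cor:outdegree}. Your dyadic guesses for $\Delta$ are an acceptable substitute for the paper's $Q=2$ probe giving $\Delta\le K\le2\Delta$.

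\paragraph{The gap.} The gap is in the $n\sqrt\Delta$ branch, at the step you flag yourself: the ``pigeonhole on in-ball $\cap$ out-ball.'' For $\Drain(\Theta(q\Delta))$ to certify that some holder of a rumor from $s$ is secured, you need a single vertex $v$ with $|B_q^-(v)\cap B_q^+(s)|=\Omega(q\Delta)$. Separate lower bounds on $|B_q^+(s)|$ and on every $|B_q^-(v)|$ do not give this. Double counting yields $\sum_v|B_q^-(v)\cap B_q^+(s)|=\sum_{w\in B_q^+(s)}|B_q^+(w)|$. So the only guaranteed average over all $v$ is about $(q\Delta)^2/n$. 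At your choice $q\Delta\approx n/\sqrt\Delta$ (up to logarithms), this is about $n/\Delta$, short of the threshold by a factor $\sqrt\Delta$. Averaging only over $v\in B_{2q}^+(s)$ would help only with an upper bound on $|B_{2q}^+(s)|$, which regularity does not supply. As stated, the threshold-drain route therefore proves nothing beyond the path witness of Lemma~\ref{lem:complete} (threshold $q+1$), which gives only $\Ot(n\Delta)$. Your condensing fallback is too weak, as you note.

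\paragraph{The missing idea.} You need neither quasi-dissemination nor any holder--inventory intersection. The remark before Lemma~\ref{lem:complete} concerns threshold collection, not greedy maximum selection.
\begin{itemize}
\item After $q$ periods, freeze $H_v=K_v$. Every still-unsecured source $s$ lies in $H_v$ for all $v\in B_q^+(s)$.
\item With $f=\min\{n,q\Delta/3\}$, this gives $\sum_v|H_v\setminus F|\ge fa$ whenever $a=n-|F|>0$.
\item Each $\MaxSelect(|H_v\setminus F|)$ therefore secures at least $fa/n$ new sources, since the winner's broadcast snapshot contains $H_v$. Thus $a$ shrinks by a factor $1-f/n$ per call.
\item After $\Oh((n/f)\log n)$ calls, $F=V$, so gossiping is complete outright, without the in-ball argument.
\end{itemize}
Taking $q=\min\{n-1,\lceil n/K^{3/2}\rceil\}$ gives $f=\Omega(n/\sqrt\Delta)$. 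That means $\Oh(\sqrt\Delta\log n)$ calls costing $\Oh(B\sqrt\Delta\log n)$, plus spreading cost $\Oh(n\sqrt K\log N)$. This is the paper's proof, which also hands $K\ge n^{2/3}$ to the general branch.

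Your cost balancing survives up to a $\log n$ factor. Only the correctness mechanism of the collection step must be replaced: by holder-count averaging under maximum-inventory selection.
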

\begin{proof}
A $Q=2$ probe on $A=V$ and common score tests give $\Delta\le K\le2\Delta$ without changing $F$. For $K\ge n^{2/3}$ use the general branch. Otherwise run $q=\min\{n-1,\lceil n/K^{3/2}\rceil\}$ strong-selector periods for $\min(n,K+1)$ contenders and freeze $H_v=K_v$. The out-ball expansion $|B_q^+(u)|\ge\min(n,q\Delta/3)$ from~\cite[Lemma 16]{wu} gives $f=\Omega(n/\sqrt\Delta)$ holders per rumor. To verify the expansion, let $L_j$ be a nonempty distance layer and $S$ the radius-$(j-1)$ out-ball. If $|L_j|+|L_{j+1}|\ge\Delta$, the inequality below is immediate. Otherwise at least $|L_j|(\Delta-|L_j|-|L_{j+1}|)>0$ arcs enter $S$ from $L_j$. All arcs leaving $S$ go from $L_{j-1}$ to $L_j$, so there are at most $|L_{j-1}||L_j|$ of them. Equal in- and out-degrees balance this cut. Dividing by $|L_j|>0$ gives
\[
 |L_{j-1}|+|L_j|+|L_{j+1}|\ge\Delta.
\]
The argument uses directed cut balance and assumes no reverse arcs.

For radii $1,2,3$, the first out-ball already contains $\Delta+1$ vertices. For larger radii before full coverage, summing the last three layers and inducting in steps of three gives $|B_q^+(u)|\ge q\Delta/3$.

While $a=n-|F|>0$, $\sum_v|H_v\setminus F|\ge fa$. Thus $\MaxSelect(|H_v\setminus F|)$ removes at least $fa/n$ sources per call. Fixed witnesses and actual snapshots containing them give $\Oh(\sqrt\Delta\log n)$ calls. Spreading costs $\Oh((n\sqrt K+K^2)\log N)=\Oh(n\sqrt\Delta\log N)$; collection costs $\Oh(B\sqrt\Delta\log n)$. On any graph the procedure terminates correctly since $v\in H_v$, so the portfolio needs no regularity test.

The additional $\Ot(n^2/\Delta)$ bound follows from Corollary~\ref{cor:outdegree}. Interleave the two valid protocols at constant slowdown and accept the first common certified completion. Neither $\Delta$ nor regularity is an input. For $\Delta=n^{\gamma+o(1)}$, $\gamma>2/3$, the added branch has exponent $2-\gamma<4/3$ and improves the $1+\gamma/2$ regular benchmark of~\cite{wu}.
\end{proof}

\subsection{Polynomial degree tails}
\begin{corollary}[Polynomial-tail portfolio]\label{cor:polytail}
If $h_k\le Cn/k^p$ for fixed $p>1,C$, gossiping takes $\Ot(n^{\beta(p)})$, where
\begin{equation}\label{eq:polytail}
 \beta(p)=\begin{cases}
 4/(p+2),&1<p<4/3,\\
 6/5,&4/3\le p\le4,\\
 1+1/(p+1),&p>4.
 \end{cases}
\end{equation}
At $p=5$, the profile branch gives the explicit logarithms in Table~\ref{tab:bounds}.
\end{corollary}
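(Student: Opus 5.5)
The plan is to evaluate the three portfolio branches already established, Theorem~\ref{thm:clearadaptive}, Theorem~\ref{thm:density}, and the general $\Ot(n^{4/3})$ fallback, at a threshold $D$ chosen as a power of $n$. We then show that the best of them attains $\beta(p)$ in each range of $p$. No new protocol is needed: the portfolio of Theorem~\ref{thm:adaptive} already runs all of these branches with only $n,N$ as inputs. Each specialization only has to supply an analytical comparator $D$.

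The first step is the edge count. Since $m=\sum_v\indeg(v)=\sum_{k\ge1}h_k$, we get $m\le n+Cn\sum_{k\ge2}k^{-p}=\Oh(n)$ for fixed $p>1$, so the graph is sparse and $\dbar=\Theta(1)$. Next come the tail charges. From $\chi_D\le\min\{\lfloor n/D\rfloor,t_D\}\le h_D$ we get $\chi_D\le Cn/D^p$. By~\eqref{eq:coarsetail}, $\rho_D\le H_\star h_D=\Oh(\ell\,n/D^p)$.

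\emph{Case $p>4$: clearing branch.} Substituting the bound on $\chi_D$ into $\widehat G_D$ gives
\[
 \widehat G_D=\Oh\bigl(nDL^{3/2}\ell^{1/2}+(n+n^2D^{-p})L^2\ell\bigr).
\]
Balancing $nD$ against $n^2/D^p$ suggests $D\approx n^{1/(p+1)}\le n$, which yields $\Ot(n^{1+1/(p+1)})$. For the explicit logarithms at $p=5$, balance the two terms exactly. This gives $D^6=\Theta(nL^{1/2}\ell^{1/2})$, and substituting $D$ produces $nD\,L^{3/2}\ell^{1/2}=n^{7/6}L^{19/12}\ell^{7/12}$, which is the Table entry.

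\emph{Case $4/3\le p\le4$: density branch.} Apply Theorem~\ref{thm:density} with $m=\Theta(n)$ and $D=\lceil n^{3/5}\rceil$. Then $h_D\le Cn^{1-3p/5}\le Cn^{1/5}$ exactly when $p\ge4/3$, so the sparse condition holds and the branch costs $\Ot(n^{6/5})$. Comparing exponents, $6/5\le1+1/(p+1)$ iff $p\le4$, which explains the breakpoint at $4$.

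\emph{Case $1<p<4/3$: density branch with a larger threshold.} Return to~\eqref{eq:transfer} with $m=\Oh(n)$. For $D\le n$ we have $H_D=n+\sqrt{mD}=\Oh(n)$, so $(mH_D)^{3/5}=\Ot(n^{6/5})$. The remaining term is $P_D=\Ot(n+n^2D^{-p}+D^2)$, where we use $E_D\le D^2$. Choosing $D=n^{2/(p+2)}$ balances the last two terms at $n^{4/(p+2)}$. This is at least $n^{6/5}$ precisely when $p\le4/3$, and below $n^{4/3}$ precisely when $p>1$. So the $\min$ with $n^{4/3}$ in~\eqref{eq:transfer} is inactive and the branch yields $\beta(p)=4/(p+2)$. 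The two breakpoint exponents agree at $p=4/3$ and at $p=4$.

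The main obstacle is not the algebra but the bookkeeping. We must check that each chosen $D$ is an admissible integer in $[1,n]$, which holds because every exponent involved is below one. We also need the soft bound $\rho_D=\Ot(h_D)$ to absorb the $H_\star$ factor inside $\Ot$. For the explicit $p=5$ claim, that factor must instead be avoided by using $\chi_D$ from the clearing branch, where it does not appear. Finally, in each case we need the rounding of $D$ to change the bounds only by constant factors.
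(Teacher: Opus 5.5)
Your proof is correct, and its structure matches the paper's: evaluate the portfolio's branches at power-of-$n$ comparators and take the fastest. The difference is that you reach the $p>4$ regime and the $p=5$ logarithms through the clearing branch rather than the reception-score branch.

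\textbf{The $p>4$ regime.} The paper uses the reception-score profile of Theorem~\ref{thm:profile}. There the truncated level sum gives $\rho_D\le\Phi_D=\sum_{j<H_\star}h_{Q^jD}\le CnD^{-p}/(1-Q^{-p})=\Oh(n/D^p)$. This is a geometric series, so no $\ell$ factor appears, and the paper then balances $G_D$ at $D=\lceil(n\sqrt{L\ell})^{1/(p+1)}\rceil$. You instead use Theorem~\ref{thm:clearadaptive} with $\chi_D\le t_D\le h_D$. This makes $\widehat G_D$ identical to that bound up to constants and is slightly simpler, since no level sum is needed.

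\textbf{The $H_\star$ factor.} Your remark that the $H_\star$ factor ``must'' be avoided by switching to the clearing branch is inaccurate. The factor appears only if you bound $\rho_D$ through~\eqref{eq:coarsetail}; the $\Phi_D$ term in the definition of $\rho_D$ already removes it. So the reception-score ``profile branch'' named in the statement gives the Table's $L^{19/12}\ell^{7/12}$ on its own. Your argument proves the same bound for the clearing branch. That suffices for the portfolio's running time, but it is not literally the branch the statement names.

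\textbf{The two lower regimes.} The paper uses the single threshold $D=\lceil n^{2/(p+2)}\rceil$ in Theorem~\ref{thm:density}, obtaining $\Ot(n^{4/(p+2)}+n^{6/5})$. This handles all of $1<p\le4$ at once. Your separate choice $D=\lceil n^{3/5}\rceil$ with the sparse specialization for $4/3\le p\le4$ is equally valid; it just adds a case.
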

\begin{proof}
The level sum gives $\Phi_D\le CnD^{-p}/(1-Q^{-p})$ and $m=\Theta(n)$. The profile bound balances at $D=\lceil(n\sqrt{L\ell})^{1/(p+1)}\rceil$, with exponent $1+1/(p+1)$ and logarithmic factors $L^{3/2+1/(2(p+1))}\ell^{1/2+1/(2(p+1))}$. Theorem~\ref{thm:density} with $D=\lceil n^{2/(p+2)}\rceil$ gives $\Ot(n^{4/(p+2)}+n^{6/5})$. Taking the faster branch gives~\eqref{eq:polytail}; no tail parameter is an input.
\end{proof}

\section{Further Tail Consequences and Limits}
We next derive the exponent regimes in the preceding portfolio and extend the calculations to tails that hold only above a threshold. We then examine residual-edge sensitivity and the limits of the certificates. This order separates consequences of the proved bounds from the additional mechanism still needed for an unrestricted density extension.

\subsection{Deriving the polynomial-tail portfolio}
Under the global tail assumption $h_k\le Cn/k^p$ with fixed $p>1$, summing degree tails gives
\[
 m=\sum_{k\ge1}h_k\le Cn\sum_{k\ge1}k^{-p}=\Oh(n).
\]
Strong connectivity gives the matching lower bound $m\ge n$. For every threshold,
\[
 \Phi_D\le\frac{CnD^{-p}}{1-Q^{-p}}.
\]
Substitution into the profile theorem yields, apart from the additive control term,
\[
 nDL^{3/2}\ell^{1/2}+\Oh(n^2D^{-p}L^2\ell).
\]
Equating these terms gives $D^{p+1}=\Theta(n\sqrt{L\ell})$. Rounding up produces the explicit profile bound
\begin{equation}\label{eq:explicitprofiletail}
 \Oh\!\left(n^{1+1/(p+1)}
 L^{3/2+1/(2(p+1))}\ell^{1/2+1/(2(p+1))}\right).
\end{equation}
For fixed $p$, the displayed bound's positive power above $n$ absorbs the additive $nL^2\ell$ term. The logarithms at $p=5$ are $L^{19/12}\ell^{7/12}$, as reported in Table~\ref{tab:bounds}.

For the pruned-density branch, $D\le n$ and $m=\Theta(n)$ imply $H_D=\Theta(n)$. Our pruned-density branch's cost is
\[
 \Ot(n+n^2D^{-p}+D^2+n^{6/5}).
\]
Balance $n^2D^{-p}$ and $D^2$ at $D=\lceil n^{2/(p+2)}\rceil$. The result is $\Ot(n^{4/(p+2)}+n^{6/5})$. For $1<p<4/3$, the exponent $4/(p+2)$ lies between $6/5$ and $4/3$ and is smaller than the profile exponent. For $4/3\le p\le4$, the density exponent $6/5$ is no larger than either alternative. For $p>4$, the profile exponent $1+1/(p+1)$ becomes smaller than $6/5$. This proves the three regimes in Corollary~\ref{cor:polytail}, including the endpoints without claiming a strict logarithmic improvement at an exponent tie.

The same calculation explains the improvement in the identifier range. A maximum-degree-only argument would infer $\Delta=\Oh(n^{1/p})$ and would need $p\ge5/3$ to reach the sparse $n^{6/5}$ guarantee through $\Delta\le\Ot(n^{3/5})$. The pruned argument reaches that guarantee already at $p=4/3$, because our analysis pays for the number of high receivers rather than their largest individual degree.

\emph{Reading the three regimes.} The density construction has an $n^{6/5}$ contribution after logarithms are suppressed. Better tails first reduce the pruning and exact-coverage terms to that level; still faster decay then makes the separate profile construction preferable. For $p=2$, the profile exponent is $4/3$, while density transfer gives $6/5$. For $p=5$, density transfer still gives $6/5$, whereas the profile exponent is $7/6$. The crossover values follow by equating $4/(p+2)$ and $1+1/(p+1)$, respectively, with $6/5$. Our portfolio compares these complete upper bounds automatically. The flat middle regime describes the cost of our density construction, rather than a lower bound on the underlying gossiping problem.

\subsection{Localized tails}
A global power law can be unnecessarily restrictive at small degrees. The next results permit arbitrary behavior below a threshold. The constants and tail parameters remain analytical comparators rather than protocol inputs.

\begin{corollary}[Localized profile bound]\label{cor:localizedprofile}
Suppose $h_k\le H/k^p$ for integers $k\ge k_0$, where $1\le k_0\le n$, $H\ge0$, and $p>0$ is fixed. Put
\[
 z=(H\sqrt{L\ell})^{1/(p+1)}.
\]
The profile portfolio completes gossiping in
\begin{equation}\label{eq:localizedprofile}
 \Oh\!\left(n\max\{k_0,z\}L^{3/2}\ell^{1/2}+nL^2\ell\right).
\end{equation}
\end{corollary}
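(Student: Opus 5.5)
We will apply Theorem~\ref{thm:adaptive} (equivalently Theorem~\ref{thm:profile}) at a single analytical comparator $D$. This suffices because the adaptive protocol attains $\Oh(G_D)$ simultaneously for every $1\le D\le n$ without knowing $D$. We take
\[
 D=\min\{n,\max\{k_0,\lceil z\rceil\}\}.
\]
The first term of $G_D$ is then $\Oh(n\max\{k_0,z\}L^{3/2}\ell^{1/2})$, where the ceiling is absorbed because $\max\{k_0,z\}\ge1$. The remaining work is to bound $n\rho_DL^2\ell$ by the same quantity plus $nL^2\ell$.

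For the pruning term we use $\rho_D\le\Phi_D=\sum_{j=0}^{H_\star-1}h_{Q^jD}$ from~\eqref{eq:phi}. Every index $Q^jD\ge D\ge k_0$ lies in the localized range, so $h_{Q^jD}\le HD^{-p}Q^{-jp}$. Since $Q\ge4$ and $p>0$ is fixed, the geometric sum gives
\[
 \Phi_D\le\frac{HD^{-p}}{1-Q^{-p}}=\Oh(HD^{-p}).
\]
Hence $n\rho_DL^2\ell=\Oh(nHD^{-p}L^2\ell)$. Because $D\ge z$, we have
\[
 HD^{-p}=\frac{H}{D^{p+1}}\,D\le\frac{H}{z^{p+1}}\,D=\frac{D}{\sqrt{L\ell}}.
\]
Therefore
\[
 nHD^{-p}L^2\ell\le nD\,L^{3/2}\ell^{1/2},
\]
which is the first term again. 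The additive $n L^2\ell$ comes from the factor $1+\rho_D$. If $H=0$, then $z=0$, $\rho_D=0$, and only $k_0$ matters.

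The cap $D=n$ requires a separate case. If $\max\{k_0,z\}>n$, we take $D=n$, so that $\rho_n\le s_n=1$. Since $nDL^{3/2}\ell^{1/2}\le n\max\{k_0,z\}L^{3/2}\ell^{1/2}$, the bound still holds. The only step needing care is the tail sum, where we must check that every dyadic level $Q^jD$ obeys $Q^jD\ge k_0$; this holds because $D\ge k_0$. We must also check that the constant $1/(1-Q^{-p})\le1/(1-4^{-p})$ depends only on the fixed $p$. Integer rounding of $z$ changes the bound only by constants. No parameter is a protocol input, since the adaptive theorem compares against every $D$ on the same execution.
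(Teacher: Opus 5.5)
Your proposal is correct and follows the paper's proof essentially step for step: the same comparator $D=\min\{n,\max\{k_0,\lceil z\rceil\}\}$, the geometric bound $\Phi_D\le HD^{-p}/(1-Q^{-p})$, the comparison through $D\ge z$, and a separate capped case. The only differences are cosmetic: in the capped case you use $\rho_n\le s_n=1$ where the paper uses $h_n=0$ (so $\rho_n=0$), and you handle $H=0$ explicitly, which the paper leaves implicit.
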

\begin{proof}
Choose the threshold
\[
D=\min\{n,\max\{k_0,\lceil z\rceil\}\}.
\]
If $z\le n$, then $D\ge k_0$ and
\[
 \Phi_D\le\frac{HD^{-p}}{1-Q^{-p}}.
\]
Since $D\ge z$, the pruning contribution $nHD^{-p}L^2\ell$ is at most a constant times $nDL^{3/2}\ell^{1/2}$. The ceiling is absorbed by $k_0\ge1$. If $z>n$, use $D=n$ and $h_n=0$; the resulting $n^2L^{3/2}\ell^{1/2}$ cost is bounded by the right-hand side. The adaptive theorem removes $H,p,k_0$ as inputs.
\end{proof}

\begin{corollary}[Localized sparse transfer]\label{cor:localizeddensity}
Under the same localized tail condition, if $m=\Theta(n)$, the pruned-density pool gives
\begin{equation}\label{eq:localizeddensity}
 \Ot\!\left(n^{6/5}+\max\{k_0^2,(nH)^{2/(p+2)}\}\right).
\end{equation}
\end{corollary}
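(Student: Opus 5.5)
The plan is to specialize Theorem~\ref{thm:density} at a single threshold $D$ chosen to balance the pruning charge against exact frontier coverage. Since $m=\Theta(n)$ and $D\le n$, we have $mD=\Oh(n^2)$. Hence $H_D=n+\sqrt{mD}=\Oh(n)$, and the condensing term $(mH_D)^{3/5}$ is $\Ot(n^{6/5})$. It then remains to bound
\[
 P_D=n(1+\rho_D)+E_D\le n+n\rho_D+D^2 .
\]
By~\eqref{eq:coarsetail}, $\rho_D\le h_D\,\Oh(\ell)$, so $n\rho_D=\Ot(nh_D)$. For $D\ge k_0$, the localized tail gives $h_D\le H/D^p$. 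The residual expression to minimize is therefore
\[
 \Ot\bigl(n+nHD^{-p}+D^2\bigr).
\]

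\textbf{Choice of $D$.} Take $D=\min\{n,\max\{k_0,\lceil(nH)^{1/(p+2)}\rceil\}\}$. Consider first the case $D<n$. Then $D\ge k_0$, so the tail bound applies. Since $D\ge(nH)^{1/(p+2)}$, we get
\[
 nHD^{-p}\le(nH)^{2/(p+2)}.
\]
Also $D^2\le\max\{k_0^2,\,4(nH)^{2/(p+2)}\}$, where the factor $4$ accounts for the ceiling when the second argument is at least one. If $(nH)^{1/(p+2)}<1$, then $D=\max\{k_0,1\}=k_0$ and $D^2=k_0^2$. The degenerate case $H=0$ is covered the same way: $h_D=0$ and $D=k_0$.

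\textbf{Capped case $D=n$.} Now $h_n=0$, so $\rho_n=0$, and $E_D\le n^2$. Because the cap is active, either $k_0\ge n$ or $(nH)^{1/(p+2)}\ge n-1$, roughly. In either case $n^2=\Oh(\max\{k_0^2,(nH)^{2/(p+2)}\})$, so the bound~\eqref{eq:localizeddensity} still holds. The additive $n$ is absorbed by $n^{6/5}$.

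\textbf{Main obstacle.} The delicate step is checking that the cited branch really is the dyadic pruned-density pool, which needs no knowledge of $D$, $H$, $k_0$ or $p$. Theorem~\ref{thm:density} already minimizes over all $D$ through dyadic rounding with constant loss, and its proof shows monotonicity of $\rho_D$. So one evaluation at the chosen comparator suffices. I would also double-check that $\rho_D$ indeed uses $h_D$ at the chosen $D\ge k_0$, and not tail levels below $k_0$; this holds because every $\Phi_D$ level $Q^jD$ is at least $D\ge k_0$. Finally, the identifier range enters only through $E_D=\min\{N,D^2\}\le D^2$, so no restriction on $N$ is needed.
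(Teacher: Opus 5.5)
Your proposal is correct and follows essentially the same route as the paper. You use the same comparator $D=\min\{n,\max\{k_0,\lceil (nH)^{1/(p+2)}\rceil\}\}$ and the bound $n\rho_D=\Ot(nHD^{-p})\le\Ot(D^2)$, which follows from $D\ge (nH)^{1/(p+2)}$; you then combine $E_D\le D^2$, $H_D=\Theta(n)$, and $\rho_n=0$ in the capped case, and substitute into Theorem~\ref{thm:density}.
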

\begin{proof}
Set $z=(nH)^{1/(p+2)}$. Choose $D$ by rounding $z$ up, raising the threshold to at least $k_0$, and capping the threshold at $n$. When $z\le n$, the tail bound gives $n\rho_D=\Ot(nHD^{-p})$. Since $D\ge z$, this is at most $\Ot(D^2)$; $D^2=\Oh(\max\{k_0^2,z^2\})$. Also $E_D\le D^2$ and $H_D=\Theta(n)$, so Theorem~\ref{thm:density} proves~\eqref{eq:localizeddensity}. If $z>n$, $D=n$ has $\rho_n=0$ and $D^2\le z^2$, yielding the same conclusion. The general branch may further improve this bound. No assumption on degrees below $k_0$ was used.
\end{proof}

The localized profile bound pays linearly in $k_0$, whereas exact residual coverage pays quadratically. Their full expressions must therefore be compared. The shared portfolio makes this comparison algorithmic.

\subsection{Residual-edge refinement}
The maximum residual in-degree is not the only quantity improved by pruning. The same implementation can also exploit the number of arcs whose senders remain active, without discovering those arcs individually. For an arc $(x,y)$, inclusion in the residual count depends on whether $x$ remains active. The receiver $y$ may be dormant: delivery there can already establish quasi-dissemination. Thus the sum below counts active senders' arcs into all physical receivers. The padding by $n$ preserves the physical control budget even when few data arcs remain.

\begin{corollary}[Residual-edge-sensitive transfer]\label{cor:residualedges}
Fix a common threshold $D$. Let $F_0(D)$ be the output of the deterministic $\Prune(D)$ execution from the original state, let $A_D=V\setminus F_0(D)$, and define
\[
 \mu_D=\max\!\left\{n,\sum_v d_{A_D}(v)\right\}.
\]
Using only $n,N,D$, the fixed-threshold pruned-density protocol takes
\begin{equation}\label{eq:residualedges}
 \Ot\!\left(n(1+\rho_D)+E_D+
 [\mu_D(n+\sqrt{\mu_DD})]^{3/5}\right)
\end{equation}
rounds. The quantity $\mu_D$ is not an input.
\end{corollary}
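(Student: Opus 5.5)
The plan is to rerun the fixed-threshold part of the proof of Theorem~\ref{thm:density}, but to feed Lemma~\ref{lem:active} the residual active-arc count instead of $m$. First, record from the original rumors and run $\Prune(D)$. This costs $\Oh((1+\rho_D)B)=\Ot(n(1+\rho_D))$ by Lemma~\ref{lem:charge}, and it returns the common ledger $F_0(D)$ with $d_{A_D}(v)<2D$ at every physical receiver. Then freeze $A=A_D$ as the data sender set and take $\kappa=\min(n-1,2D)$. Lemma~\ref{lem:active} only needs $M\ge\max(n,\sum_v d_A(v))$, so any $M\ge\mu_D$ is admissible. Given a valid $M\le2\mu_D$, the lemma delivers quasi-gossiping in $\Ot(\min\{N,\kappa^2\}+(MH_\kappa)^{3/5})$ rounds. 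Because $\kappa=\Oh(D)$, the first term is $\Ot(E_D)$ and $H_\kappa=\Oh(n+\sqrt{\mu_D D})$. Lemma~\ref{lem:replay} then doubles the first pass, and that pass includes pruning.

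The protocol does not know $\mu_D$, so edge budgets are handled by verified geometric trials, as in Theorem~\ref{thm:density}. Try $M=n,2n,4n,\ldots$, capped at $n(n-1)$. Each trial starts from the genuine initial rumors, reruns the deterministic $\Prune(D)$, reruns condensing with horizons computed from its own guess, replays the whole history, and finishes with a frozen OR of the flags $|K_v|<n$. Since $\Prune(D)$ is deterministic from the original state, every trial reaches the same $F_0(D)$ and hence the same $A_D$ and $\mu_D$. A guess at least $\mu_D$ is therefore valid, and the first such guess is at most $2\mu_D$. A smaller guess still terminates at its prescribed horizon, and it cannot falsely certify completion, because inventories contain only genuine tagged sources.

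The main cost point is making sure the trials before the valid one do not exceed its budget. The prescribed condensing horizons grow monotonically in $M$ up to constant and logarithmic factors, through $T_0$, $F_s$, $A_s$ and $T_s$. So each earlier trial is bounded by the valid trial's budget, and the $\Oh(\log n)$ trials cost a logarithmic factor absorbed by $\Ot$. Repeated pruning adds $\Oh(\log n)$ copies of $\Ot(n(1+\rho_D))$. Summing the pruning, exact-coverage, condensing and replay terms gives~\eqref{eq:residualedges}.

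One more check is needed: arcs from active senders into dormant receivers must be counted in $M$, and that is exactly how $\mu_D$ is defined. Lemma~\ref{lem:active} already permits a disconnected data subgraph and zero in-degrees, and control runs on the full physical graph.
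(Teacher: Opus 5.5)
Your proposal is correct and follows the paper's own argument essentially step for step. $\Prune(D)$ does not depend on the edge guess, so every verified trial reproduces the same $F_0(D)$; Lemma~\ref{lem:active} needs only a bound on active arcs padded to $n$, so the first geometric guess $M\ge\mu_D$ is valid and at most $2\mu_D$, and the rest of your accounting (monotone horizons for earlier trials, genuine-rumor verification, and the logarithmic number of trials absorbed into $\Ot$) also matches the paper.
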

\begin{proof}
Pruning depends on $n,N,D$ and the pruning procedure's private state, not on the edge guess used by condensing. Reset every trial to its own original rumors and invocation clocks. The deterministic pruning prefix is identical in all those trials and returns the same $F_0(D)$. The pruning prefix's cost is still bounded by Lemma~\ref{lem:charge}.

Lemma~\ref{lem:active} requires an upper bound on active arcs, padded to at least $n$; our condensing procedure does not require an upper bound on all physical edges. Therefore the first guess $M\ge\mu_D$ is valid. That guess is at most $2\mu_D$. The valid trial's condensing and complete replay have the bound in~\eqref{eq:residualedges}. Earlier trials cannot falsely pass the common all-rumors verification. Repeated pruning and the logarithmic number of guesses are absorbed by $\Ot$. No node needs to compute the sum defining $\mu_D$.
\end{proof}

Since $\mu_D\le m$, this refines the same fixed-$D$ bound. Control still pays for the full $n$-vertex physical graph. The dyadic pool inherits the refined guarantee for each threshold the pool actually contains. We do not assert that rounding an arbitrary threshold preserves $\mu_D$: a higher threshold can enlarge the residual edge count by more than a constant factor, even though the active sets are nested along the shared greedy trace. The all-threshold adaptation in Theorem~\ref{thm:density} uses the monotone original-profile bound, for which that rounding argument is valid.

The obstruction to rounding is concrete even on a complete directed graph. Let $b$ be the first amplification-$Q$ maximum, so $b\ge(1-1/Q)(n-1)$. Threshold $b+1$ stops before broadcasting and leaves all $n(n-1)$ arcs active. Threshold $b$ secures at least $b$ sources in its first broadcast, leaving at most $n-b\le n/Q+1$ active sources. Subsequent broadcasts can only reduce that set. Thus its padded residual edge count is at most $\Oh(n^2/Q+n)$, whereas the adjacent threshold leaves $\Theta(n^2)$. Since $Q$ grows with $n$, adjacent thresholds need not preserve the residual-edge budget within a constant factor.

\subsection{What the certificates do not imply}
\begin{proposition}[A count certificate is not designated delivery]\label{prop:scope}
The numerical inequalities in Lemma~\ref{lem:probe} alone do not imply that every neighboring packet, or a prescribed neighboring packet, is received. The all-or-threshold conclusion of Lemma~\ref{lem:union} alone does not imply delivery of a prescribed packet once the receiver has reached the threshold.
\end{proposition}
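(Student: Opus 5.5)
Both claims are separation statements, so I will prove each by a small explicit counterexample. In each case the numerical or all-or-threshold conclusion holds, yet a designated packet is not delivered. The constructions only have to respect the guarantees as stated. They need not come from an actual selector family, because the proposition concerns what the inequalities \emph{alone} imply. I will describe, for each claim, a receiver, its active in-neighbors, and a reception outcome that is consistent with the certificate's guarantee. Throughout, I will check that the outcome is physically realizable in the half-duplex collision model, meaning that some transmit pattern produces exactly those receptions.

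For the first claim, fix $Q\ge2$ and a receiver $v$ with $d=d_A(v)$ active in-neighbors $u_1,\ldots,u_d$, where $d\ge Q$. Equation~\eqref{eq:probe} only requires $b_v\ge(1-1/Q)d$. An outcome in which $u_1,\ldots,u_{d-1}$ are each isolated once and $u_d$ always collides, with $u_1$ for example, therefore gives $b_v=d-1\ge(1-1/Q)d$. The certificate holds, but $u_d$'s entry packet never reaches $v$. I will make $u_d$'s packet the prescribed one, with distinct entry packets $P_{u_i}=\{\pi_i\}$, so that no aggregation delivers it indirectly. This refutes both ``every neighboring packet'' and ``a prescribed packet''. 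The only care needed is realizability: the schedule consists of singleton slots for $u_1,\ldots,u_{d-1}$ plus slots where $\{u_1,u_d\}$ transmit together, and this is a valid oblivious schedule satisfying the count bounds.

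For the second claim, take threshold $r\ge1$ and let $v$ have $r+1$ active in-neighbors with distinct singleton entry packets $\pi_1,\ldots,\pi_{r+1}$. Suppose $v$ hears $u_1,\ldots,u_r$ and $u_{r+1}$ always collides. Then $|W_v|\ge r$, so~\eqref{eq:union} is satisfied through its threshold alternative, yet the prescribed $\pi_{r+1}$ is missing. I also need the reception-count part of Lemma~\ref{lem:union} to hold simultaneously. Choosing $r+1\ge Q$ and applying the same count argument as in the first claim ensures this.

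The main subtlety is stating precisely what ``alone'' means. The paper's actual schedules might deliver more than their certificates promise, so the counterexamples must be read as abstract outcome profiles satisfying the stated inequalities, realized by some schedule. They are not claims about the specific selector families used in the paper. I will remark that this is exactly why Lemma~\ref{lem:active} appends an exact one-hop schedule at an unprocessed frontier, and why it uses the paid-frame argument to rule out the threshold branch elsewhere.
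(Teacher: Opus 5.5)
Your proposal is correct and follows the paper's own argument. In the first part the paper likewise takes $d\ge Q$ with all but one neighbor heard, so $b=d-1\ge(1-1/Q)d$, and places the designated packet only at the unheard neighbor. In the second part it likewise lets the receiver end with $r$ other designated packets while lacking $\pi$, so $|W_v|\ge r$. The paper also frames both as logical counterexamples to the inference rather than as claims about the actual selector schedules. Your realizability check and your choice $r+1\ge Q$, which makes the count part of Lemma~\ref{lem:union} hold at the same time, are harmless extras the paper does not need.
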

\begin{proof}
Take $d\ge Q$ active in-neighbors and suppose all but one are heard. Then $b=d-1\ge(1-1/Q)d$, so the count certificate permits the outcome. Put a designated packet only at the unheard neighbor. The designated packet's delivery does not follow from the numerical inequalities. This is a logical counterexample to the inference, not a lower bound on the actual schedule.

For the stronger interface, suppose the neighboring union contains $\pi$, while the receiver finishes with $r$ other designated packets and still lacks $\pi$. The alternative $|W_v|\ge r$ holds, so the disjunctive certificate also permits this outcome. A processed-layer control rule rules out that alternative in an unpaid frame. An unprocessed frontier has no such implication for the desired crossing, so exact coverage is retained there. The proposition does not exclude other protocols that force the missing packet by additional mechanisms.
\end{proof}

\subsection{A linear-time exponent is unavoidable in the sparse class}
\begin{proposition}[Distance lower bound within concentration]\label{prop:distance}
There are strongly connected graphs with $m=\Theta(n)$ satisfying~\eqref{eq:concentration}, with maximum in-degree $n-1$, on which every gossiping protocol takes $\Omega(n)$ rounds.
\end{proposition}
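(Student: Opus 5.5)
The plan is to exhibit an explicit family and argue the lower bound from graph distance alone. I would reuse the sparse-core construction of Proposition~\ref{prop:core}, simplified to remove the dense core so that concentration holds. Take the directed Hamilton cycle $z,p_1,\ldots,p_{n-1},z$ and add the arc $(p_i,z)$ for every $i$, omitting duplicates.

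\emph{Edge count, degrees, and connectivity.} The graph is simple and strongly connected, since it contains the cycle. It has at most $2n$ arcs, so $m=\Theta(n)$ and $\dbar\in[1,2]$. Vertex $z$ has in-degree $n-1$. Every $p_i$ has in-degree exactly one, namely from its cycle predecessor.

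\emph{Verifying~\eqref{eq:concentration}.} Take $C_0=2$ and $C_1=1$. Then $\lceil C_0\dbar\rceil\ge2$. Only $z$ has in-degree at least two, so $h_{\lceil C_0\dbar\rceil}\le1\le C_1\dbar$. The only point to check is that the threshold does not exceed $n-1$ and thereby drop $z$. Even if it does, the inequality $h\le1$ still holds. The condition is therefore satisfied for all $n\ge2$.

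\emph{Lower bound.} In the radio model, a rumor advances at most one arc per round, because a reception requires a transmitting in-neighbor. I would argue by induction on rounds that after $t$ rounds the rumor of $z$ is known only within the out-ball of radius $t$ around $z$. It then suffices to show that $\operatorname{dist}(z,p_{n-1})=n-1$. The only arc leaving $z$ goes to $p_1$. Each added arc $(p_i,z)$ returns to $z$, so no arc skips forward along the path $p_1,\ldots,p_{n-1}$. Hence every walk from $z$ to $p_{n-1}$ traverses the whole path, and any protocol needs at least $n-1=\Omega(n)$ rounds. This holds regardless of identifiers, message size, or collision behavior.

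I expect no real obstacle. The only care needed is the edge-case bookkeeping in the concentration threshold, namely that $\lceil C_0\dbar\rceil$ is at most $n$, and the small cases $n\le3$, which are absorbed into the $\Omega$ constant.
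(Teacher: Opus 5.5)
Your proposal is correct and follows the paper's proof: your graph is exactly the paper's directed $n$-cycle with every missing arc added into one vertex, giving $2n-2$ arcs, one receiver of in-degree $n-1$, and in-degree one elsewhere. The $\Omega(n)$ bound also comes from the same distance-$(n-1)$ argument, from the hub to its cycle predecessor. The only difference is cosmetic: you verify~\eqref{eq:concentration} with $C_0=2$, $C_1=1$ for all $n\ge2$, whereas the paper takes $C_0=3$, $C_1=1$ for sufficiently large $n$.
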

\begin{proof}
Start with the directed cycle $1\to2\to\cdots\to n\to1$ and add every missing arc into vertex $1$. The edge count is $2n-2$. All vertices other than $1$ have in-degree one, and vertex $1$ has in-degree $n-1$. With $\dbar=2-2/n$, fixed constants $C_0=3,C_1=1$ satisfy the concentration condition for all sufficiently large $n$.

The distance from $1$ to $n$ remains $n-1$: every added arc points back into $1$ and cannot shorten that route. Information traverses at most one directed edge per round, so the original rumor of $1$ cannot reach $n$ sooner. This proves the lower bound, regardless of collisions, identifiers, or message size. Our distance argument establishes the optimal linear polynomial exponent on this sparse family, without claiming matching logarithms or an $\Omega(m)$ lower bound for every density.
\end{proof}

\subsection{The remaining unrestricted extension problem}
The frontier cost and pruning charge identify a concrete target for further improvement. One would need either a cheaper designated frontier mechanism, a stronger way to collect many high-receiver inventories, or an amortization that charges frontier work across phases without requiring every current target to cross separately. Approximate neighbor counts alone cannot justify any of these steps.

The present proof also makes the frame cost explicit. Data invocations must fit inside the unpaid intervals used by the amortization. Increasing the identifier range does not merely add a logarithm to a round-robin argument: the larger range changes which invocations can fit and which packet guarantee can be inferred before the next frozen eligibility test. The packet certificate repairs the union guarantee, and pruning limits the residual exact work, but neither fact establishes the missing worst-case profile inequality. Thus the unrestricted extension is stated as open throughout the manuscript.

\section{Conclusion}
Neighborhood clearing secures an elected receiver's entire closed in-neighborhood within one election-scale budget. Charging each relevant neighborhood class once gives $\Oh(mL^2\ell)$ rounds under degree concentration for every polynomial identifier range, and $\Oh(nL^2\ell)$ on the distributed-cover and complete-core families. The complementary adaptive-accuracy procedure supports survivor-sensitive completion and reception covers. Path covers, frozen-active-set condensing, regularity, and degree tails provide further guarantees, with each bound evaluated on its own execution and complete-history replay used throughout.

The results extend the regular-graph identifier range, preserve the density benchmark throughout the stated range of $N$, and give nearly edge-linear gossiping under degree concentration. Preserving the density bound for every polynomial identifier range and attaining nearly edge-linear time on all sparse graphs remain open. The explicit frontier cost and pruning charges specify the steps that still require a stronger argument.

\appendices
\section{Protocol Realization and Detailed Cost Accounting}\label{app:protocol}
This appendix details the reception-score controller and logarithmic costs. Section~\ref{sec:clearing} gives the separate clearing controller. All branches use the physical control network and genuine rumor values; each data procedure specifies its senders.

\subsection{A common state at every decision boundary}
Each vertex stores its private inventory, common ledger $F$, and radio-action record. Only completed designated broadcasts update $F$. Thus $a=n-|F|$ is common without knowing all active identifiers; membership of its own identifier in $F$ determines a vertex's activity.

At a pruning boundary, all vertices know the padded elapsed budget $P$ and cap $e$. An iteration proceeds as follows.
\begin{enumerate}
\item Freeze the current active sender set, set $\kappa=\min(a,2e)$ and $q=\lceil nQ/\kappa\rceil$, and execute the corresponding reception schedule. Count distinct immediate sender identifiers. Save the resulting score and the complete actual rumor inventory.
\item Search all frozen keys $(n-b_v)N+\operatorname{id}(v)$. The search does not filter initiators by an analytical threshold. The key search's Boolean tests announce the same maximum score and winner to every vertex.
\item Broadcast only the winner's saved inventory for ledger credit. Actual packets may carry additional current knowledge, but crediting the saved set makes the update identical at every vertex.
\item At the completed broadcast boundary, update $F$, increase $P$ by the common padded iteration budget, and replace $e$ by $\min(e,b+1)$. If $a=0$, the first pass is complete. Otherwise compare $P$ with $C(a,e)$.
\end{enumerate}
Freezing snapshots before the search prevents its traffic from changing ledger credit and preserves common threshold prefixes. Transmissions may append current knowledge. Every cap update takes a minimum, so a bound installed by one winner persists through later winners.

If the stopping inequality holds, the outer loop ends permanently. Our protocol evaluates the three explicit completion upper bounds and runs a minimizing branch. The balanced branch computes $h$ from this handoff population and executes ordinary fixed-threshold $\Prune(h)$; it does not invoke the outer stopping rule recursively. This keeps the proof dependencies acyclic: completion uses only the previously proved reception, collection, charging, and path-selection lemmas.

The same record continues during completion. At the common ending boundary, each vertex repeats its recorded transmit/listen actions with all retained knowledge. Replay does not recompute scores, gates, eligibility, or stopping decisions. The second pass therefore has exactly the successful-reception opportunities used in Lemma~\ref{lem:replay}, although its packets contain more information.

\subsection{Clearing boundaries and implementation}
A clearing phase freezes scores and snapshots, searches all keys, and broadcasts the winner's snapshot, supplying the disjoint source charge. Later repetitions refresh the sender set and the same receiver's saved inventory. Only that snapshot receives credit. A secured receiver remains eligible; genuine rumors justify every source removal.

The horizon $t=\lceil\log_4 n\rceil$ requires no remaining-degree test. Repetitions with zero degree remain sound, and a fully secured state permits idling. Padding completes the common phase length $U$ before $P,e$ or the stopping rule changes. At a below-$D$ maximum, integrality gives $b+1\le D$. For $D=1$, a zero maximum forces $F=V$; for $D=n$, fixed-threshold pruning proceeds directly to completion.

More generally, any fixed integer amplification $q_0\ge2$ gives $d_{i+1}\le\lfloor d_i/q_0\rfloor$. With $t(q_0)=\lceil\log_{q_0}n\rceil$, clearing costs
\[
 \Oh\bigl(t(q_0)[nq_0\log N+R]\bigr)=\Oh(B).
\]
Higher amplification reduces repetitions but increases probe cost; the election already costs $\Theta(B)$. Amplification four suffices uniformly. Replay includes every probe, control action, and padding slot, without recomputing decisions.

\subsection{Empty tests, shared charges, and threshold rounding}
An empty frozen selection ends after the selection's first $R$-round OR: there is no winner, key search, or inventory broadcast. A successful selection costs at most $B$. Fresh flags prevent knowledge learned during a test from creating new initiators. This shorter empty boundary is needed for balanced completion; condensing may retain its own fixed padding.

Let $S_p,S_c$ count successful second-stage pruning and collection broadcasts. Their newly secured origins are disjoint subsets of the same $a$ entry-active sources. Thus
\[
 hS_p+(k/2)S_c\le a,\qquad S_p+S_c\le\lfloor a/h\rfloor.
\]
There are $q=\lceil\log k\rceil$ final empty tests and $q+1$ path passes, together with one final pruning probe and key search. The total is
\[
 \Oh\!\left(B(1+a/h)+(q+1)S_k+qR\right).
\]
Since $k<a\le n$, $qR\le B$ and $(q+1)S_k=\Oh(k^2\Lambda\log(k+1))$. No charge resets after a doubling iteration.

Put $g=\Lambda\log(a+1)$ and $x=(Ba/g)^{1/3}$. Since $g=\Oh(Ba)$, $h=\lceil x\rceil=\Theta(x)$ and the first power of two at least $4h$ satisfies $4h\le k<8h$. These choices use only $a,n,N$.

If $k\ge a$, direct path selection costs $\Oh(a^2g)=\Oh(x^2g)=\Oh(\mathcal J(a))$. Otherwise the stopping probe certifies $d_A<2h$, so closed contenders have size at most $2h<k$. If at most $k$ sources remain, direct selection suffices; if none remain, no schedule is needed. Otherwise the doubling loop applies. Every path pass freezes the current active set, every collection freezes the current witnesses, and later active neighborhoods are subsets of their entry neighborhoods. The data graph need not be connected.

Finally $Ba/x=x^2g$, and rounding preserves
\[
 Ba/h+k^2g=\Oh((Ba)^{2/3}g^{1/3}).
\]
For polynomial identifiers, substituting $B=\Theta(nL^2\ell)$ and $\Lambda=\Theta(L)$ gives
\[
 \mathcal J(a)=\Theta\!\left((na)^{2/3}L^{5/3}\ell^{2/3}
                    [\log(a+1)]^{1/3}\right).
\]
Since $1\le a\le n$, $\log(a+1)=\Oh(L)$, so this is at most $\Oh((na)^{2/3}L^2\ell^{2/3})$. The latter is a uniform upper bound; the preceding expression retains the sharper dependence on $a$. Both concern this completion construction, not a comparison with every general-purpose gossiping protocol.

\subsection{Which round-robin obligations have been removed}
Path completion avoids the exact frontier step of~\cite{wu}: path delivery and secured holders establish quasi-dissemination. The separate condensing branch retains exact coverage at an unprocessed frontier. Packet-union certification addresses congestion and collection, whose delivery requirements differ from prescribed frontier delivery.

Table~\ref{tab:interfaces} distinguishes the guarantees: sender counts certify interference, packet thresholds support collection, and ordered delivery assembles path inventories. Each global argument must convert its particular certificate into a secured source or useful holder; these delivery guarantees are not interchangeable.

\begin{table}[t]
\centering\small
\caption{Delivery guarantees and their roles in our protocol. Each invocation uses its stated frozen state and contender bound.}\label{tab:interfaces}
\renewcommand{\arraystretch}{1.10}
\begin{tabular}{@{}>{\raggedright\arraybackslash}p{0.27\columnwidth}>{\raggedright\arraybackslash}p{0.34\columnwidth}>{\raggedright\arraybackslash}p{0.27\columnwidth}@{}}
\toprule
Schedule & Delivery guarantee & Role\\
\midrule
Reception probe (Lemma~\ref{lem:probe}) & Certified count of distinct active senders heard. & Pruning; residual interference bound.\\
Packet-union probe (Lemma~\ref{lem:union}) & All neighboring entry packets, or a packet threshold. & Collection in processed classes.\\
Path selector (Lemma~\ref{lem:path}) & Entry inventories along covered carrier paths. & Completion; path-cover gains.\\
Exact one-hop schedule & Each active in-neighbor's entry packets. & Delivery across an unprocessed frontier.\\
\bottomrule
\end{tabular}
\end{table}

\subsection{Stopping rules, residual witnesses, and complete replay}
The clearing, reception-score, and inventory-score branches have different reference executions. The clearing branch compares only complete clearing phases and uses $\chi_D$; the survivor and cover quantities below keep their original traces. Their completion budgets are nonincreasing on their respective traces. Reception-score iterations and complete clearing phases each have constant cost $U$ on their own trace, yielding $T\le4\min_j\max\{jU,C_j\}+2U$. For $T_*=2\min_j(jU+C_j)$, this gives $T\le2T_*+2U$. The comparison uses prescribed upper budgets on one trace, not the fastest possible radio protocol.

The inventory branch also pays for path passes: $P_j\le3P_{j-1}+2U$ gives $2\mathcal T\le8\min_j\max\{P_j,C_j\}+4U$. This constant covers phase jumps. Both comparisons include $j=0$, empty terminal states, and integer rounding.

A direct residual cover at prefix $j$ constrains only $A_j$ minus the cover's exceptions. Later $A_i\subseteq A_j$, so the same receiver weights continue to cover every remaining nonexceptional source. A dormant receiver remains eligible. Successive stages may use unrelated weights and exceptions, provided each stage counts every omitted unsecured source. The incurred prefix index is always paid; an early handoff is justified by the prefix comparison rather than by executing later reference states.

For residual carrier-path covers, the entry carrier set stays fixed throughout geometric phases. Population completion instead permits the active set to shrink between frozen passes. On handoff the carrier rule can change because the cover argument has ended. Record all actions from the original rumors, including the pruning prefix. Replaying only a completion suffix is invalid when a holder learned a target after its own original rumor was disseminated.

\subsection{Scope of the accounting}
The cover criterion follows directly by requiring $n(j+\tau)$ and $(nu)^{2/3}$ to be at most $\Ot((mn)^{3/5})$, which gives $u=\Ot(m^{9/10}n^{-1/10})$. At linear density this allows $\Ot(n^{4/5})$ exceptions. In the complete-core family, $t_2\le3$ instead gives $\Oh(B)$ through the clearing branch. The general cover and quantile guarantees retain their own reception-score trace; this comparison does not transfer residual witnesses between branches.

Each parameter trial is accepted only after verification in its own simulation. The frozen OR of $|K_v|<n$ uses genuine source values, excluding packet headers and other simulations' inventories. A false answer therefore certifies full gossiping, whether or not the edge guess was valid or the graph met a favorable structural condition. The trial's charged cost includes pruning, collection, serialized channels, and replay. Thus the portfolio compares complete running-time bounds while retaining each proof's ledger, carrier, and witness assumptions.

\section*{Acknowledgment}
ChatGPT was used to improve the clarity, organization, and presentation of the manuscript; assist in simplifying the exposition of selected mathematical arguments; examine the logical consistency of proofs; and help refine the formulation of parameter bounds and clarify relevant boundary cases. The research problem, core ideas, algorithmic designs, and principal scientific contributions were independently developed by the author.

\end{document}